\newtheorem{theorem}{Theorem}[section]
\newtheorem{proposition}[theorem]{Proposition}
\newtheorem{lemma}[theorem]{Lemma}
\newtheorem{corollary}[theorem]{Corollary}
\newtheorem{remark}[theorem]{Remark}
\numberwithin{equation}{section}
\def\func#1{\mathop{\rm {}#1}\nolimits}
\begin{document}

\title{ \textbf{The Poincar\'e-Cartan Form \\
in Superfield Theory} }
\author{ \textsc{Juan Monterde} \\
{\small Departament de Geometria i Topologia}\\
{\small Universitat de Val\`encia}\\
{\small Av.\ V. A. Estell\'es \indent 1, 46100-Burjassot, Spain}\\
{\small \emph{E-mail:\/}\texttt{juan.l.monterde@uv.es}} \medskip \\
\textsc{Jaime Mu\~{n}oz Masqu\'e}\\
{\small Instituto de F\'{\i}sica Aplicada, CSIC }\\
{\small C/ Serrano 144, 28006-Madrid, Spain }\\
{\small \emph{E-mail:\/}\texttt{jaime@iec.csic.es}} \medskip \\
\textsc{Jos\'e A. Vallejo}\\
{\small Departament de Matem\'atica Aplicada IV}\\
{\small Universitat Polit\'ecnica de Catalunya}\\
{\small Av.\ Canal Ol\'impic s/n, 08860-Castelldefels, Spain }\\
{\small \emph{E-email:\/}\texttt{jvallejo@ma4.upc.edu}}\\
{\small and}\\
{\small Facultad de Ciencias}\\
{\small Universidad Aut\'onoma de San Luis Potos\'i}\\
{\small Lateral Av.\ Salvador Nava s/n, 78220-SLP, M\'exico }\\
}
\date{}
\maketitle

\newpage

\tableofcontents

\newpage

\begin{abstract}
\noindent An intrinsic description of the Hamilton$-$Cartan
formalism for first-order Berezinian variational problems
determined by a submersion of supermanifolds is given.
This is achieved by studying the associated higher-order
graded variational problem through the Poincar\'e$-$Cartan
form. Noether theorem and examples from superfield theory
and supermechanics are also discussed.
\end{abstract}

\bigskip

\paragraph{Mathematics Subject Classification 2000.}

Primary 58E30 
; Secondary 46S60,
58A20, 
58A50, 
58C50,
58J70. .

\medskip

\paragraph{PACS codes.}

02.20.Sv, 
02.30.Xx, 
02.40.Ma, 
02.40.Vh,
11.10.Ef,
11.10.Kk.

\medskip

\paragraph{Key words and phrases.}
Berezinian sheaf, Graded manifold, Hamilton$-$Car\-tan
formalism in graded manifolds, Hamilton equations,
Infinitesimal contact transformations, Infinitesimal
supersymmetries, Lagrangian superdensity, Poin\-car\'e$-$Cartan
form, Berezinian and graded variational problems.

\newpage

\section{Introduction}

In this paper, we generalize some of the results already presented in \cite%
{Mon-Mun 02, Mon-Val 03}, where supermechanics (that is, variational
problems defined for supercurves $\sigma \colon {\mathbb{R}}^{1|1}\to {%
\mathbb{R}}^{1|1} \times (M,\mathcal{A})$ with $(M,\mathcal{A})$ a
supermanifold and ${\mathbb{R}}^{1|1}$ the parameter superspace), is
considered from the viewpoint of Poin\-car\'e$-$Cartan theory. Now, we
intend to deal with superfield theory; that is, with first order variational
problems defined for superfields $\sigma \colon (M,\mathcal{A})\to (N,%
\mathcal{B})$ (here $(M,\mathcal{A}),(N,\mathcal{B})$ are supermanifolds).

The basic object in our study is the Poincar\'e$-$Cartan form, for which we
present an intrinsic construction in the context of Berezinian variational
problems (intrinsic up to a volume form on the base manifold, as we will
see).

Let us recall that there are two kind of integration theories defined on
supermanifolds: the one associated to the Berezin integral and the other
associated to what is called the graded integral. The first one is more
suitable to state physical problems in the supermanifold setting, but it
lacks from an associated theory of Berezinian superdifferential forms. So,
it is not possible to work directly with a Poincar\'{e}$-$Cartan form and to
develop a Hamilton$-$Cartan formalism from it.

The second theory of integration does not have a good physical
interpretation but, conversely, a consistent theory of differential forms is
available and therefore, it is possible to define a Poincar\'e$-$Cartan form
and to develop the corresponding Hamilton$-$Cartan formalism.

Accordingly to these two possibilities, variational problems can be stated
using either the Berezin integral or the graded integral; we call them
Berezinian or graded variational problems, respectively. However, there is a
deep connection between both problems. In brief, the relationship is based
on the fact that to each first$-$order Berezinian variational problem over a
graded submersion $p\colon (N,\mathcal{B})\rightarrow (M,\mathcal{A})$ we
can associate a graded variational problem of order $n+1$ over $p$, where $%
(m|n)$ is the dimension of $(M,\mathcal{A})$ (see Section \ref{comparison}
below); we refer the reader to Theorem \ref{comparisonth} for formal
definitions and statement of this result, known as the Comparison Theorem.

With the help of the Comparison Theorem the way to build a Poincar\'e$-$%
Car\-tan form and to develop a Hamilton$-$Cartan formalism for a \emph{first$%
-$order} Berezinian variational problem is clear: Firstly, we define the
graded Poincar\'{e}$-$Cartan form for the associated graded variational
problem, \emph{now of order} $n+1$, and secondly we translate, with the hint
offered by the Comparison Theorem, this form to an object which will play
the role of Berezinian Poincar\'{e}$-$Cartan form for the Berezinian
variational problem. From this object it is possible to obtain the Euler$-$%
Lagrange superequations and a Noether Theorem.

A question arises at this point. In the classical case it is well known that
a canonical Poincar\'e$-$Cartan form of higher order does \emph{not} exist.
Of course, objects which can be called higher-order Poincar\'e$-$Cartan
forms can be defined, but the problem is that they depend on some additional
parameters (such as a connection, see \cite{Fer-Fra 87, Gar-Mun 83}).
Nevertheless, here we give a \emph{canonical} formulation of the graded
Poincar\'e$-$Cartan form for higher-order graded variational problems; the
key to understand how this is achieved is to note that we deal with a
special subclass of these problems: those coming from first$-$order
Berezinian variational ones through the Comparison Theorem. Actually, our
purpose is to solve these first order Berezinian problems, so we could
consider this feature as a byproduct.

Another very important consequence of this formalism in the classical case,
is the existence of a Noether Theorem, which is a basic tool in the study of
the symmetries of a variational problem. We present here a generalization to
the graded setting.

In order to make the paper relatively self$-$contained, the first sections
contain a review of previous results on jet bundles and calculus of
variations on supermanifolds.

Finally, there are some worked out examples (the $(m|2)$ field theory) and
we analyze a particular case of interest in Physics (supermechanics) showing
the coincidence with the results obtained by other methods (\cite{Mon-Mun
02, Mon-Val 03}).

\section{Basics of supermanifold theory\label{previous}}

\subsection{General definitions}

For general references, we refer the reader to \cite{Var 04}, \cite[Chapters
2 and 3]{Del-Morg 99}, \cite{Kos 77}, \cite{Lei 80}, \cite{Bar-Bru-Her 91}, 
\cite{Man 88}, and \cite{Wess-Bag 92}. The basic idea underlying the
definition of a graded manifold is the substitution of the commutative sheaf
of algebras of differentiable functions on a smooth manifold by another
sheaf in which we can accommodate some objects with a $\mathbb{Z}_{2}-$%
grading (in what follows, all the gradings considered are assumed to be $%
\mathbb{Z}_{2}-$gradings, unless otherwise explicitly stated.)

A graded manifold (or a supermanifold) of dimension $(m|n)$ on a $C^{\infty}-
$mani\-fold $M$ of dimension $m$, is a sheaf $\mathcal{A}$ on $M$ of graded $%
\mathbb{R}-$commutative algebras---the structure sheaf---such that,

\begin{enumerate}
\item There exists an exact sequence of sheaves, 
\begin{equation}
0\to \mathcal{N}\rightarrow \mathcal{A}\overset{\sim }{\rightarrow }
C^{\infty }(M)\to 0,  \label{eq1_5}
\end{equation}
where $\mathcal{N}$ is the sheaf of nilpotents in $\mathcal{A}$ and $\sim $
is a surjective morphism of graded $\mathbb{R}-$commutative algebras.

\item \label{Cond2} $\mathcal{N}/\mathcal{N}^{2}$ is a locally free module
of rank $n$ over $C^{\infty }(M)=\mathcal{A}/\mathcal{N}$, and $\mathcal{A}$
is locally isomorphic, as a sheaf of graded $\mathbb{R}-$commutative
algebras, to the exterior bundle $\bigwedge_{C^{\infty }(M)}(\mathcal{N}/%
\mathcal{N}^{2})$.
\end{enumerate}

For any open subset $U\subset M$, from the exact sequence \eqref{eq1_5} we
obtain the exact sequence of graded algebras, 
\begin{equation*}
0\to \mathcal{N}(U) \to \mathcal{A}(U) \overset{\sim }{\rightarrow }
C^\infty (U)\to 0.
\end{equation*}
A section $f$ of $\mathcal{A}$ is called a graded function (or a
superfunction). The image of such a graded function $f\in \mathcal{A}(U)$ by
the structure morphism $\sim $ is denoted by $\tilde{f}$.

The fact that $\mathcal{A}$ is a sheaf of graded $\mathbb{R}-$commutative
algebras induces a grading on its sections, and we denote the degree of such
an $f$ by $\left\vert f\right\vert $.

From the very definition of a supermanifold the structure sheaf of $(M,%
\mathcal{A})$ is locally isomorphic to $\bigwedge_{C^{\infty }(M)}(\mathcal{N%
}/\mathcal{N}^{2})$. An important theorem (known as Batchelor Theorem \cite%
{Bat 79, Bat 80}, but also see \cite{Gaw 77}), guarantees that in the $%
C^{\infty }$ category this holds not only locally, but also globally,
although this is no longer true in the complex analytic category. Thus, for
any smooth supermanifold $(M,\mathcal{A})$ there exists a vector bundle $%
E\to M$ which is isomorphic to $\mathcal{N}/\mathcal{N}^{2}$ and such that $%
\mathcal{A}\cong \bigwedge_{C^{\infty }(M)}(E)$, but this isomorphism is not
canonical.

A splitting neighborhood of a supermanifold $(M,\mathcal{A})$ is an open
subset $U$ in $M$ such that the bundle $E=\mathcal{N}/\mathcal{N}^{2}$ is
trivial over $U$ and 
\begin{equation*}
\mathcal{A}|_{U}\cong \bigwedge\nolimits_{C^{\infty }(U)}(E|_{U}).
\end{equation*}
If $U$ is a splitting neighborhood, there exists a basis of sections for $%
E|_{U}$, denoted by $(x^{-1},\dotsc ,x^{-n})$, along with an isomorphism 
\begin{equation}
\mathcal{A}(U)\cong C^{\infty }(U)\otimes _{\mathbb{R}}\bigwedge E_{n},
\label{eq1_6}
\end{equation}
where $E_{n}$ denotes the vector $\mathbb{R}-$space generated by $%
(x^{-1},\dotsc ,x^{-n})$. Therefore, the natural projection $\mathcal{A}%
(U)\rightarrow C^{\infty }(U)$, $f\mapsto \tilde{f}$, admits a global
section of $\mathbb{R}-$algebras, $\sigma \colon C^{\infty
}(U)\hookrightarrow \mathcal{A}(U)$. If $U$ is a splitting neighborhood, a
family of superfunctions $(x^{i},x^{-j})$, $1\leq i\leq m$, $1\leq j\leq n$, 
$|x^{i}|=0$, $|x^{-j}|=1$, is called a graded coordinate system (or a
supercoordinate system) if,

\begin{enumerate}
\item $x^i=\sigma (\tilde{x}^i)$, $1\leq i\leq m$, where $(\tilde{x}%
^1,\dotsc,\tilde{x}^m)$ is an ordinary coordinate system on $U$,

\item $\{x^{-1},\dotsc,x^{-n}\}$ is a basis of sections of $E|_U$; i.e., $%
x^{-1},\dotsc,x^{-n} \in \bigwedge E_n$ and $\prod \limits _{j=1}^n
x^{-j}\neq 0$.
\end{enumerate}

A morphism of graded manifolds $\phi \colon(M,\mathcal{A}) \to (N,\mathcal{B}%
)$ is a pair of mappings $(\tilde{\phi },\phi^\ast )$ where $\tilde{\phi }%
\colon M\to N$ is a differentiable mapping of smooth manifolds and for every
open subset $U\subset N$, $\phi ^\ast \colon \mathcal{B}(U) \to (\tilde{\phi 
}_\ast \mathcal{A})(U) =\mathcal{A} (\tilde{\phi }^{-1}(U))$ is an even
morphism of graded algebras compatible with the restrictions, and all such
that the diagram 
\begin{equation*}
\begin{array}{ccc}
\mathcal{B}(U) & -\!\!\!\overset{\phi ^\ast }{\longrightarrow } & \mathcal{A}
(\tilde{\phi }^{-1}(U)) \\ 
\begin{array}{c}
|\vspace{-0.1cm} \\ 
\downarrow%
\end{array}
&  & 
\begin{array}{c}
|\vspace{-0.1cm} \\ 
\downarrow%
\end{array}
\\ 
C^\infty (U) & \underset{\tilde{\phi }^\ast }{-\!\!\!\longrightarrow } & 
C^\infty (\tilde{\phi }^{-1}(U))%
\end{array}%
\end{equation*}
commutes.

Throughout this paper, we assume that $M$ is connected and oriented by a
volume form $\eta$. We confine ourselves to consider coordinate systems
adapted to this volume form; i.e., 
\begin{equation*}
\eta =d\tilde{x}^1\wedge \cdots \wedge d\tilde{x}^m.
\end{equation*}
We refer all our constructions to this volume, but we simply call
``intrinsic constructions'' those results which are independent of $\eta $,
in order to avoid continuous mention to $\eta $. Note that, by Batchelor's
theorem (see \cite{Bat 79}), the natural projection $\mathcal{A}(M)\to
C^\infty (M)$ admits a global section $\sigma \colon C^\infty (M) \to 
\mathcal{A}(M)$. Thus, once a section $\sigma $ has been fixed, every
ordinary volume form $\eta $ on $M$ induces a graded volume $\eta ^G$ on $(M,%
\mathcal{A})$.

Let $\mathcal{F},\mathcal{G}$ be sheaves on a topological space $X$. For any
open subset $U\subset M$,$\mathop{\rm {}Hom}\nolimits (\mathcal{F}|_U,%
\mathcal{G}|_U)$ denotes the space of morphisms between the sheaves $%
\mathcal{F}|_U$ and $\mathcal{G}|_U$; this is an abelian group in a natural
way. The sheaf of homomorphisms is the sheaf $\mathop{\rm {}Hom}\nolimits (%
\mathcal{F},\mathcal{G})$ given by $\mathop{\rm {}Hom}\nolimits (\mathcal{F}%
, \mathcal{G})(U) =\mathop{\rm {}Hom}\nolimits (\mathcal{F}|_U, \mathcal{G}%
|_U)$ with the natural restriction morphisms.

The sheaf of left $\mathcal{A}-$modules of derivations of a graded manifold $%
(M,\mathcal{A})$ is the subsheaf of $\mathop{\rm {}End}\nolimits_{\mathbb{R}%
}(\mathcal{A})$ whose sections on an open subset $U\subseteq M$ are $\mathbb{%
R}-$linear graded derivations $D\colon \mathcal{A}|_{U}\rightarrow \mathcal{A%
}|_{U}$. This sheaf is denoted by $\mathop{\rm {}Der}\nolimits_{\mathbb{R}}(%
\mathcal{A})$ or simply $\mathop{\rm {}Der}\nolimits(\mathcal{A})$, and its
elements are called graded vector fields (or supervector fields) on the
graded manifold $(M,\mathcal{A})$. The notation $\mathcal{X}_{G}(M)$ is also
often used.

Let $U$ be a coordinate neighborhood for a graded manifold $(M,\mathcal{A})$
with graded coordinates $(x^{i},x^{-j})$, $1\leq i\leq m$, $1\leq j\leq n$.
There exist even derivations $\partial /\partial x^{1},\dotsc ,\partial
/\partial x^{m}$ and odd derivations $\partial /\partial
x^{-1},\dotsc,\partial /\partial x^{-m}$ of $\mathcal{A}(U)$ uniquely
characterized by the conditions 
\begin{equation*}
\frac{\partial x^{j}}{\partial x^{i}}=\delta _{i}^{j}, \quad \frac{\partial
x^{-j}}{\partial x^{i}}=0, \quad \frac{\partial x^{j}}{\partial x^{-i}} =0,
\quad \frac{\partial x^{-j}}{\partial x^{-i}} =\delta _{i}^{j}
\end{equation*}
(negative indices running from $-n$ to $-1$, positive ones from $1$ to $m$)
and such that every derivation $D\in \mathop{\rm {}Der}\nolimits\mathcal{A}%
(U)$ can be written as 
\begin{equation*}
D=\sum\limits_{i=1}^{m}D(x^{i})\frac{\partial }{\partial x^{i}}
+\sum\limits_{j=1}^{m}D(x^{-j})\frac{\partial }{\partial x^{-j}}.
\end{equation*}
In particular, $\mathop{\rm {}Der}\nolimits(\mathcal{A}(U))$ is a free right 
$\mathcal{A}(U)-$module with basis 
\begin{equation*}
\frac{\partial }{\partial x^{1}},\dotsc ,\frac{\partial }{\partial x^{m}}; 
\frac{\partial }{\partial x^{-1}},\dotsc ,\frac{\partial }{\partial x^{-m}}.
\end{equation*}
If $U\subseteq M$ is an open subset, the algebraic dual of the graded $%
\mathcal{A}-$module $\mathop{\rm {}Der}\nolimits(\mathcal{A}(U))$ is $(%
\mathop{\rm {}Der}\nolimits\mathcal{A}(U))^{\ast } =\mathop{\rm {}Hom}%
\nolimits_{\mathcal{A}}(\mathop{\rm {}Der}\nolimits (\mathcal{A}(U)),%
\mathcal{A}(U))$, which has itself a natural structure of graded $\mathcal{A}%
-$module and it defines a sheaf $U\mapsto (\mathop{\rm {}Der}\nolimits%
\mathcal{A}(U))^{\ast }$.

The sheaves of right $\mathcal{A}-$modules of graded differential forms on $%
(M,\mathcal{A})$ are the sheaves 
\begin{equation*}
\Omega _{G}^{p}(M) =\bigwedge^{p}(\mathop{\rm {}Der}\nolimits\mathcal{A}%
)^{\ast }.
\end{equation*}
We also set $\Omega _{G}(M) =\sum\limits_{p\in \mathbb{N}}\Omega _{G}^{p}(M)$%
, with $\Omega _{G}^{0}(M)=\mathcal{A}$.

The graded differential forms on $(M,\mathcal{A})$ are simply called graded
forms. The three usual operators: insertion of a graded vector field, graded
Lie derivative with respect to a graded vector field and the graded exterior
differential, are defined in a similar way to the classical case (e.g., see 
\cite{Kos 77}), and denoted by $\iota _X$, $\mathcal{L}_X^G$, and $d^{G}$,
respectively.

\subsection{Supervector bundles}

Let $\mathrm{GL}(V)$ be the general linear supergroup of a supervector space 
$V=V_0\oplus V_1$. We set $\mathrm{GL}(p|q) =\mathrm{GL}(\mathbb{R}^{p|q})$.
For the definition of the graded structure of $\mathrm{GL}(p|q)$ as a super
Lie group, we refer the reader to \cite[I, \S 3]{Bar-Bru-Her 91}, \cite[%
Chapter 2, \S 1]{Ber 87}, \cite[\S 1.5]{Boy-San 91}, \cite[\S 2.11]{Del-Morg
99}, \cite{Dubois 01}, \cite[Chapter 4, \S 10]{Man 88}, \cite[\S 2.14]{San
88}, \cite[\S 2]{San 88a}, \cite[\S 4.19]{Sch 84}, and \cite[\S 2.2.1]{Vor
91}.

Let $(M,\mathcal{A})$ be an $(m|n)$-dimensional supermanifold. As is well
known (e.g., see \cite[\S 3.2]{Del-Morg 99}, \cite[7.10]{Sch 84}), a
supervector bundle of rank $(p|q)$ over $(M,\mathcal{A})$ can be described
either (i) as a fibre bundle $V$ over $M$ with typical fibre $\mathbb{R}%
^{p|q}$ and structure group $\mathrm{GL}(p|q)$, or (ii) as a locally free
sheaf of $\mathcal{A}-$modules $\mathcal{V}$ of rank $(p|q)$. The
description in (ii) means that every point $x\in M$ admits an open
neigbourhood $U\subseteq M$ such that $\mathcal{V}|_{U}$ is isomorphic---as
a sheaf of $\mathcal{A}|_{U}-$modules---to $\mathcal{A}^{p|q}|_{U}=\mathcal{A%
}^{p}|_{U}\oplus \Pi \mathcal{A}^{q}|_{U}$ (direct sum of $p$ copies of $%
\mathcal{A}$ and $q$ copies of $\Pi \mathcal{A}$), where $\Pi $ denotes the
functor of change of parity; precisely, for every open subset $O\subseteq U$
we have $\mathcal{V}(O)\cong \mathcal{A}^{p}(O)\oplus \Pi \mathcal{A}^{q}(O)$%
.

More formally, we can state (see \cite[2.11 Theorem]{San 86}): There is a
one-to-one (functorial) correspondence between the set of isomorphism
classes of locally free sheaves of (left) graded $\mathcal{A}-$modules of
rank $(p|q)$ over $M$ and the set of isomorphisms classes of supervector
bundles of rank $(p|q)$ over the graded manifold $(M,\mathcal{A})$. Also see 
\cite[7.10. Theorem]{Sch 84} for a slightly different approach.

We remark that the tangent and cotangent `supervector bundles' introduced in 
\cite{Kos 77} are not supervector bundles in the previous sense, as they are
not locally trivial. Because of this, we prefer to work with the
supertangent bundle $\mathcal{ST}(M,\mathcal{A})$ of $(M,\mathcal{A})$
introduced by S\'{a}nchez-Valenzuela, which corresponds to the locally free
sheaf of $\mathcal{A}-$modules of derivations, $\mathop{\rm {}Der}\nolimits%
\mathcal{A}$. For our purposes, another important reason to do this, is that
the graded manifold of $1-$jets of graded curves from $\mathbb{R}^{1|1}$ to
a graded manifold $(M,\mathcal{A})$ is isomorphic to $\mathcal{ST}(M,%
\mathcal{A})$; i.e., $J_{G}^{1}(p)\simeq \mathcal{ST}(M,\mathcal{A})$, where 
$J_{G}^{1}(p)$ is the graded manifold of graded $1-$jets of sections of the
natural projection onto the first factor, $p\colon \mathbb{R}^{1|1}\times (M,%
\mathcal{A})\to \mathbb{R}^{1|1}$.

Let $\pi \colon (E,\mathcal{E})\rightarrow (M,\mathcal{A})$ be a supervector
bundle. For any $x\in M$, we denote by $\pi ^{-1}(x)$ the superfibre over $x$%
, i.e., the supermanifold whose underlying topological space is $\tilde{\pi}%
^{-1}(x)$ and whose structure sheaf is 
\begin{equation*}
\mathcal{A}_{x} =(\mathcal{E}\diagup \mathcal{K}_{x})|_{\tilde{\pi}^{-1}(x)},
\end{equation*}
where $\mathcal{K}_{x}$ is the subsheaf of $\mathcal{E}$ whose sections
vanish when restricted to $\tilde{\pi}^{-1}(x)$.

For any $x\in M$, $\pi ^{-1}(x)$ is isomorphic with the standard fibre of $%
\pi $.

A supervector bundle morphism from the vector bundle $\pi _{E}\colon (E,%
\mathcal{E})\rightarrow (M,\mathcal{A})$ to the vector bundle $\pi
_{F}\colon (F,\mathcal{F})\rightarrow (M,\mathcal{A})$ is a supermanifold
morphism 
\begin{equation*}
H\colon (E,\mathcal{E})\rightarrow (F,\mathcal{F})
\end{equation*}
such that $\pi _{F}\circ H=\pi _{E}$ the restriction of which to each
superfibre $\pi _{E}^{-1}(x)$ is superlinear. The following consequence can
be proved:

\begin{proposition}[Proposition 3.3 in \protect\cite{San 86}]
Let $(M,\mathcal{A})$ be a graded manifold, let $\mathcal{K},\mathcal{L}$ be
two locally free sheaves of graded $\mathcal{A}-$modules of ranks $(p|q)$
and $(r|s)$, respectively, and let $\pi _{E}\colon (E,\mathcal{E}%
)\rightarrow (M,\mathcal{A})$, $\pi _{F}\colon (F,\mathcal{F})\rightarrow (M,%
\mathcal{A})$ be the supervector bundles that $\mathcal{K}$ and $\mathcal{L}$
give rise to, respectively. Each morphism $\psi \colon \mathcal{K}%
\rightarrow \mathcal{L}$ of sheaves of graded $\mathcal{A}-$modules over $M$
defines a morphism 
\begin{equation*}
H_{\psi }\colon (E,\mathcal{E})\rightarrow (F,\mathcal{F})
\end{equation*}
such that $\pi _{F}\circ H_{\psi }=\pi _{E}$ and it restricts to a
superlinear morphism over each fibre.
\end{proposition}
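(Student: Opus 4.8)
The plan is to realize $H_\psi$ as the supermanifold morphism whose comorphism is the graded-symmetric algebra extension of the $\mathcal{A}$-transpose of $\psi$. First I would make the total spaces explicit. For a locally free sheaf $\mathcal{K}$ of rank $(p|q)$ the structure sheaf of the associated bundle $(E,\mathcal{E})$ is, after pushing forward to $M$, the graded-symmetric $\mathcal{A}$-algebra $(\pi_E)_{\ast}\mathcal{E}\cong S_{\mathcal{A}}(\mathcal{K}^{\ast})$ of the $\mathcal{A}$-dual $\mathcal{K}^{\ast}=\operatorname{Hom}_{\mathcal{A}}(\mathcal{K},\mathcal{A})$, whose degree-one part $\mathcal{K}^{\ast}$ is exactly the subsheaf of fibrewise-linear superfunctions. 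Concretely, over a splitting neighbourhood $U$ with a homogeneous local frame of $\mathcal{K}$, the dual frame supplies fibre coordinates $(y^{a},\xi^{\alpha})$ ($p$ even, $q$ odd) and $\mathcal{E}|_{U}\cong \mathcal{A}|_{U}[y^{1},\dots,y^{p}]\otimes\bigwedge(\xi^{1},\dots,\xi^{q})$; the same description applies to $(F,\mathcal{F})$ and $\mathcal{L}$. This identification, together with the functorial correspondence recalled above, is the step I expect to require the most care, since one must keep track of parities and verify it globally on $M$ rather than merely over splitting neighbourhoods.

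Second, from $\psi\colon\mathcal{K}\to\mathcal{L}$ I would form its $\mathcal{A}$-transpose $\psi^{\ast}\colon\mathcal{L}^{\ast}\to\mathcal{K}^{\ast}$, an even morphism of graded $\mathcal{A}$-modules, and then invoke the universal property of the graded-symmetric algebra to extend it to the unique $\mathcal{A}$-algebra morphism
\[
H_{\psi}^{\ast}=S_{\mathcal{A}}(\psi^{\ast})\colon (\pi_{F})_{\ast}\mathcal{F}\cong S_{\mathcal{A}}(\mathcal{L}^{\ast})\longrightarrow S_{\mathcal{A}}(\mathcal{K}^{\ast})\cong(\pi_{E})_{\ast}\mathcal{E}
\]
that fixes $\mathcal{A}$ and carries $\mathcal{L}^{\ast}$ into $\mathcal{K}^{\ast}$. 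Since $\mathcal{K}^{\ast},\mathcal{L}^{\ast}$ and $\psi^{\ast}$ are global and $S_{\mathcal{A}}(-)$ is functorial, $H_{\psi}^{\ast}$ is defined globally and no separate gluing argument is needed; its underlying smooth map $\tilde{H}_{\psi}\colon E\to F$ is the ordinary vector-bundle morphism induced by the reduction of $\psi$. Because $H_{\psi}^{\ast}$ covers the identity on $(M,\mathcal{A})$, the pair $(\tilde{H}_{\psi},H_{\psi}^{\ast})$ assembles into a supermanifold morphism $H_{\psi}\colon(E,\mathcal{E})\to(F,\mathcal{F})$.

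Finally I would read off the two asserted properties. The relation $\pi_{F}\circ H_{\psi}=\pi_{E}$ is immediate, since $H_{\psi}^{\ast}$ restricts to the identity on the degree-zero part $\mathcal{A}$ and hence $H_{\psi}$ covers $\mathrm{id}_{(M,\mathcal{A})}$. Superlinearity over each fibre follows from the fact that $H_{\psi}^{\ast}$ sends the fibrewise-linear superfunctions $\mathcal{L}^{\ast}$ into $\mathcal{K}^{\ast}$ and introduces no constant or higher-order terms along the fibres: passing to the superfibre $\pi_{E}^{-1}(x)$, whose structure sheaf was defined above by quotienting out the superfunctions vanishing on $\tilde{\pi}^{-1}(x)$, the induced map on degree-one parts is precisely the transpose of the linear map induced by $\psi$ at $x$, which is superlinear. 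The only genuine obstacle is the bookkeeping of signs in the graded transpose and in $S_{\mathcal{A}}$, together with the global identification of the structure sheaves described in the first step; once these are settled the statement becomes a direct consequence of the functoriality of $\psi\mapsto S_{\mathcal{A}}(\psi^{\ast})$.
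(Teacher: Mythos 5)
The paper gives no proof of this proposition: it is quoted as Proposition 3.3 of S\'anchez-Valenzuela \cite{San 86}, so there is no in-paper argument to compare yours against. Judged on its own, your construction has the right skeleton (dualize $\psi$, use $\psi^{\ast}$ to pull back fibrewise-linear superfunctions, check that the resulting comorphism covers the identity), but it rests on an identification that is false as stated: the pushed-forward structure sheaf $(\pi_{E})_{\ast}\mathcal{E}$ of the total space is \emph{not} the graded-symmetric algebra $S_{\mathcal{A}}(\mathcal{K}^{\ast})$. A supervector bundle of rank $(p|q)$ over an $(m|n)$-dimensional base is a supermanifold of dimension $(m+p\,|\,n+q)$, so over a trivializing $U$ its structure sheaf involves all $C^{\infty}$ functions of the $p$ even fibre coordinates (tensored with the exterior algebra on the $q$ odd ones), not merely polynomials. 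Consequently the universal property of $S_{\mathcal{A}}(-)$ only produces a morphism $S_{\mathcal{A}}(\mathcal{L}^{\ast})\to S_{\mathcal{A}}(\mathcal{K}^{\ast})$ between the subsheaves of fibrewise-polynomial superfunctions; it does not by itself yield $H_{\psi}^{\ast}$ on all of $\mathcal{F}$, and your claim that ``no separate gluing argument is needed'' is precisely the step that this identification was supposed to justify.

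The gap is repairable, but with a different tool: the chart theorem for supermanifold morphisms (a morphism into a superdomain is uniquely determined by, and can be freely prescribed through, the pullbacks of its coordinates, subject to parity and to the reduced even pullbacks landing in the correct open set; see Kostant or Leites). Over each trivializing open $U$ you prescribe $H_{\psi}^{\ast}$ on the base coordinates by $\pi_{E}^{\ast}$ and on the fibre coordinates of $F$ by their images under $\psi^{\ast}$; this determines a unique morphism $(\tilde{\pi}_{E}^{-1}(U),\mathcal{E})\to(\tilde{\pi}_{F}^{-1}(U),\mathcal{F})$. You must then verify that these local morphisms agree on overlaps, which follows from the $\mathcal{A}$-linearity of $\psi$ together with the fact that the $\mathrm{GL}(p|q)$- and $\mathrm{GL}(r|s)$-valued transition functions act linearly on the fibre coordinates; this is the gluing argument that cannot be skipped. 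With that replacement, the remainder of your argument (the identity $\pi_{F}\circ H_{\psi}=\pi_{E}$ from $H_{\psi}^{\ast}|_{\mathcal{A}}=\mathrm{id}$, and fibrewise superlinearity from the fact that linear superfunctions pull back to linear superfunctions, hence the induced map on each superfibre is the reduction of $\psi$ at that point) goes through as you describe.
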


Another construction which we will use is the pull-back (or inverse image)
of a supervector bundle along a graded submersion, which is a particular
case of the pull-back of modules over ringed spaces. For our purposes, it
suffices the following description.

Let $p\colon (N,\mathcal{B})\rightarrow (M,\mathcal{A})$ be a graded
submersion, and let $\mathcal{K}$ be a sheaf of graded $\mathcal{A}-$modules
over $M$ with projection $\pi $. The pull-back $p^{\ast }\mathcal{K}$ is the
sheaf of $p^{\ast }\mathcal{A}-$modules over $N$ where to each open $%
V\subset N$, it corresponds 
\begin{equation*}
p^{\ast }\mathcal{K}(V) =\{(k,y)\in \mathcal{K}(\tilde{p}(V))\times V:\pi
(k)=\tilde{p}(y)\}.
\end{equation*}
It is customary to write $p^{\ast }\mathcal{K=K\times }_{(M,\mathcal{A})}(N,%
\mathcal{B})$. Note that if we consider the supervector bundle on $(M,%
\mathcal{A})$ given by $\mathcal{K}$, then $p^{\ast }\mathcal{K}$ gives a
supervector bundle on $(N,\mathcal{B})$.

\section{Graded jet bundles\label{jets}}

\subsection{Notations and definitions \label{notdef}}

For the details of the construction of graded jet bundles associated to a
graded submersion $p\colon (N,\mathcal{B})\rightarrow (M,\mathcal{A})$, we
refer the reader to \cite{Her-Mun 82, Her-Mun 84a, Her-Mun 87, Mon 92a,
Mon-Mun 92a}. We also note that other approaches to superjet bundles of
interest in Physics are possible, see \cite{GMS 05}.

We denote by 
\begin{equation*}
p_{k}\colon \left( J_{G}^{k}(p),\mathcal{A}_{J_{G}^{k}(p)}\right) \to (M,%
\mathcal{A})
\end{equation*}
the graded $k-$jet bundle of local sections of $p$, with natural projections 
\begin{equation*}
p_{kl}\colon \left( J_{G}^{k}(p),\mathcal{A}_{J_{G}^{k}(p)}\right)
\rightarrow \left( J_{G}^{l}(p),\mathcal{A}_{J_{G}^{l}(p)}\right) , \quad
k\geq l.
\end{equation*}

\begin{remark}
Sometimes we will write $p_{k,l}$ in order to avoid confusions, as in the
case of the projection $p_{k,k-1}$ (of $J_{G}^{k}(p)$ onto $J_{G}^{k-1}(p))$
and even we will employ $p_{l}^{k}$ indistinctly.
\end{remark}

Each section $\sigma \colon (M,\mathcal{A})\rightarrow (N,\mathcal{B})$ of
the graded submersion $p$ induces a closed embedding of graded manifolds 
\begin{equation*}
j^{k}\sigma \colon (M,\mathcal{A}) \to \left( J_{G}^{k}(p),\mathcal{A}%
_{J_{G}^{k}(p)}\right) ,
\end{equation*}
which is called the graded $k-$jet extension of $\sigma $.

We set $(m|n)=\dim (M,\mathcal{A})$, $(m+r|n+s)=\dim (N,\mathcal{B})$, and
let 
\begin{equation}
\left. 
\begin{array}{ll}
\left( x^{\alpha }\right) , & \alpha =-n,\dotsc ,-1,1,\dotsc ,m, \\ 
\left( y^{\mu }\right) , & \mu =-s,\dotsc ,-1,1,\dotsc ,r.%
\end{array}
\right\}  \label{coordinates1}
\end{equation}
be a fibred coordinate system for the submersion $p\colon (N,\mathcal{B}%
)\rightarrow (M,\mathcal{A})$, defined over an open domain $V\subseteq N$.
This means that the graded functions $(x^{\alpha })$, $i=-n,\dotsc,-1,1,%
\dotsc ,m$, belong to $p^{\ast }\mathcal{A}(U)$, where $U=\tilde{p}(V)$.

The system \eqref{coordinates1} induces a coordinate system for $%
J_{G}^{k}(p) $ on $(\tilde{p}_{k0})^{-1}(V)$, denoted by $y_{IA}^{\mu }$,
where $\mu =-s,\dotsc ,-1,1,\dotsc ,r$, $I=(i_{1},\dotsc ,i_{m})\in \mathbb{N%
}^{m}$, and $A=(-\alpha _{1},\dotsc ,-\alpha _{l})\in (\mathbb{Z}^{-})^{l}$,
for $l=0,\dotsc ,n$, is a strictly decreasing multi-index, such that $%
|I|+|A|\leq k$, with the assumption $y_{0\emptyset }^{\mu }=y^{\mu }$. This
system of coordinates is determined by the following equations: 
\begin{equation*}
\left( j^{k}\sigma \right) ^{\ast }y_{IA}^{\mu } =\frac{\partial ^{i_{1}}}{%
(\partial x^{1})^{i_{1}}} \circ \ldots \circ \frac{\partial ^{i_{m}}}{%
(\partial x^{m})^{i_{m}}} \circ \frac{\partial }{\partial x^{-\alpha _{l}}}
\circ \ldots \circ \frac{\partial }{\partial x^{-\alpha _{1}}} \left( \sigma
^{\ast }y^{\mu }\right) ,
\end{equation*}
for every smooth section $\sigma \colon (U,\mathcal{A}|_{U})\to (V,\mathcal{B%
}|_{V})$ of the given graded submersion.

Sometimes we will write expressions such as $y_{I,A}^{\mu }$ instead of $%
y_{IA}^{\mu }$. This will be done in order to avoid confusions involving
positive and negative multiindices.

The parity of $y_{IA}^{\mu }$ is the sum modulo $2$ of the parity of $y^{\mu
}$ and $|A|$. In particular, the parity of the coordinate system induced by %
\eqref{coordinates1} on $J_{G}^{1}(p)$ is explicitly given by 
\begin{equation*}
\left. 
\begin{array}{lll}
\left\vert y_{i}^{\mu }\right\vert =0, & i=-n,\dotsc ,-1, & \mu =-s,
\dotsc,-1 \\ 
\left\vert y_{i}^{\mu }\right\vert =1, & i=1,\dotsc ,m, & \mu =-s, \dotsc ,-1
\\ 
\left\vert y_{i}^{\mu }\right\vert =1, & i=-n,\dotsc ,-1, & \mu =1, \dotsc ,r
\\ 
\left\vert y_{i}^{\mu }\right\vert =0, & i=1,\dotsc ,m, & \mu =1, \dotsc ,r%
\end{array}
\right\}
\end{equation*}
and we accordingly have, 
\begin{equation*}
\dim \left( J_{G}^{1}(p), \mathcal{A}_{J_{G}^{1}(p)}\right)
=(m+r+mr+ns|n+s+ms+nr).
\end{equation*}

We also work with the inverse limit 
\begin{equation*}
\left( J^\infty _G(p) =\lim _\leftarrow J^k_G(p), \mathcal{A}_{J^\infty
_G(p)} =\lim _\rightarrow \mathcal{A}_{J^k _G(p)} \right)
\end{equation*}
of the system $(J^k_G(p), \mathcal{A}_{J^k_G(p)}; p_{kl},k\geq l)$, with
natural projections 
\begin{eqnarray*}
p_\infty \colon \left( J_G^\infty (p), \mathcal{A}_{J_G^\infty (p)} \right)
& \rightarrow & \left( M,\mathcal{A} \right) , \\
p_{\infty k}\colon \left( J_G^\infty (p), \mathcal{A}_{ J_G^\infty (p) }
\right) & \rightarrow & \left( J_G^k(p), \mathcal{A}_{J_G^k(p)} \right) .
\end{eqnarray*}

Given the submersion $p:(N,\mathcal{B})\rightarrow (M,\mathcal{A})$, we
denote by $\mathcal{V}(p)$ the vertical subspace of $S\mathcal{T}(N,\mathcal{%
B})$. In particular, this applies to the various $p_{k}$ and $p_{kl}$
submersions derived from $p$, so we will write $\mathcal{V}(p_{k}),\mathcal{V%
}(p_{kl})$, etc.

In the following we will work with differential operators acting on the
spaces $J_{G}^{k}(p)$, and in order to deal with the multi-index notation
(especially for negative multi-indices) it will be useful to establish the
following conventions.

\begin{enumerate}
\item We will denote positive multi-indices by the capital letters $%
I,J,K,\ldots $ and the negative ones by $A,B,C,...$ An arbitrary multi-index
(containing both positive and negative indices) will be denoted $%
P,Q,R,\ldots $ By $\mathbb{I}_{n}$ we will understand the set $\mathbb{I}%
_{n}=\{ 1,2,\dotsc,n\} $.

\item The multi-index $\emptyset $ amounts to take $0$ within any expression
in which it appears, that is: 
\begin{equation*}
\frac{\partial^{\emptyset}}{\partial x^{\emptyset}}F_{IA}=0,
G_{J\emptyset}=0.
\end{equation*}
The multi-index $(0)$ amounts to take the identity: 
\begin{equation*}
\frac{\partial^{0}}{\partial x^{0}}F_{IA}=F_{IA}, G_{J0}=G_{J}.
\end{equation*}

\item A negative multi-index $A$ with lenght $l$ in $J_{G}^{k}(p)$ has the
structure 
\begin{equation*}
A=(-\alpha_{1},\dotsc,-\alpha _{l})
\end{equation*}
with $l\leq k$, where $\alpha_{i}\in\mathbb{I}_{n}$, $\dim(M,\mathcal{A}%
)=(m|n),1\leq i\leq l$. Each $-\alpha_{i}$ gives the odd coordinate of $(M,%
\mathcal{A})$ with respect to which we are computing the derivative; that
is, the place occupied by $-\alpha_{i}$ in the multi-index only expresses
the order in which the corresponding derivative appears from left to right.
Thus, if $\dim(M,\mathcal{A})=(3|6)$, we could consider $J_{G}^{4}(p)$ and $%
A=(-3,-5,-2)$; then $\frac{\partial^{\left\vert A\right\vert }}{\partial
x^{A}}$ would represent 
\begin{equation*}
\frac{\partial^{\left\vert A\right\vert }}{\partial x^{A}} =\frac{\partial }{%
\partial x^{-3}}\circ \frac{\partial}{\partial x^{-5}}\circ \frac{\partial }{%
\partial x^{-2}}\text{.}
\end{equation*}

\item If we are dealing with $J_{G}^{k}(p)$, a negative multi-index $A$
always has lenght $l\leq k$. By convention, if the lenght of $A$ is $l>k$,
then $A=\emptyset $. Note that if $l>n,$ automatically $A=\emptyset $.
Generally, if a negative multi-index $A$ contains two repeated indices, $%
A=\emptyset $.

\item In principle, a negative multi-index does not need to be ordered, but
nothing prevents from having such ordered indices as the lenght $5$
multi-index 
\begin{equation*}
B=(-9,-7,-4,-2,-1)
\end{equation*}
in $J_{G}^{8}(p)$, with $\dim (M,\mathcal{A})=(2|9)$.

\item For negative multi-indices, we define the operation of (non-ordered) 
\emph{juxtaposition}. If $A$ has lenght $l$ and $B$ has lenght $q$, 
\begin{equation*}
\begin{array}{c}
A=(-\alpha _{1},\dotsc,-\alpha _{l})\text{ with } \alpha _{i}\in \mathbb{I}%
_{n} \\ 
B=(-\beta _{1},\dotsc,-\beta _{q})\text{ with }\beta _{j} \in \mathbb{I}_{n},%
\end{array}%
\end{equation*}
then their juxtaposition is given by: 
\begin{equation*}
A\star B=\left\{
\begin{array}{c}
(-\alpha _{1},\dotsc,-\alpha _{l},
-\beta _{1},\dotsc,-\beta _{q})
\begin{array}{c}
\text{if }l+q\leq k\text{ and }-\xi _{i}\neq -\xi _{j} \\
\text{being }\xi _{i}
\in \{\alpha _{1},\dotsc,\alpha _{l},\beta _{1},
\dotsc,\beta_{q}\}
\end{array}
\\
\\
\multicolumn{1}{l}{\emptyset \text{ other case.}}
\end{array}
\right.
\end{equation*}
Note that $A\star B\neq B\star A$. In particular, if $A=(-j)$ and $B=(-\beta
_{1},\dotsc,-\beta _{q})$, then 
\begin{equation*}
A\star B=(-j,-\beta _{1},\dotsc,-\beta _{q})
\end{equation*}
and that means 
\begin{equation*}
\frac{\partial ^{q+1}}{\partial x^{A+B}} =\frac{\partial }{\partial x^{-j}}
\circ \frac{\partial }{\partial x^{-\beta _{1}}} \circ \cdots \circ \frac{%
\partial }{\partial x^{-\beta _{q}}},
\end{equation*}
\emph{provided }$1+q\leq k$ \emph{and there are no repeated indices}.

\item If we take a positive multiindex $I$ and a negative one $A$ (or a pair
of positive multiindices) their juxtaposition is analogously defined, but in
this case it is a \emph{commutative} operation. To stress this fact we then
write $I+A,I+J$, etc.
\end{enumerate}

\subsection{Graded contact forms}

Let $p\colon (M,\mathcal{A})\rightarrow (N,\mathcal{B})$ be a graded
submersion with $(m|n)=\dim (M,\mathcal{A})$, $(m+r|n+s)=\dim (N,\mathcal{B})
$. The graded manifold $(J_{G}^{k}(p),\mathcal{A}_{J_{G}^{k}(p)})$ is
endowed with a differential system, which characterizes the holonomy of the
sections of $p_{k}\colon (J_{G}^{k}(p),\mathcal{A}_{J_{G}^{k}(p)})\to (M,%
\mathcal{A})$. Precisely, a graded $1-$form $\omega $ on $(J_{G}^{k}(p),%
\mathcal{A}_{J_{G}^{k}(p)})$ is said to be a contact form if $(j^{k}\sigma
)^{\ast }\omega =0$, for every local section $\sigma $ of $p$. With the same
assumptions and notations as in subsection \ref{notdef}, the set of contact
forms is a sheaf of $\mathcal{A}_{J_{G}^{k}(p)}-$modules locally generated
by the forms 
\begin{equation}
\theta _{IA}^{\mu }=d^{G}y_{IA}^{\mu } -\sum_{h=1}^{m}d^{G}x^{h}\cdot
y_{\{h\}\star I,A}^{\mu } -\sum_{j=1}^{n}\varepsilon (j,A)d^{G}x^{-j}\cdot
y_{I,\{-j\}\star A}^{\mu },  \label{contactforms}
\end{equation}
where $\alpha =-s,\dotsc ,-1,1,\dotsc ,r,$ $|I|+|A|\leq k-1.$

These forms fit together in order to define a global $(p_{k,k-1})^{\ast }%
\mathcal{V}(p_{k})-$valued $1-$form on $(J_{G}^{k}(p),\mathcal{A}%
_{J_{G}^{k}(p)})$, called the structure form on the graded $k-$jet bundle,
given by 
\begin{equation}
\theta ^{k}=\theta _{IA}^{\mu }\otimes \frac{\partial }{\partial y_{IA}^{\mu
}},  \label{structureform}
\end{equation}
which characterizes graded $k-$jet extensions of sections of $p$, as
follows: a section $\bar{\sigma}\colon (M,\mathcal{A})\rightarrow
(J_{G}^{k}(p),\mathcal{A}_{J_{G}^{k}(p)})$ of $p_{k}$ coincides with the $k-$%
jet extension of a certain section of $p$ if and only if, $\bar{\sigma}%
^{\ast }\theta ^{k}=0$.

\subsection{Graded lifts of vector fields\label{horver}}

Consider a graded submersion $p\colon (N,\mathcal{B})\to (M,\mathcal{A})$.
We will define liftings of graded vector fields to superjet bundles $%
J_{G}^{k}(p)$, $1\leq k\leq \infty $.

\subsubsection{Horizontal lifts\label{horizontallift}}

Let $X$ be a vector field on $(M,\mathcal{A})$. The horizontal or total
graded lift $X^{H}$ of $X$ is the vector field on $(J_{G}^{\infty }(p),%
\mathcal{A}_{J_{G}^{\infty }(p)})$ uniquely determined by the following
equations: 
\begin{equation*}
j^{k}(\sigma )^{\ast }(X^{H}(f))=X(j^{k}(\sigma )^{\ast }(f)), \quad \forall
k\in \mathbb{N},
\end{equation*}
for all open subsets $V\subseteq N$, $W\subseteq p_{k0}^{-1}(V)$, every $%
f\in \mathcal{A}_{J_{G}^{k}(p)}(W)$, and every smooth section $\sigma \colon
(U,\mathcal{A}(U))\rightarrow (V,\mathcal{B}(V))$ of $p$, with $U=\tilde{p}%
(V)$. A vector field $X$ on $J_{G}^{\infty }(p)$ is said to be horizontal if
vector fields $X_{1},\dotsc ,X_{r}$ on $(M,\mathcal{A})$ and functions $%
f^{1},\dotsc ,f^{r} \in \mathcal{A}_{J_{G}^{\infty }(p)}$ exist, such that $%
X=f^{i}(X_{i})^{H}$.

If $(x^{\alpha },y^{\mu })$ is a fibred coordinate system for the submersion 
$p$, then the expression for the horizontal lift of the basic vector field $%
\partial /\partial x^{\alpha }$ in the induced coordinate system, is 
\begin{align}
\frac{d}{dx^{\alpha }}& =\left( \frac{\partial }{\partial x^{\alpha }}
\right) ^{H}  \notag \\
& =\frac{\partial }{\partial x^{\alpha }} +y_{\{\alpha \}\star Q}^{\mu } 
\frac{\partial }{\partial y_{Q}^{\mu }}.  \label{XH}
\end{align}

The map $X\mapsto X^{H}$ is an $\mathcal{A}-$linear injection of Lie
algebras (cf.\/ \cite{Mon 92a, Mon-Mun 92a}). Note that $X^{H}$ is $%
p_{\infty }-$projectable onto $X$. Moreover, we can consider $\mathcal{A}%
_{J_{G}^{k+1}(p)}$ as a sheaf of $\mathcal{A}_{J_{G}^{k}(p)}-$algebras via
the natural injection 
\begin{equation*}
p_{k+1,k}^{\ast } \colon \mathcal{A}_{J_{G}^{k}(p)} \to \mathcal{A}%
_{J_{G}^{k+1}(p)},
\end{equation*}
and, for every $k\in \mathbb{N}$, $X^{H}$ induces a derivation of $\mathcal{A%
}_{J_{G}^{k}(p)}-$modules, 
\begin{equation*}
X^{H}\colon \mathcal{A}_{J_{G}^{k}(p)}\to \mathcal{A}_{J_{G}^{k+1}(p)}.
\end{equation*}

Let $\Omega _{G}^{k}(J_{G}^{\infty }(p))$ be the space of graded
differential $k-$forms on $J_{G}^{\infty }(p)$. We denote by $%
H_{r}^{s}(J_{G}^{\infty }(p))$ the module of $(r+s)-$forms on $J_{G}^{\infty
}(p)$ that are $r-$times horizontal and $s-$times vertical; that is, such
that they vanish when acting on more than $s$ $p_{\infty }-$vertical vector
fields or more than $r$ $p_{\infty }-$horizontal vector fields.

Let $d^{G}$ be the exterior differential, and let 
\begin{align*}
& D\colon H_{r}^{s}(J_{G}^{\infty }(p)) \rightarrow
H_{r+1}^{s}(J_{G}^{\infty }(p)) \\
& \partial \colon H_{r}^{s}(J_{G}^{\infty }(p)) \rightarrow
H_{r}^{s+1}(J_{G}^{\infty }(p))
\end{align*}
be the horizontal and vertical differentials, respectively. We have 
\begin{align*}
d^{G}& =D+\partial , \\
D^{2}& =0, \\
\partial ^{2}& =0, \\
D\circ \partial +\partial \circ D& =0.
\end{align*}
We can make a local refinement of the bigrading above, which depends on the
chart chosen but we will make use of it only when computing in local
coordinates. Let $(W,\mathcal{A}_{J_{G}^{\infty }(p)}(W))$ be an open
coordinate domain in $J_{G}^{\infty }(p)$. Since 
\begin{equation*}
\left( \frac{\partial }{\partial x^{i}}, \frac{\partial }{\partial x^{-j}}
\right) _{1\leq i\leq m,1\leq j\leq n}
\end{equation*}
is a basis of vector fields for $(M,\mathcal{A})$, we can define $%
H_{r_{1},r_{2}}^{s}(W)$ to be the submodule of differential forms in $%
H_{r_{1}+r_{2}}^{s}(W)$ such that they vanish when acting on more than $%
r_{1} $ vector fields among the $\partial /\partial x^{i}$, or when acting
on more than $r_{2}$ vector fields among the $\partial /\partial x^{-j}$.
Therefore, 
\begin{equation*}
H_{r}^{s}(W)=\bigoplus\limits_{r_{1}+r_{2}=r}H_{r_{1},r_{2}}^{s}(W),
\end{equation*}
with projections $\pi _{r_{1},r_{2}}\colon H_{r}^{s}(W)\rightarrow
H_{r_{1},r_{2}}^{s}(W)$. Considering the action of $D$ on a fixed $%
H_{r_{1},r_{2}}^{s}(W)$, we define 
\begin{align*}
D_{0}& =\pi _{r_{1}+1,r_{2}}\circ D, \\
D_{1}& =D-D_{0}.
\end{align*}

\subsubsection{Infinitesimal contact transformations}

Let $p\colon (M,\mathcal{A})\rightarrow (N,\mathcal{B})$ be a graded
submersion. A homogeneous vector field $Y$ on $(J_{G}^{k}(p),\mathcal{A}%
_{J_{G}^{k}(p)})$ is said to be a $k-$order graded infinitesimal contact
transformation if an endomorphism $h$ of $\mathcal{A}_{J_{G}^{k}(p)} \otimes
_{\mathcal{B}}\mathrm{Der}_{\mathcal{A}}(\mathcal{B})$---considered as a
left $\mathcal{A}_{J_{G}^{k}(p)}-$module---exists such that, 
\begin{equation*}
\mathcal{L}_{Y}^{G}\theta ^{k}=h\circ \theta ^{k},
\end{equation*}
where $\theta ^{k}$ is the structure form (recall (\ref{structureform})).

\begin{theorem}[\protect\cite{Her-Mun 87}]
\label{gcit} Let $p\colon (N,\mathcal{B})\rightarrow (M,\mathcal{A})$ be a
graded submersion. For every graded vector field $X$ on $(N,\mathcal{B})$,
there exists a unique $k-$order graded infinitesimal contact transformation $%
X_{(k)}$ on $(J_{G}^{k}(p),\mathcal{A}_{J_{G}^{k}(p)})$ projecting onto $X$.
\end{theorem}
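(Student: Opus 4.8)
The plan is to establish existence and uniqueness separately, working first inside a fibred coordinate domain and then passing to the global statement by means of the uniqueness. Since $X$ need not be even, the classical device of prolonging the local flow of $X$ and differentiating is unavailable in the graded category; instead I would construct $X_{(k)}$ purely algebraically from the infinitesimal contact condition, treating it as a recursion on the jet order.

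For existence, fix a fibred coordinate system $(x^\alpha, y^\mu)$ as in \eqref{coordinates1} and write $X = X^\alpha \frac{\partial}{\partial x^\alpha} + X^\mu \frac{\partial}{\partial y^\mu}$ with $X^\alpha, X^\mu \in \mathcal{B}(V)$. I would look for the lift in the form
\[
X_{(k)} = X^\alpha \frac{\partial}{\partial x^\alpha} + \sum_{|P|\le k} (X_{(k)})^\mu_P \frac{\partial}{\partial y^\mu_P},
\]
with $(X_{(k)})^\mu_{0\emptyset} = X^\mu$, so that $p_{k0}$-projectability onto $X$ is built in from the start. The remaining coefficients $(X_{(k)})^\mu_P$ are then forced by imposing $\mathcal{L}^G_{X_{(k)}} \theta^k = h\circ\theta^k$, that is, by demanding that $\mathcal{L}^G_{X_{(k)}}\theta^\mu_{IA}$ lie in the sheaf generated by the contact forms $\theta^\nu_{JB}$ for every $\mu, I, A$. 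Using the Cartan formula $\mathcal{L}^G_{X_{(k)}} = \iota_{X_{(k)}} d^G + d^G \iota_{X_{(k)}}$ together with the explicit expression \eqref{contactforms} for $\theta^\mu_{IA}$, the contact requirement applied in increasing order of $|I|+|A|$ yields at each step a formula expressing the next coefficient as a horizontal derivative of the preceding one; schematically one obtains a recursion of the shape $(X_{(k)})^\mu_{\{h\}\star I, A} = \frac{d}{dx^h}\big((X_{(k)})^\mu_{IA}\big) + (\text{lower-order correction terms})$, and similarly in the odd directions $\frac{d}{dx^{-j}}$ with the appropriate sign $\varepsilon(j,A)$. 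Because each equation determines exactly one new coefficient in terms of data already fixed at lower order, the system has a unique solution, which proves local existence \emph{and} local uniqueness at once.

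The global statement then follows formally. On the overlap of two coordinate domains, both local solutions are $k$-order infinitesimal contact transformations projecting onto the same restriction of $X$, so by local uniqueness they coincide; hence the locally defined fields glue to a global $X_{(k)}$ on $(J_G^k(p), \mathcal{A}_{J_G^k(p)})$. For the uniqueness assertion, if $Y$ is any $k$-order contact transformation projecting onto $X$, then $Z = Y - X_{(k)}$ is $p_{k0}$-vertical and still satisfies $\mathcal{L}^G_Z \theta^k \equiv 0$ modulo contact forms; feeding this into the same recursion with vanishing order-zero data forces every component of $Z$ to vanish inductively, so $Y = X_{(k)}$.

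I expect the genuine work to be the graded sign bookkeeping rather than the conceptual scheme. Evaluating $\mathcal{L}^G_{X_{(k)}}\theta^\mu_{IA}$ requires commuting $\iota_{X_{(k)}}$ and $d^G$ past the odd factors $d^G x^{-j}$, and the juxtaposition $\star$ is noncommutative on negative multi-indices, so the parity factors $\varepsilon(j,A)$ must be tracked carefully in order to certify that the recursion is consistent---in particular, that building a multi-index $P$ by adjoining indices in different orders produces the same coefficient. Verifying this coherence, and checking that the resulting collection $\{\mathcal{L}^G_{X_{(k)}}\theta^\mu_{IA}\}$ is genuinely the image $h\circ\theta^k$ of a single left-module endomorphism $h$ of $\mathcal{A}_{J_G^k(p)}\otimes_{\mathcal{B}}\mathrm{Der}_{\mathcal{A}}(\mathcal{B})$, is the main obstacle; everything else reduces to the standard jet-prolongation argument transported to the graded setting.
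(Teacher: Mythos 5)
First, a point of reference: the paper does not prove this theorem---it is imported verbatim from \cite{Her-Mun 87}---so there is no in-paper argument to compare you against. Judged on its own, your scheme (force the coefficients of $X_{(k)}$ in a fibred chart by the contact condition, obtain a recursion in the jet order, deduce local existence and uniqueness, glue by uniqueness) is the standard prolongation argument and is the right approach in the graded category, where flows are unavailable for odd $X$. Two of the worries you raise at the end can in fact be discharged cheaply. Since the recursion produces $(X_{(k)})^{\mu}_{IA}\in\mathcal{A}_{J_{G}^{|I|+|A|}(p)}$, the $1$-form $\mathcal{L}^{G}_{X_{(k)}}\theta^{\mu}_{IA}$ contains no differentials $d^{G}y^{\nu}_{JB}$ with $|J|+|B|=k$; hence, once its $d^{G}x^{\alpha}$-components vanish, it lies automatically in the module generated by the $\theta^{\nu}_{JB}$, and the endomorphism $h$ is simply read off from those coefficients---there is nothing further to ``assemble''. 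And since the paper's definition of contact transformation requires homogeneity, one should split $X$ into its even and odd parts and extend by linearity; this is trivial but should be said.

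The genuine gap is precisely the step you name and then set aside: the \emph{consistency} of the recursion. You derive the formula for $(X_{(k)})^{\mu}_{\{h\}\star I,A}$ (and for $(X_{(k)})^{\mu}_{I,\{-j\}\star A}$, with the sign $\varepsilon(j,A)$) as a \emph{necessary} condition, once for every parent $(I,A)$ from which the target coordinate can be reached. A given $y^{\mu}_{P}$ is reached from several parents, by adjoining the indices of $P$ in different orders, and unless all the resulting prescriptions agree, no vector field satisfying them exists: ``existence and uniqueness at once'' collapses to uniqueness alone. This agreement is not automatic in the graded setting: the odd total derivatives anticommute, the fibre coordinates are labelled by strictly decreasing negative multi-indices, and the signs $\varepsilon(j,A)$ in \eqref{contactforms} encode exactly the reordering conventions. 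One must check that the expressions built along two different chains of adjunctions differ by precisely the sign relating the two labellings of $y^{\mu}_{P}$, i.e.\ that the recursion is compatible with the graded commutation relations of the operators $d/dx^{\alpha}$ and with the $\varepsilon(j,A)$ conventions. This is the graded analogue of the classical identity $[D_{i},D_{j}]=0$ and is the entire technical content of the theorem; your proposal stops exactly where it begins. Note that your uniqueness argument is unaffected (an overdetermined system still pins down its solutions), but without the coherence check the existence half is not established.
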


Moreover, for every $k>l$, the vector field $X_{(k)}$ projects onto $X_{(l)}$
via the natural map $p_{kl}\colon (J_G^k(p), \mathcal{A}_{J_G^k(p)}) \to
(J_G^l(p), \mathcal{A}_{J_G^l}(p))$.

\section{Berezinian sheaves}

\subsection{The Berezinian sheaf of a supermanifold}

The Berezinian sheaf is a geometrical object designed to make possible an
integration theory in supermanifolds, tailored to the needs coming from
Physics. A global description of it can be given as follows (see \cite%
{Her-Mun 85b, Mon 92a}, cf.\ \cite{Man 88}).

Let $(M,\mathcal{A})$ be a graded manifold, of dimension $(m|n)$, and let $%
\mathrm{P}^{k}(\mathcal{A})$ be the sheaf of graded differential operators
of $\mathcal{A}$ of order $k$. This is the submodule of $\mathop{\rm {}End}%
\nolimits(\mathcal{A})$ whose elements $P$ satisfy the following conditions: 
\begin{equation*}
\left[ \,\ldots \left[ \left[ P,a_{0} \right] ,a_{1} \right] , \dotsc ,
a_{k} \right] =0, \quad \forall a_{0},\dotsc ,a_{k}\in \mathcal{A}.
\end{equation*}
Here the element $a\in \mathcal{A}$ is identified with the endomorphism $%
b\mapsto ab$. The sheaf $\mathrm{P}^{k}(\mathcal{A})$ has two essentially
different structures of $\mathcal{A}-$module: For every $P\in \mathrm{P}^{k}(%
\mathcal{A})$ and every $a,b\in \mathcal{A}$,

\begin{enumerate}
\item the left structure is given by $(a\cdot P)(b)=a\cdot P(b)$,

\item and the right structure is given by $(P\cdot a)(b)=P(a\cdot b)$.
\end{enumerate}

This is important as the Berezinian sheaf is considered with its structure
of \emph{right} $\mathcal{A}-$module.

One has that if $(x^{i},x^{-j})$, $1\leq j\leq n$, $1\leq i\leq m$, are
supercoordinates for a splitting neighborhood $U\subset M$, $\mathrm{P}^{k}(%
\mathcal{A}(U))$ is a free module (for both structures, left and right) with
basis 
\begin{equation*}
\left. 
\begin{array}{r}
\left( \dfrac{\partial }{\partial x^{1}} \right) ^{\alpha _{1}} \circ \ldots
\circ \left( \dfrac{\partial }{\partial x^{m}} \right) ^{\alpha _{m}} \circ
\left( \dfrac{\partial }{\partial x^{-1}} \right) ^{\beta _{1}}\circ \ldots
\circ \left( \dfrac{\partial }{\partial x^{-n}} \right) ^{\beta _{n}},
\medskip \\ 
\alpha _{1} +\ldots +\alpha _{m} +\beta _{1} +\ldots +\beta _{n}\leq k.%
\end{array}
\right\}
\end{equation*}
Let us consider the sheaf $\mathrm{P}^{k}(\mathcal{A},\Omega _{G}^{m})
=\Omega _{G}^{m}(M)\otimes _{\mathcal{A}} \mathrm{P}^{k}(\mathcal{A}) $, of $%
m-$form valued differential operators on $\mathcal{A}$ of order $k$, and for
every open subset $U\subset M$, let $\mathrm{K}^{n}(U)$ be the set of
operators $P\in \mathrm{P}^{n}(\mathcal{A}(U),\Omega _{G}^{m}(U))$ such
that, for every $a\in \mathcal{A}(U)$ with compact support, there exists an
ordinary $(m-1)-$form of compact support, $\omega $, satisfying 
\begin{equation*}
\widetilde{P(a)}=d\omega .
\end{equation*}
The idea is to take the quotient of $\mathrm{P}^{n}(\mathcal{A},\Omega
_{G}^{m})$ by $\mathrm{K}^{n}$; in this way, when we later define the
integral operator, two sections differing in a total differential will be
regarded as equivalent (Stokes Theorem). Having this in mind, we observe
that $\mathrm{K}^{n}$ is a submodule of $\mathrm{P}^{n}(\mathcal{A},\Omega
_{G}^{m})$ for its \emph{right structure}, so we can take quotients and
obtain the following description of the Berezinian sheaf, $\mathop{\rm {}Ber}%
\nolimits(\mathcal{A})$: 
\begin{equation*}
\mathop{\rm {}Ber}\nolimits(\mathcal{A}) \simeq \mathrm{P}^{n}(\mathcal{A},
\Omega _{G}^{m})\diagup \mathrm{K}^{n}.
\end{equation*}
We write this as an equivalence because there are other definitions of the
Berezinian sheaf. For us, however, this is $\emph{the}$ definition.

According to this description, a local basis of $\mathop{\rm {}Ber}\nolimits(%
\mathcal{A})$ can be given explicitly: If $(x^i,x^{-j})$, $1\leq j\leq n$, $%
1\leq i\leq m $, are supercoordinates for a splitting neighborhood $U\subset
M$, the local sections of the Berezinian sheaf are written in the form 
\begin{equation}  \label{a.1}
\Gamma \left( U,\mathop{\rm {}Ber}\nolimits (\mathcal{A}) \right) =\left[
d^Gx^1\wedge \cdots \wedge d^Gx^m\otimes \frac{\partial }{\partial x^{-1}}
\circ \cdots \circ \frac{\partial }{\partial x^{-n}} \right] \cdot \mathcal{A%
}(U),
\end{equation}
where $[\cdot ]$ stands for the equivalence class modulo $\mathrm{K}^n$.

\subsection{Higher order Berezinian sheaf}

Let $p\colon (N,\mathcal{B})\rightarrow (M,\mathcal{A})$ be a graded
submersion. Given $P\in \mathrm{P}^{l}(\mathcal{A},\Omega _{G}^{m})$, let 
\begin{equation*}
P^{H}\colon \mathcal{A}_{J_{G}^{k}(p)}\rightarrow H_{m}^{0}(J_{G}^{k}(p))
\end{equation*}
be the first$-$order operator defined by the condition, 
\begin{equation*}
j^{k}(\sigma )^{\ast }P^{H}f=Pj^{k}(\sigma )^{\ast }f,
\end{equation*}
for every $f\in \mathcal{A}_{J_{G}^{k}(p)}$ and every local section $\sigma $
of $p$. We call $P^{H}$ the total or horizontal lift of $P$. Let us denote
by $\mathrm{P}H^{l}(\mathcal{A}_{k},H_{m}^{0})$ (resp.\ $\mathrm{K}H_{l}(%
\mathcal{A}_{k})$) the sheaf of those operators in 
\begin{equation*}
\mathrm{P}^{l} \left( \mathcal{A}_{J_{G}^{k}(p)},H_{m}^{0} \left(
J_{G}^{k}(p )\right) \right)
\end{equation*}
that are horizontal lifts of operators of $\mathrm{P}^{l}(\mathcal{A},\Omega
_{G}^{m})$ (resp.\ $\mathrm{K}_{l}(\mathcal{A})$). Then, the $k-$order
Berezinian sheaf is defined as 
\begin{equation*}
\mathop{\rm {}Ber}\nolimits^{k}(\mathcal{A}_{k}) =\frac{\mathrm{P}H^{n}(%
\mathcal{A}_{k},H_{m}^{0})} {\mathrm{K}H_{n}(\mathcal{A}_{k})} \otimes 
\mathcal{A}_{J_{G}^{k}(p)}.
\end{equation*}
According to this description, a local basis for $\mathop{\rm {}Ber}%
\nolimits^{k}(\mathcal{A}_{k})$ can be given explicitly: If $(x^{i},x^{-j})$%
, $1\leq i\leq m$, $1\leq j\leq n$, are the graded $\mathcal{A}-$coordinates
for the coordinate open domain $(U,\mathcal{A}(U))$ and $(V,\mathcal{B}(V))$
is a $\mathcal{B}-$coordinate open domain with a suitable $V\subseteq \tilde{%
p}^{-1}(U)$, then, if $W$ is an open subset in $J_{G}^{k}(p)$ such that $%
W\subseteq \tilde{p}_{k}^{-1}(U)$ we have 
\begin{equation*}
\Gamma \left( W,\mathop{\rm {}Ber}\nolimits^{k}(\mathcal{A}_{k}) \right) =%
\left[ d^{G}x^{1}\wedge \cdots \wedge d^{G}x^{m} \otimes \frac{d}{dx^{-1}}
\circ \ldots \circ \frac{d}{dx^{-n}} \right] \cdot \mathcal{A}%
_{J_{G}^{k}(p)}(W).
\end{equation*}

\subsection{The Berezin integral}

Given a supermanifold $(M,\mathcal{A})$, the Berezin integral can be defined
over the sections of the Berezinian sheaf with compact support, by means of
the formula 
\begin{equation}
\begin{array}{c}
\int_{\mathop{\rm {}Ber}\nolimits} \colon \Gamma _{U}^{c} (\mathop{\rm {}Ber}%
\nolimits(\mathcal{A})) \rightarrow \mathbb{R} \\ 
\lbrack P]\longmapsto \int_{U}\widetilde{P(1)}.%
\end{array}
\label{a.2}
\end{equation}
In this expression, $M$ is assumed to be oriented, and the right integral is
taken with respect to that orientation. In this sense, having a fixed volume
form on $M$ is not a loss of generality.

\subsubsection{An example}

\label{ex1} Let $(M,\mathcal{A}) =(\mathbb{R}^{m},C^{\infty }(\mathbb{R}%
^{m}) \otimes \Omega (\mathbb{R}^{n}))$ be the standard graded manifold. A
section of $\mathcal{A}$ is just a differential form $\rho=f_{I}(x^{1},%
\dotsc ,x^{m})x^{-I}$, $0\leq |I|\leq n$, where $(x^{i})$, $1\leq i\leq m$,
are the coordinates of $\mathbb{R}^{m}$ and we write $x^{-j}=dx^{j}$ for the
odd coordinates; thus 
\begin{equation*}
\rho =f_{0}+f_{j}x^{-j}+\ldots +f_{1\ldots n}x^{-1}\cdots x^{-n},
\end{equation*}
and we recover the formula for Berezin's expression common in Physics
textbooks (except for a global sign); i.e., \textquotedblleft to integrate
the component of highest odd degree \textquotedblright\ (see \cite{Ber 66}): 
\begin{multline*}
\int_{\mathop{\rm {}Ber}\nolimits} \left[ d^{G}x^{1}\wedge \cdots \wedge
d^{G}x^{m}\otimes \frac{\partial }{\partial x^{-1}} \circ \ldots \circ \frac{%
\partial }{\partial x^{-n}} \right] \cdot \rho \\
=(-1)^{\binom{n}{2}}\int_{\mathbb{R}^{m}} f_{1\ldots n}dx^{1}\cdots dx^{m}.
\end{multline*}

\subsubsection{Lie derivative on the Berezinian sheaf}

If $X$ is a graded vector field, it is possible to define the notion of
graded Lie derivative of sections of the Berezinian sheaf with respect to $X$%
. This is the mapping 
\begin{equation*}
\mathcal{L}_{X}^{G}\colon \Gamma \left( \mathop{\rm {}Ber}\nolimits(\mathcal{%
A}) \right) \rightarrow \mathrm{P}^{n+1} (\mathcal{A},\Omega
_{G}^{m})\diagup \mathrm{K}^{n+1} =\Gamma \left( \mathop{\rm {}Ber}\nolimits(%
\mathcal{A}) \right) ,
\end{equation*}
given by 
\begin{equation}
\mathcal{L}_{X}^{G} \left[ \eta ^{G}\otimes P \right] =(-1)^{|X||\eta
^{G}\otimes P|+1} \left[ \eta ^{G}\otimes P\circ X \right] ,  \label{a.3}
\end{equation}
for homogeneous $X$ and $\eta ^{G}\otimes P$.

This Lie derivative has the properties that one could expect:

\begin{enumerate}
\item For homogeneous $X\in \mathop{\rm {}Der}\nolimits (\mathcal{A})$, $\xi
\in \Gamma (\mathop{\rm {}Ber}\nolimits (\mathcal{A}))$ and $a\in\mathcal{A}$%
, 
\begin{equation*}
\mathcal{L}_X^G (\xi\cdot a) =\mathcal{L}_X^G(\xi ) \cdot a +(-1)^{|X||\xi
|}\xi \cdot X(a).
\end{equation*}

\item For homogeneous $X\in \mathop{\rm {}Der}\nolimits(\mathcal{A})$, $\xi
\in \Gamma (\mathop{\rm {}Ber}\nolimits(\mathcal{A}))$ and $a\in \mathcal{A}$%
, 
\begin{equation*}
\mathcal{L}_{a\cdot X}^{G}(\xi )=(-1)^{|a|(|X|+|\xi |)} \mathcal{L}%
_{X}^{G}(\xi \cdot a).
\end{equation*}

\item Given a system of supercoordinates $(x^i,x^{-j})$, $1\leq j\leq n$, $%
1\leq i\leq m$, if 
\begin{equation*}
\xi _{x^i,x^{-j}} =\left[ d^Gx^1\wedge \cdots \wedge d^Gx^m \otimes \frac{%
\partial }{\partial x^{-1}} \circ \cdots \circ \frac{\partial }{\partial
x^{-n}} \right]
\end{equation*}
is the local generator of the Berezinian sheaf, then 
\begin{align*}
\mathcal{L}_{ \frac{\partial } {\partial x^i} }^G \left( \xi _{x^i,x^{-j}}
\right) & =0, \\
\mathcal{L}_{ \frac{\partial } {\partial x^{-j}} }^G \left( \xi
_{x^i,x^{-j}} \right) & =0.
\end{align*}
\end{enumerate}

\subsubsection{Berezinian divergence\label{berdiv}}

We can now introduce the notion of Berezinian divergence. Let $(M,\mathcal{A}%
)$ be a graded manifold whose Berezinian sheaf is generated by a section $%
\xi $. The graded function $\mathop{\rm {}div}\nolimits_{B}^{\xi }(X)$
given---for homogeneous $X$---by the formula 
\begin{equation*}
\mathcal{L}_{X}^{G}(\xi ) =(-1)^{|X||\xi |}\xi \cdot \mathop{\rm {}div}
\nolimits_{B}^{\xi }(X)
\end{equation*}
(and extended by $\mathcal{A}-$linearity) is called the Berezinian
divergence of $X$ with respect to $\xi $. When there is no risk of
confusion, we simply write $\mathop{\rm {}div}\nolimits_{B}(X)$.

For example, if we consider the standard graded manifold of Example \ref{ex1}%
; i.e., $(M,\mathcal{A}) =(\mathbb{R}^{m},C^{\infty }(\mathbb{R}^{m})
\otimes \Omega (\mathbb{R}^{n}))$, then $\mathop{\rm {}Ber}\nolimits(%
\mathcal{A})$ is trivial and generated by 
\begin{equation*}
\xi =\left[ d^{G}x^{1}\wedge \cdots \wedge d^{G}x^{m}\otimes \frac{\partial 
}{\partial x^{-1}}\circ \ldots \circ \frac{\partial }{\partial x^{-n}} %
\right] ,
\end{equation*}
and the Berezinian divergence of a graded vector field $X=f_{i}\partial
/\partial x^{i}+g_{j}\partial /\partial x^{-j}$ with respect to $\xi $ is
given by 
\begin{equation}
\mathop{\rm {}div}\nolimits_{B}(X)=\sum\limits_{i=1}^{m} \frac{\partial f_{i}%
}{\partial x^{i}} +\sum\limits_{j=1}^{n}(-1)^{|g_{j}|} \frac{\partial g_{j}}{%
\partial x^{-j}}.  \label{a4}
\end{equation}

Having in mind the previous section, these notions can be carried over to
higher orders with the appropriate modifications.

\subsection{Graded and Berezinian Lagrangian densities\label{comparison}}

Let us introduce the notion of variational problems in terms of the
Berezinian sheaf.

A \emph{Berezinian Lagrangian density} of order $k$ for a graded submersion $%
p\colon (N,\mathcal{B})$$\to (M,\mathcal{A})$, is a section 
\begin{equation*}
\lbrack P^{H}]\cdot L\in \Gamma (\mathop{\rm {}Ber}\nolimits ^{k} (\mathcal{A%
}_{k})).
\end{equation*}
In particular, a first$-$order Berezinian Lagrangian density can locally be
written as $\xi \cdot L$, where 
\begin{equation*}
\xi =\left[ d^{G}x^{1} \wedge \cdots \wedge d^{G}x^{m}\otimes \frac{d}{%
dx^{-1}} \circ \ldots \circ \frac{d}{dx^{-n}} \right]
\end{equation*}
and $L\in \mathcal{A}_{J_{G}^{k}(p)}$ is an element of the ring of graded
functions on the graded bundle of $1-$jets $(J_{G}^{1}(p),\mathcal{A}%
_{J_{G}^{1}(p)})$. In this paper, we only consider first$-$order Berezinian
Lagrangian densities and assume that $M$ is oriented by an ordinary volume
form $\eta $.

The variation of the Berezinian functional associated to $\xi \cdot L$,
along a section $s$ of $p\colon (N,\mathcal{B})\rightarrow (M,\mathcal{A})$,
is the mapping 
\begin{equation*}
\begin{array}{llll}
\delta _{s}\mathrm{I}_{\mathop{\rm {}Ber}\nolimits}(L) \colon & \mathcal{V}%
^{c}(N) & \longrightarrow & \mathbb{R} \\ 
& Y & \longmapsto & \int_{\mathop{\rm {}Ber}\nolimits}(j^{1}s)^{\ast } (%
\mathcal{L}_{Y_{(1)}}^{G}(\xi \cdot L)),%
\end{array}%
\end{equation*}
where $\mathcal{V}^{c}(N)$ denotes the space of graded vector fields on $(N,%
\mathcal{B})$, which are vertical over $(M,\mathcal{A})$ and whose support
has compact image on $M$; $Y_{(1)}$ is the graded infinitesimal contact
transformation associated to $Y$, and $\mathcal{L}_{Y_{(1)}}^{G}(\xi \cdot
L) $ is defined by means of \eqref{a.3}, which makes sense as $Y$ is $p-$%
projectable. A section $s$ is called a Berezinian extremal if $\delta _{s}%
\mathrm{I}_{\mathop{\rm {}Ber}\nolimits}=0$.

Finally, we turn our attention to the relation between Berezinian and graded
variational problems. As we will see shortly, even restricting ourselves to
first$-$or\-der Berezinian Lagrangian densities we must consider higher$-$%
order graded Lagrangian ones.

A \emph{graded Lagrangian density} of order $k$ for a graded submersion $%
p\colon (N,\mathcal{B})\rightarrow (M,\mathcal{A})$ is a section 
\begin{equation*}
\eta ^{G}\cdot L\in \Omega _{G}^{m}(M) \otimes _{\mathcal{A}}\mathcal{A}%
_{J_{G}^{k}(p)},
\end{equation*}
where $(m|n)=\dim (M,\mathcal{A})$, $\eta ^{G}$ is a graded $m-$form on $(M,%
\mathcal{A})$, and $L$ is an element of the graded ring $\mathcal{A}%
_{J_{G}^{k}(p)}$, of graded functions on the graded $k-$jet bundle $%
J_{G}^{k}(p)$.

The variation of the functional associated to a graded $k-$order Lagrangian
density $\eta ^{G}\cdot L$ along a section $s$ of $p\colon (N,\mathcal{B}%
)\to (M,\mathcal{A})$, is the mapping \vskip5mm\hskip66mm 
\begin{picture}(40,20) 
\qbezier(0,10)(20,20)(40,10) \qbezier(40,10)(60,0)(80,10) 
\end{picture}
\vskip-11mm 
\begin{equation*}
\begin{array}{llll}
\delta _s \mathrm{I}_{ \func{grad} }^k(L) \colon & \mathcal{V}_G^c(N) & 
\longrightarrow & \mathbb{R} \\ 
& Y & \longmapsto & \int _M {\ \left( j^ks \right) ^\ast \left( \mathcal{L}%
_{Y_{(k)}}^G (\eta ^G\cdot L) \right) },%
\end{array}%
\end{equation*}
where $\mathcal{V}^{c}(N)$ is as before and $Y_{(k)}$ is the $k-$graded
infinitesimal contact transformation prolongation of $Y$.

Berezinian and graded variational problems are related through the following
result (usually known as the Comparison Theorem):

\begin{theorem}[\protect\cite{Her-Mun 87, Mon 92a}]
\label{comparisonth} Let $p\colon (N,\mathcal{B})\rightarrow (M,\mathcal{A})$
be a graded submersion, with $(m|n)=\dim (M,\mathcal{A})$. Every first$-$%
order Berezinian Lagrangian density $\xi \cdot L$ for $p$ is equivalent to a
graded Lagrangian density of order $n+1$ in the following sense: There
exists an element $L^{\prime }$ in the graded ring $\mathcal{A}%
_{J_{G}^{n+1}(p)}$ of the graded $(n+1)-$jet bundle $J_{G}^{n+1}(p)$ such
that the Berezinian variation of the functional associated to $\xi \cdot L$
equals the graded variation of the functional associated to $\eta ^{G}\cdot
L^{\prime } =d^{G}x^{1}\wedge \cdots \wedge d^{G}x^{m}\cdot L^{\prime }$;
that is, 
\begin{equation*}
\left( \delta _{s}\mathrm{I}_{\mathop{\rm {}Ber}\nolimits}(L) \right)
(Y)=\left( \delta _{s}\mathrm{I}_{\mathop{\rm {}grad}\nolimits}^{n+1}(L^{%
\prime }) \right) (Y),
\end{equation*}
for every section $s$ of $p$, and every graded $p-$vertical $Y\in \mathcal{V}%
^{c}(N,\mathcal{B})$.
\end{theorem}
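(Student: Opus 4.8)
The plan is to reduce the comparison of the two \emph{variations} to a comparison of the two \emph{functionals}, and then to exhibit $L'$ explicitly as an iterated odd total derivative of $L$. First I would fix a fibred coordinate system as in \eqref{coordinates1} and use the definition \eqref{a.2} of the Berezin integral together with the local form of the first-order generator $\xi = [\,d^{G}x^{1}\wedge\cdots\wedge d^{G}x^{m}\otimes \frac{d}{dx^{-1}}\circ\cdots\circ\frac{d}{dx^{-n}}\,]$ to compute the value of the Berezinian functional on a section $s$ as $\int_{\mathrm{Ber}}(j^{1}s)^{\ast}(\xi\cdot L)=\int_{M}\widetilde{(j^{n+1}s)^{\ast}\left(\frac{d}{dx^{-1}}\circ\cdots\circ\frac{d}{dx^{-n}}\,L\right)}\,\eta$. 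The crucial input is the intertwining property of horizontal lifts, $j^{k}(\sigma)^{\ast}\circ P^{H}=P\circ j^{k}(\sigma)^{\ast}$, applied to $P=\partial/\partial x^{-j}$: it turns each total derivative $d/dx^{-j}$ into an honest odd derivative along the section and raises the jet order by one, so applying $n$ of them to the first-order $L$ yields a function of order $n+1$. This dictates the choice
$$
L' = \frac{d}{dx^{-1}}\circ\cdots\circ\frac{d}{dx^{-n}}\,L \in \mathcal{A}_{J_{G}^{n+1}(p)}
$$
(up to the sign $(-1)^{\binom{n}{2}}$ of Example \ref{ex1}), for which $\int_{\mathrm{Ber}}(j^{1}s)^{\ast}(\xi\cdot L)=\int_{M}(j^{n+1}s)^{\ast}(\eta^{G}\cdot L')$ for every section $s$.

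With the two functionals shown to agree on every section, I would deduce the equality of their first variations. Writing $\Phi_{t}$ for the prolonged flow of the $p$-vertical, compactly supported field $Y$, the standard relation between the graded Lie derivative and the derivative of a functional along a flow gives $\left(\delta_{s}\mathrm{I}_{\mathrm{Ber}}(L)\right)(Y)=\frac{d}{dt}\big|_{0}\,\mathrm{I}_{\mathrm{Ber}}(L)(\Phi_{t}\circ s)$ and $\left(\delta_{s}\mathrm{I}_{\mathrm{grad}}^{n+1}(L')\right)(Y)=\frac{d}{dt}\big|_{0}\,\mathrm{I}_{\mathrm{grad}}^{n+1}(L')(\Phi_{t}\circ s)$, the left-hand sides using $Y_{(1)}$ and $Y_{(n+1)}$ respectively. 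These are prolongations of one and the same $Y$ and are mutually compatible by the remark following Theorem \ref{gcit}, so both $t$-derivatives are taken along the \emph{same} family $\Phi_{t}\circ s$; since the functionals coincide on each member of this family, the two derivatives coincide, which is the assertion.

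The direct, infinitesimal argument, which is what makes the flow reasoning rigorous, runs in parallel. I would expand $\mathcal{L}_{Y_{(1)}}^{G}(\xi\cdot L)$ by the Leibniz rule (property 1 of the Berezinian Lie derivative) and \eqref{a.3}, pull back by $j^{1}s$, and integrate, then integrate by parts the odd total derivatives carried by $\xi$. This is precisely where the quotient by $\mathrm{K}^{n}$ (Stokes' theorem), built into the definition of the Berezinian sheaf, is used: the boundary terms are ordinary total differentials $d\omega$ of compact support and drop out. Matching the remaining integrand with $(j^{n+1}s)^{\ast}(\mathcal{L}_{Y_{(n+1)}}^{G}(\eta^{G}\cdot L'))$ term by term completes the identification.

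The main obstacle is twofold. First, the sign bookkeeping in the graded higher-order setting: every transposition of an odd total derivative past $Y_{(1)}$ and every integration by parts introduces Koszul signs, and one must verify that the global factor matches the $(-1)^{\binom{n}{2}}$ of the Berezin integral and the parities recorded after \eqref{coordinates1}. Second, and more delicate, although $L'=\frac{d}{dx^{-1}}\circ\cdots\circ\frac{d}{dx^{-n}}\,L$ is a sound local recipe, the generator $\xi$ and the operator defining it are only local, so one must check that the $L'$ produced in overlapping charts agree \emph{up to a horizontal total divergence}; since such divergences integrate to zero against compactly supported vertical variations, this ambiguity affects neither functional nor variation, and a global $L'$ inducing the required graded density can then be assembled. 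It is this patching, rather than the local computation, that I expect to be the genuine difficulty.
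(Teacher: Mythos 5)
The paper does not actually prove Theorem \ref{comparisonth}: it is imported verbatim from \cite{Her-Mun 87, Mon 92a}, so there is no internal proof to measure you against. That said, your choice of $L'=\tfrac{d}{dx^{-1}}\circ\cdots\circ\tfrac{d}{dx^{-n}}L$ is exactly the $L'$ the paper itself uses downstream (the density $\lambda_{\xi_L}$ of Section \ref{equiv}), and your computation of the Berezinian functional via \eqref{a.2} together with the intertwining property $j^{k}(\sigma)^{\ast}\circ P^{H}=P\circ j^{k}(\sigma)^{\ast}$ of horizontal lifts is the standard route of the cited references. Your worry about patching is also largely dissolved by the paper's own machinery: the operator $\tfrac{d}{dx^{-1}}\circ\cdots\circ\tfrac{d}{dx^{-n}}$ is the horizontal lift of the operator $P$ representing $\xi$, and the higher-order Berezinian sheaf is defined as a quotient by $\mathrm{K}H_{n}$, so only the class of $L'$ modulo horizontal divergences is intrinsic --- which, as you observe, is exactly enough for the variational statement once the volume form $\eta$ is fixed.

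The one step you should not lean on is the flow argument. For a $p$-vertical $Y$ with nonzero odd component there is no one-parameter family $\Phi_{t}$ of supermanifold morphisms whose derivative is $Y_{(k)}$, and the paper's variations are \emph{defined} directly as $\int(j^{k}s)^{\ast}\bigl(\mathcal{L}^{G}_{Y_{(k)}}(\cdot)\bigr)$, not as $t$-derivatives along deformed sections. Consequently ``the two functionals agree on every section, hence their variations agree'' is not a valid inference here: equality on honest sections sees nothing of the odd directions of variation. The burden of proof therefore falls entirely on your third paragraph --- the direct infinitesimal computation. Its essential content is the commutation, up to sign and up to horizontal divergences, of the prolongations $Y_{(1)}$, $Y_{(n+1)}$ (compatible by the remark after Theorem \ref{gcit}) with the odd total derivatives $d/dx^{-j}$, so that $\mathcal{L}^{G}_{Y_{(n+1)}}\bigl(\eta^{G}\cdot L'\bigr)$ and $\tfrac{d}{dx^{-1}}\circ\cdots\circ\tfrac{d}{dx^{-n}}$ applied to $\mathcal{L}^{G}_{Y_{(1)}}(\xi\cdot L)$ differ by terms killed by $(j^{n+1}s)^{\ast}$ and Stokes. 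You name this computation and the Koszul sign bookkeeping (including the $(-1)^{\binom{n}{2}}$ of Example \ref{ex1}) but do not carry it out; that, not the globalization, is the actual proof, and as presented it remains a sketch rather than a complete argument.
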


\section{The $\mathcal{J}_{k}$ operators\label{jotas}}

As stated in the Introduction, our intention is to study the Cartan
formalism for variational problems and in this formalism a central object is
the so called Cartan form for field theory, denoted $\Theta _{0}^{L}$ and
locally given by 
\begin{eqnarray}
\Theta _{0}^{L} &=&\sum\limits_{i=1}^{m}\sum\limits_{\beta =-s}^{r}
(-1)^{m+i}d^{G}x^{1}\wedge \ldots \wedge \widehat{d^{G}x^{i}} \wedge \ldots
\wedge d^{G}x^{m}  \label{cartanlocal} \\
&&\wedge \left( d^{G}y^{\mu } -\sum\limits_{\alpha =-n}^{m}d^{G}x^{\alpha
}\cdot y_{\alpha }^{\mu } \right) \frac{\partial L}{\partial y_{i}^{\mu }}
+\eta ^{G}\cdot L.  \notag
\end{eqnarray}

We will provide an intrinsic construction of $\Theta _{0}^{L}$ and we will
develop from it a consistent theory of the first$-$order calculus of
variations on supermanifolds. The idea is the same as those used in the
formulation of mechanics (see \cite{God 69, Sau 89}), but with some new
details that arise because this time we deal with fields (for an interesting
discussion of the classical formalism in this case see also \cite{GMS 97}
and \cite{GMS 98}); let us describe it very briefly.

The graded generalization of the vertical endomorphism of the tangent bundle
used in classical mechanics would be (unlike the case of mechanics, note
that $\mathcal{\tilde{J}}$ is \emph{not} an endomorphism here): 
\begin{equation*}
\mathcal{\tilde{J}}\doteq \sum\limits_{i=1}^{m} \sum \limits_{\beta =-s}^{r}
(-1)^{m+i}d^{G}x^{1}\wedge \ldots \wedge \widehat{d^{G}x^{i}} \wedge \ldots
\wedge d^{G}x^{m}\wedge d^{G}y^{\mu } \otimes \frac{\partial }{\partial
y_{i}^{\mu }}.
\end{equation*}
Also, for each $\alpha \in \{-n,\dotsc ,-1,1,\dotsc ,m\}$, $i\in \{1,\dotsc
,m\}$, the graded analogue of the Liouville vector field would be 
\begin{equation*}
\Delta _{\alpha i} =\sum\limits_{\beta =-s}^{r}(-1)^{m+i} y_{\alpha }^{\mu } 
\frac{\partial }{\partial y_{i}^{\mu }},
\end{equation*}
and finally (by using Einstein's convention, from now on we omit the
summation symbols), 
\begin{equation*}
\mathcal{J}\doteq \mathcal{\tilde{J}} -d^{G}x^{1}\wedge \cdots \wedge 
\widehat{d^{G}x^{i}} \wedge \cdots \wedge d^{G}x^{m}\wedge d^{G}x^{\alpha }
\otimes \Delta _{\alpha i}.
\end{equation*}

Let us evaluate $\mathcal{L}_{\mathcal{J}}^{G}(L)$. It will be useful to
bear in mind the developed expression for $\mathcal{J}$: 
\begin{equation}
\mathcal{J}=(-1)^{m+i}d^{G}x^{1}\wedge \cdots \wedge \widehat{d^{G}x^{i}}%
\wedge \cdots \wedge d^{G}x^{m}\wedge \left( d^{G}y^{\mu }-d^{G}x^{\alpha
}\cdot y_{\alpha }^{\mu }\right) \otimes \frac{\partial }{\partial
y_{i}^{\mu }}.  \label{jotacal}
\end{equation}%
Note that 
\begin{align*}
\iota _{\frac{\partial }{\partial y_{i}^{\mu }}}(d^{G}L)& =\iota _{\frac{%
\partial }{\partial y_{i}^{\mu }}}\left( d^{G}x^{\alpha }\cdot \frac{dL}{%
dx^{\alpha }}+d^{G}y^{\nu }\cdot \frac{\partial L}{\partial y^{\nu }}%
+d^{G}y_{\alpha }^{\nu }\cdot \frac{\partial L}{\partial y_{\alpha }^{\nu }}%
\right)  \\
& =\frac{\partial L}{\partial y_{i}^{\nu }}.
\end{align*}%
We then have 
\begin{align*}
\mathcal{L}_{\mathcal{J}}^{G}(L)& =\iota _{\mathcal{J}}(d^{G}L) \\
& =(-1)^{m+i}d^{G}x^{1}\wedge \cdots \wedge \widehat{d^{G}x^{i}}\wedge
\cdots \wedge d^{G}x^{m} \\
& \qquad \qquad \qquad \qquad \qquad \wedge \left( d^{G}y^{\mu
}-d^{G}x^{\alpha }\cdot y_{\alpha }^{\mu }\right) \cdot \iota _{\frac{%
\partial }{\partial y_{i}^{\mu }}}(d^{G}L) \\
& =(-1)^{m+i}d^{G}x^{1}\wedge \cdots \wedge \widehat{d^{G}x^{i}}\wedge
\cdots \wedge d^{G}x^{m} \\
& \qquad \qquad \qquad \qquad \qquad \wedge (d^{G}y^{\mu }-d^{G}x^{\alpha
}\cdot y_{\alpha }^{\mu })\cdot \frac{\partial L}{\partial y_{i}^{\mu }} \\
& =\Theta _{0}^{L}-d^{G}x^{1}\wedge \cdots \wedge d^{G}x^{m}\cdot L,
\end{align*}%
so that 
\begin{equation*}
\Theta _{0}^{L}=\mathcal{L}_{\mathcal{J}}^{G}(L)+\eta ^{G}\cdot L.
\end{equation*}%
Thus, to have $\Theta _{0}^{L}$ intrinsically defined, there remains to
prove that this is the case for $\mathcal{J}$. Notice that $\mathcal{J}$ is
the graded analogue of the $(1,m)-$tensor field $S_{\eta }$ that appears in 
\cite{Sau 89}\ (for arbitrary $m$, see pp.\ 156$-$158). We will now study
the intrinsic construction of these objects in the graded context, but the
generalization is not straightforward, as the classical point constructions
are not applicable now.

\subsection{Algebraic preliminaries\label{preliminaries}}

Let $p\colon (N,\mathcal{B})\rightarrow (M,\mathcal{A})$ be a graded
submersion. Consider the cotangent supervector bundle $\mathcal{ST}^{\ast}(M,%
\mathcal{A})\rightarrow (M,\mathcal{A})$, and its pull-back $p^{\ast }%
\mathcal{ST}^{\ast }(M,\mathcal{A})$ to $(N,\mathcal{B})$. Furthermore, let $%
\mathcal{V}(p)\subset \mathcal{ST}(N,\mathcal{B})$ be the vertical
sub-bundle of $p$. This is the supervector bundle on $(N,\mathcal{B})$
defined by the short exact sequence 
\begin{equation}
0\rightarrow \mathcal{V}(p) \rightarrow \mathcal{ST}( N,\mathcal{B})\overset{%
p_{\ast }}{\longrightarrow }p^{\ast } \mathcal{ST}(M,\mathcal{A})
\rightarrow 0.  \label{eq5_4}
\end{equation}
We can thus construct the tensor product supervector bundle 
\begin{equation*}
\pi \colon p^{\ast }\mathcal{ST}^{\ast } (M,\mathcal{A})\otimes \mathcal{V}%
(p) \rightarrow J_{G}^{0}(p)\simeq (N,\mathcal{B}).
\end{equation*}
From a homological point of view, we have a natural identification 
\begin{equation*}
p^{\ast }\mathcal{ST}^{\ast }(M,\mathcal{A}) \otimes \mathcal{V}(p)\simeq %
\mathop{\rm {}Hom}\nolimits \left( p^{\ast }\mathcal{ST}(M,\mathcal{A}), 
\mathcal{V}(p) \right) ,
\end{equation*}
and whithin this algebraic setting, we can obtain a representation for $%
J_{G}^{1}(p)$ by considering the short exact sequence \eqref{eq5_4} and
thinking of $J_{G}^{1}(p)$ as being the space of its splittings.

\begin{proposition}
\label{identification} Let $p\colon (N,\mathcal{B})\rightarrow (M,\mathcal{A}%
)$ be a graded submersion with dimensions $(m|n)=\dim (M,\mathcal{A})$, $%
(m+r|n+s)=\dim (N,\mathcal{B})$. A unique isomorphism 
\begin{equation}
p_{10}^{\ast } \left( p^{\ast }\mathcal{ST}^{\ast }(M,\mathcal{A})\otimes 
\mathcal{V}(p) \right) \simeq \mathcal{V}(p_{10}),  \label{isomorphism}
\end{equation}
exists, which is given by 
\begin{equation}
d^{G}x^{i}\otimes \frac{\partial }{\partial y^{\mu }} \mapsto \frac{\partial 
}{\partial y_{i}^{\mu }},  \label{coordinates2}
\end{equation}
on every fibred coordinate system \eqref{coordinates1}.
\end{proposition}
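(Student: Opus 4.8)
The plan is to prove the statement at the level of locally free sheaves of graded $\mathcal{A}_{J_{G}^{1}(p)}$-modules and only at the very end transport it to supervector bundles via the functorial correspondence of \cite{San 86} recalled in Section~\ref{previous}; this sidesteps the classical pointwise/affine constructions, which (as noted above) are not available here. Both sides of \eqref{isomorphism} are locally free of rank $(mr+ns\,|\,ms+nr)$: on a fibred chart \eqref{coordinates1} the left-hand sheaf is freely generated over $\mathcal{A}_{J_{G}^{1}(p)}$ by the sections $d^{G}x^{\alpha}\otimes\partial/\partial y^{\mu}$, while $\mathcal{V}(p_{10})$ is freely generated by the vertical fields $\partial/\partial y_{\alpha}^{\mu}$, and the parity rule $|y_{\alpha}^{\mu}|=|x^{\alpha}|+|y^{\mu}|$ shows that corresponding generators have equal parity. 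Hence the prescription \eqref{coordinates2}, read for every base index $\alpha$ (even and odd), carries a homogeneous local basis onto a homogeneous local basis and therefore defines an even $\mathcal{A}_{J_{G}^{1}(p)}$-linear isomorphism over each fibred chart. The real content of the proposition is thus that this local assignment is \emph{intrinsic}, so that the charted isomorphisms glue to a global one.

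First I would fix an admissible change of fibred coordinates for $p$, namely $\bar{x}^{\beta}=\bar{x}^{\beta}(x)$ (a supercoordinate change on $(M,\mathcal{A})$ alone) together with $\bar{y}^{\nu}=\bar{y}^{\nu}(x,y)$, and then compute the two transformation laws that must be matched. From $(j^{1}\sigma)^{\ast}y_{\alpha}^{\mu}=(\partial/\partial x^{\alpha})(\sigma^{\ast}y^{\mu})$ and the chain rule one gets the affine law $\bar{y}_{\beta}^{\nu}=(\partial x^{\alpha}/\partial\bar{x}^{\beta})\big(\partial\bar{y}^{\nu}/\partial x^{\alpha}+y_{\alpha}^{\mu}\,\partial\bar{y}^{\nu}/\partial y^{\mu}\big)$ (Koszul signs understood), and since the fibre variables $y_{\alpha}^{\mu}$ enter $\bar{y}_{\beta}^{\nu}$ only through this linear term, differentiation gives
\[
\frac{\partial}{\partial y_{\alpha}^{\mu}}=\frac{\partial x^{\alpha}}{\partial\bar{x}^{\beta}}\,\frac{\partial\bar{y}^{\nu}}{\partial y^{\mu}}\,\frac{\partial}{\partial\bar{y}_{\beta}^{\nu}}.
\]
On the other side, $d^{G}x^{\alpha}=(\partial x^{\alpha}/\partial\bar{x}^{\beta})\,d^{G}\bar{x}^{\beta}$ in $p^{\ast}\mathcal{ST}^{\ast}(M,\mathcal{A})$ and $\partial/\partial y^{\mu}=(\partial\bar{y}^{\nu}/\partial y^{\mu})\,\partial/\partial\bar{y}^{\nu}$ in $\mathcal{V}(p)$ (the $\bar{x}$-component dropping out since $\bar{x}$ is independent of $y$), whence
\[
d^{G}x^{\alpha}\otimes\frac{\partial}{\partial y^{\mu}}=\frac{\partial x^{\alpha}}{\partial\bar{x}^{\beta}}\,\frac{\partial\bar{y}^{\nu}}{\partial y^{\mu}}\;d^{G}\bar{x}^{\beta}\otimes\frac{\partial}{\partial\bar{y}^{\nu}}.
\]
Applying the barred form of \eqref{coordinates2} to the right-hand side reproduces exactly the transformation law of $\partial/\partial y_{\alpha}^{\mu}$ displayed above, which is what is needed for the local isomorphisms to agree on overlaps.

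The hard part will be precisely this last comparison in the graded category: commuting the possibly odd coefficients $\partial x^{\alpha}/\partial\bar{x}^{\beta}$ and $\partial\bar{y}^{\nu}/\partial y^{\mu}$ past the tensor symbol, past $d^{G}$, and past $\partial/\partial y_{\alpha}^{\mu}$ produces Koszul signs, and one must verify that the signs accumulated in the two laws genuinely coincide rather than assuming it by analogy with the ungraded case; keeping track of the left- versus right-module conventions for $\mathop{\rm {}Der}\nolimits\mathcal{A}$ is where the bookkeeping has to be done carefully. Once intrinsicality is secured, uniqueness is immediate: the sections $d^{G}x^{\alpha}\otimes\partial/\partial y^{\mu}$ generate the left-hand sheaf over $\mathcal{A}_{J_{G}^{1}(p)}$, so any $\mathcal{A}_{J_{G}^{1}(p)}$-linear morphism is determined by its values on them, and \eqref{coordinates2} pins those values down. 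Finally, invoking the one-to-one correspondence between isomorphism classes of locally free sheaves of graded $\mathcal{A}$-modules and supervector bundles (\cite{San 86}, cf.\ Section~\ref{previous}), the glued sheaf isomorphism yields the asserted bundle isomorphism \eqref{isomorphism}.
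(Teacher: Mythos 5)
Your strategy is exactly the paper's: the formula \eqref{coordinates2} determines the morphism on a chart, so the whole content is checking that it is compatible with a change of fibred coordinates, which you reduce to matching the transformation law of $\partial/\partial y_{\alpha}^{\mu}$ (obtained by dualizing the affine law for $\bar{y}_{\beta}^{\nu}$) against that of $d^{G}x^{\alpha}\otimes\partial/\partial y^{\mu}$. The one thing you have deferred --- ``one must verify that the signs accumulated in the two laws genuinely coincide'' --- is not a side issue: it is the entire body of the paper's proof, which computes $\partial\bar{y}_{\beta}^{\nu}/\partial y_{\alpha}^{\mu}=(-1)^{\alpha(\mu+\nu+\alpha+\beta)}(\partial\bar{y}^{\nu}/\partial y^{\mu})(\partial x^{\alpha}/\partial\bar{x}^{\beta})$, dualizes to get
\begin{equation*}
\frac{\partial }{\partial \bar{y}_{\sigma }^{\rho }}=(-1)^{\alpha (\mu +\rho +\alpha +\sigma )}\,\frac{\partial \bar{x}^{\sigma }}{\partial x^{\alpha }}\,\frac{\partial y^{\mu }}{\partial \bar{y}^{\rho }}\,\frac{\partial }{\partial y_{\alpha }^{\mu }},
\end{equation*}
and then checks that reordering the coefficients in $d^{G}\bar{x}^{\sigma }\otimes \partial /\partial \bar{y}^{\rho }=d^{G}x^{\alpha }\,(\partial \bar{x}^{\sigma }/\partial x^{\alpha })\otimes (\partial y^{\mu }/\partial \bar{y}^{\rho })\,\partial /\partial y^{\mu }$ produces the identical Koszul factor $(-1)^{\alpha (\alpha +\sigma +\rho +\mu )}$. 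So the signs do coincide and your plan goes through, but as written the proposal stops short of performing the verification on which the proposition actually rests; to be a complete proof you must carry out that sign computation (or an equivalent one) explicitly rather than record it as the remaining hard part.
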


\begin{proof}
As the formula
\eqref{coordinates2}
completely determines the
isomorphism \eqref{isomorphism},
we need only to prove that
the isomorphism is independent
of the fibred coordinate system
chosen. This reduces to compute
how the tensor fields in the
formula \eqref{coordinates2}
transform under a change of
fibred coordinates, from
\begin{equation}
\left.
\begin{array}
[c]{ll}
\left(
x^\alpha
\right) ,
&
\alpha =-n,\dotsc,-1,1,\dotsc,m,\\
\left(
y^\mu
\right) ,
&
\mu
=-s,\dotsc,-1,1,\dotsc,r,
\end{array}
\right\}
\label{coordinates3}
\end{equation}
to
\begin{equation}
\left.
\begin{array}
[c]{ll}
\left(
\bar{x}^\beta
\right) ,
&
\alpha =-n,\dotsc,-1,1,\dotsc,m,\\
\left(
\bar{y}^\nu
\right) ,
&
\nu
=-s,\dotsc,-1,1,\dotsc,r,
\end{array}
\right\}
\label{coordinates4}
\end{equation}
and the corresponding change
in $J_G^1(p).$
From the very definition of
$y_\gamma ^\rho $ as a
coordinate in $J_G^1(p)$
(e.g.\ see
\cite[Section 1]{Her-Mun 85a}),
we can compute
\begin{eqnarray*}
d^G\bar{y}^\nu
\otimes
\frac{\partial }
{\partial \bar{x}^\beta }
&=&
\left(
d^Gx^\alpha
\frac{\partial \bar{y}^\nu }
{\partial x^\alpha }
+d^Gy^\mu
\frac{\partial \bar{y}^\nu }
{\partial y^\mu }
\right)
\otimes
\frac{\partial x^\sigma }
{\partial \bar{x}^\beta }
\frac{\partial }
{\partial x^\sigma } \\
&=&
(-1)^{
\alpha (\alpha +\nu +\sigma +\beta )
}
\frac{\partial \bar{y}^\nu }
{\partial x^\alpha }
\frac{\partial x^\sigma }
{\partial \bar{x}^\beta }
d^Gx^\alpha
\otimes
\frac{\partial }
{\partial x^\sigma } \\
&&
+(-1)^{
\mu (\mu +\nu +\sigma +\beta )
}
\frac{\partial \bar{y}^\nu }
{\partial y^\mu }
\frac{\partial x^\sigma }
{\partial \bar{x}^\beta }
d^Gy^\mu
\otimes
\frac{\partial }
{\partial x^\sigma }.
\end{eqnarray*}
By passing to coordinates in
$J_G^1(p)$ this tells us
the following:
\begin{equation}
\bar{y}_{\beta }^{\nu }
=(-1)^{\alpha (\nu +\beta )}
\frac{\partial \bar{y}^\nu }
{\partial x^\alpha }
\frac{\partial x^\alpha }
{\partial \bar{x}^\beta }
+(-1)^{
\mu
(\mu +\nu +\sigma +\beta )
}
\frac{\partial \bar{y}^\nu }
{\partial y^{\mu }}
\frac{\partial x^\sigma }
{\partial \bar{x}^\beta }
y_{\sigma }^{\mu }.
\label{ybar}
\end{equation}
With this expression in mind,
we are going to compute
the graded $1$-form
$d^G\bar{y}_\beta ^\nu $.
Initially, we should have
\begin{equation*}
d^G\bar{y}_\beta ^\nu
=d^Gx^\gamma
\frac{\partial
\bar{y}_\beta ^\nu }
{\partial x^\gamma }
+d^Gy^\mu
\frac{\partial
\bar{y}_\beta ^\nu }
{\partial y^\mu }
+d^Gy_\alpha ^\mu
\frac{\partial
\bar{y}_\beta ^\nu }
{\partial y_\alpha ^\mu },
\end{equation*}
so that we should consider
each term separately, but,
in fact, as we will compute
$\partial /
\partial \bar{y}_\beta ^\nu $
by applying duality, we need only
to compute the coefficient of
$d^Gy_\alpha ^\mu $,
which is given by \eqref{ybar}:
\begin{eqnarray*}
\frac{\partial
\bar{y}_\beta ^\nu }
{\partial
y_\alpha ^\mu }
&=&
\frac{\partial }
{\partial y_\alpha ^\mu }
\left\{
(-1)^{
\rho (\rho +\nu +\sigma +\beta )
}
\frac{\partial \bar{y}^\nu }
{\partial y^\rho }
\frac{\partial x^\sigma }
{\partial \bar{x}^\beta }
y_\sigma ^\rho
\right\}
\\
&=&
(-1)^{
\rho
(\rho +\nu +\sigma +\beta )
+(\mu +\alpha )
(\rho +\nu +\sigma +\beta )
}
\frac{\partial \bar{y}^\nu }
{\partial y^\rho }
\frac{\partial x^\sigma }
{\partial \bar{x}^\beta }
\delta _\mu ^\rho
\delta _\sigma ^\alpha \\
&=&
(-1)^{
\alpha
(\mu +\nu +\alpha +\beta )
}
\frac{\partial \bar{y}^\nu }
{\partial y^\mu }
\frac{\partial x^\alpha }
{\partial \bar{x}^\beta }.
\end{eqnarray*}
We can then write
\begin{equation}
d^G\bar{y}_\beta ^\nu
=d^Gx^\gamma
A_{\beta \gamma }^\nu
+d^Gy^\mu
B_{\beta \mu }^\nu
+(-1)^{
\alpha
(\mu +\nu +\alpha +\beta )
}
d^Gy_\alpha ^\mu
\frac{\partial \bar{y}^\nu }
{\partial y^\mu }
\frac{\partial x^\alpha }
{\partial \bar{x}^\beta },
\label{dybar}
\end{equation}
where
$A_{\beta \gamma }^\nu $,
$B_{\beta \mu }^\nu $
are coefficients whose explicit
expression is not needed.

We also remark
\begin{equation}
\left\{
\begin{array}{l}
d^G\bar{y}^\nu
=d^Gx^\alpha
\dfrac{\partial \bar{y}^\nu }
{\partial x^\alpha }
+d^Gy^\mu
\dfrac{\partial \bar{y}^\nu }
{\partial y^\mu },
\medskip \\
d^G\bar{x}^\beta
=d^Gx^\alpha
\dfrac{\partial \bar{x}^\beta }
{\partial x^\alpha }.
\end{array}
\right.
\label{dyandx}
\end{equation}
Next, we consider
the expression for
$\partial /
\partial \bar{y}_\gamma ^\mu $
as a tangent vector on $J_G^1(p)$.
Initially, we should have
\begin{equation*}
\frac{\partial }
{\partial \bar{y}_\gamma ^\mu }
=K^\sigma
\frac{\partial }
{\partial x^\sigma }
+L^\eta
\frac{\partial }
{\partial y^\eta }
+P_{\sigma \gamma }^{\rho \tau }
\frac{\partial }
{\partial y_\gamma ^\tau },
\end{equation*}
and we can compute the action
of the basic differentials
\eqref{dybar}, \eqref{dyandx}
on it.
This gives
\begin{align*}
0
&
=\left\langle
\dfrac{\partial }
{\partial
\bar{y}_\gamma ^\mu };
d^G\bar{x}^\beta
\right\rangle \\
&
=K^{\sigma}
\dfrac{\partial
\bar{x}^\beta }
{\partial x^\sigma },
\medskip \\
0
&
=\left\langle
\dfrac{\partial }
{\partial
\bar{y}_\gamma ^\mu };
d^G\bar{y}^\nu
\right\rangle \\
&
=L^\eta
\dfrac{\partial \bar{y}^\nu }
{\partial y^\eta },
\medskip \\
\delta _\nu ^\rho
\delta _\sigma ^\beta
&
=\left\langle
\dfrac{\partial }
{\partial
\bar{y}_\gamma ^\mu };
d^G\bar{y}_\beta ^\nu
\right\rangle \\
&
=(-1)^{
\alpha
(\mu +\nu +\alpha +\beta )
}
P_{\sigma \alpha }^{\rho \mu }
\dfrac{\partial \bar{y}^\nu }
{\partial y^\mu }
\dfrac{\partial x^\alpha }
{\partial\bar{x}^\beta },
\end{align*}
and from these equations,
we obtain
\begin{equation*}
\left\{
\begin{array}{l}
K^\gamma =0,\\
L^\mu =0,\\
P_{\sigma \alpha }^{\rho \mu }
=(-1)^{
\alpha
(\mu +\nu +\alpha +\beta )
}
\dfrac{\partial \bar{x}^\sigma }
{\partial x^\alpha }
\dfrac{\partial y^\mu }
{\partial \bar{y}^\rho }.
\end{array}
\right.
\end{equation*}
Hence, the law of transformation for
$\partial /
\partial \bar{y}_\sigma ^\rho $
is
\begin{equation*}
\frac{\partial }
{\partial \bar{y}_\sigma ^\rho }
=(-1)^{
\alpha
(\mu +\rho +\alpha +\sigma )
}
\dfrac{\partial \bar{x}^\sigma }
{\partial x^\alpha }
\dfrac{\partial y^\mu }
{\partial \bar{y}^\rho }
\frac{\partial }
{\partial y_\alpha ^\mu }.
\end{equation*}
This coincides with the law
of transformation for
$d^G\bar{x}^\sigma
\otimes
\partial /
\partial \bar{y}^\rho $.
Indeed,
\begin{eqnarray*}
d^G\bar{x}^\sigma
\otimes
\frac{\partial }
{\partial \bar{y}^\rho }
&=&
d^Gx^\alpha
\frac{\partial \bar{x}^\sigma }
{\partial x^\alpha }
\otimes
\frac{\partial y^\mu }
{\partial \bar{y}^\rho }
\frac{\partial }
{\partial y^{\mu }} \\
&=&
(-1)^{
\alpha
(\alpha +\sigma +\rho +\mu )
}
\frac{\partial \bar{x}^\sigma }
{\partial x^\alpha }
\frac{\partial y^\mu }
{\partial \bar{y}^\rho }
d^Gx^\alpha
\otimes
\frac{\partial }
{\partial y^\mu }.
\end{eqnarray*}
Thus, the isomorphism
in the statement of the proposition
is well defined.
\end{proof}

\subsection{Intrinsic construction of $\mathcal{J}$}

Let $p\colon (N,\mathcal{B})\rightarrow (M,\mathcal{A})$ be a graded
submersion. On the module of the graded vector fields on the graded $1-$jet
bundle $(J_{G}^{1}(p),\mathcal{A}_{J_{G}^{1}(p)})$, a $\mathcal{V}(p_{10})-$%
valued mapping acting upon $m$ arguments, is defined as follows: 
\begin{equation}
\mathcal{J} \left( D_{1},\dotsc ,D_{m} \right) =(-1)^{j+m}\iota _{D_{(\hat{%
\jmath})}} \left( p_{1}^{\ast } \left( d^{G}x^{1}\wedge \cdots \wedge
d^{G}x^{m} \right) \right) \otimes \theta (D_{j}),  \label{eq5_8}
\end{equation}
where $\iota _{D_{(\hat{\jmath})}} =\iota _{D_{m}}\circ \ldots \circ 
\widehat{\iota _{D_{j}}} \circ \ldots \circ \iota _{D_{1}}$ and 
\begin{equation}
\theta =\left( d^{G}y^{\mu } -d^{G}x^{\alpha }\cdot y_{\alpha }^{\mu }
\right) \otimes \frac{\partial }{\partial y^{\mu }},  \label{eq5_9}
\end{equation}
is intrinsically defined in \cite[Theorem 1.7]{Her-Mun 84a}, and $%
d^{G}x^{1}\wedge \cdots \wedge d^{G}x^{m}$ comes from a volume form $\eta $
on $M$, $\eta =dx^{1}\wedge \cdots \wedge dx^{m}$, so that $\mathcal{J}$ is
an intrinsic object.

\begin{proposition}
The operator $\mathcal{J}$ defined by \eqref{eq5_8} is a graded $m-$form.
\end{proposition}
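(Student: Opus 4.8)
The goal is to show that the $\mathcal{V}(p_{10})$-valued operator $\mathcal{J}$ of \eqref{eq5_8} is graded $\mathcal{A}_{J_G^1(p)}$-multilinear and graded alternating in its $m$ arguments, so that it is a section of $\bigwedge^{m}(\mathop{\rm {}Der}\nolimits\mathcal{A}_{J_G^1(p)})^{\ast}\otimes\mathcal{V}(p_{10})$. The values genuinely lie in $\mathcal{V}(p_{10})$ because each summand $\iota_{D_{(\hat\jmath)}}(p_1^{\ast}(d^Gx^1\wedge\cdots\wedge d^Gx^m))\otimes\theta(D_j)$ is the tensor product of a graded $1$-form sitting in $p^{\ast}\mathcal{ST}^{\ast}(M,\mathcal{A})$ with the $p$-vertical field $\theta(D_j)\in\mathcal{V}(p)$, and such tensors are identified with elements of $\mathcal{V}(p_{10})$ through the isomorphism \eqref{isomorphism} of Proposition \ref{identification}. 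My primary route is to reduce the statement to the local formula \eqref{jotacal} already displayed above.

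Concretely, I would evaluate \eqref{eq5_8} on the coordinate frame prolonging $(\partial/\partial x^{\alpha},\partial/\partial y^{\mu})$. Since $\omega:=p_1^{\ast}(d^Gx^1\wedge\cdots\wedge d^Gx^m)$ involves only the even base coordinates, a contraction $\iota_{D_{(\hat\jmath)}}(\omega)$ with $m-1$ frame fields is nonzero precisely when those fields exhaust all but one of the $\partial/\partial x^{h}$, leaving a single factor $d^Gx^{i}$; pairing this with $\theta(D_j)=(d^Gy^{\mu}-d^Gx^{\alpha}\cdot y_{\alpha}^{\mu})(D_j)\,\partial/\partial y^{\mu}$ and translating $d^Gx^{i}\otimes\partial/\partial y^{\mu}$ into $\partial/\partial y_i^{\mu}$ via \eqref{coordinates2}, a short sign count then shows this reproduces the local expression \eqref{jotacal}. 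That expression is manifestly a graded $m$-form with values in $\mathcal{V}(p_{10})$: its form part is a wedge product of $m$ graded $1$-forms, namely the $m-1$ factors $d^Gx^{h}$ with $h\neq i$ together with the contact factor $d^Gy^{\mu}-d^Gx^{\alpha}\cdot y_{\alpha}^{\mu}$, tensored with the vertical field $\partial/\partial y_i^{\mu}$. Because \eqref{eq5_8} is a globally defined operator assembled from intrinsic ingredients (the pulled-back volume and the graded contact form $\theta$ of \eqref{eq5_9}) and coincides with \eqref{jotacal} on every fibred chart, it is a graded $m$-form.

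In parallel, graded $\mathcal{A}_{J_G^1(p)}$-multilinearity is immediate from the intrinsic formula, since both the interior products $\iota_{D_i}$ and the $\mathcal{V}(p)$-valued graded $1$-form $\theta$ are graded-linear in their vector-field argument and the weights $(-1)^{j+m}$ do not involve the $D_i$. The delicate step, and the main obstacle to a purely coordinate-free proof, is the graded skew-symmetry: transposing two adjacent arguments $D_k,D_{k+1}$ requires tracking simultaneously the super-parities of the $D_i$, of the contracted coordinate $1$-forms, and of the tensor factor $\theta(D_j)$, which must be carried past everything to its left, all while using the graded anticommutation $\iota_{D_k}\circ\iota_{D_l}=-(-1)^{|D_k|\,|D_l|}\,\iota_{D_l}\circ\iota_{D_k}$ inside $\iota_{D_{(\hat\jmath)}}$. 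Checking that the position weights $(-1)^{j+m}$ and these interior-product signs combine into the single graded factor demanded of an $m$-form is precisely the bookkeeping that the coordinate computation of the previous paragraph is designed to bypass; for that reason I would take the coordinate comparison as the backbone of the proof and use the intrinsic multilinearity and sign analysis only as a consistency check.
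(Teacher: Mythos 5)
Your proposal is correct, but it follows a genuinely different route from the paper's own proof. The paper disposes of the proposition in one short intrinsic paragraph: multilinearity is deduced, exactly as you do, from the $\mathcal{A}_{J_{G}^{1}(p)}$-linearity of $\theta $ and of the insertion operator ($\iota _{\alpha \cdot D}\Lambda =\alpha \wedge \iota _{D}\Lambda $, $\iota _{D_{1}+D_{2}}\Lambda =\iota _{D_{1}}\Lambda +\iota _{D_{2}}\Lambda $), and skew-symmetry is asserted to be ``rather obvious'' from $\iota _{D_{1}}\iota _{D_{2}}=-\iota _{D_{2}}\iota _{D_{1}}$. You instead take multilinearity as the tensoriality needed to reduce everything to the coordinate frame, and then obtain alternation (and the identification of the values in $\mathcal{V}(p_{10})$ via \eqref{coordinates2}) by matching \eqref{eq5_8} against the manifestly alternating local expression \eqref{jotacal}. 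That coordinate evaluation is precisely the computation the paper performs \emph{after} the proposition, for the separate purpose of checking that the intrinsic $\mathcal{J}$ reproduces \eqref{jotacal}; you promote it to the backbone of the proof. What each approach buys: the paper's argument is shorter and coordinate-free, but its one-line treatment of skew-symmetry is genuinely incomplete for the terms in which one of the transposed arguments occupies the distinguished $j$-th slot fed to $\theta $ (there one must compare the $j$-th and $(j{+}1)$-th summands and check that the weights $(-1)^{j+m}$ conspire correctly, and for odd arguments the anticommutation rule should carry the graded sign $-(-1)^{|D_{1}||D_{2}|}$, which the paper does not write); your version makes exactly this point, and by routing the verification through the frame computation you sidestep the bookkeeping at the cost of a longer, chart-dependent argument whose global validity rests on the intrinsic definition of \eqref{eq5_8}. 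Both proofs are acceptable; yours is more honest about where the difficulty lies.
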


\begin{proof}
First, multilinearity is a consequence of that of $\theta $ and the
properties of the insertion operator,
\begin{align*}
\iota _{\alpha \cdot D}\Lambda
& =\alpha \wedge \iota _{D}\Lambda , \\
\iota _{D_{1}+D_{2}}\Lambda
& =\iota _{D_{1}}\Lambda
+\iota _{D_{2}}\Lambda ,
\end{align*}
for all $D\in \mathcal{X}_{G}(J_{G}^{1}(p))$,
$\alpha \in \mathcal{A}_{J_{G}^{1}(p)}$,
$\Lambda \in \Omega _{G}^{1}(J_{G}^{1}(p))$. Second, we
have skew symmetry, which is rather obvious in view that
$\iota _{D_{1}}\iota _{D_{2}}
=-\iota _{D_{2}}\iota _{D_{1}}$.
\end{proof}

Now, we must check if the local expression for $J$ obtained from %
\eqref{eq5_8} gives the expression we want \eqref{jotacal}. Because of the
term $p_{1}^{\ast }(d^{G}x^{1} \wedge \cdots \wedge d^{G}x^{m})$, we just
need to evaluate 
\begin{equation*}
\mathcal{J} \left( \frac{\partial }{\partial x^{1}} ,\dotsc , \frac{\partial 
}{\partial x^{m}} \right) , \quad \mathcal{J} \left( \frac{\partial }{%
\partial x^{1}} ,\dotsc , \widehat{\frac{\partial }{\partial x^{i}}} ,\dotsc
, \frac{\partial }{\partial x^{m}}, \frac{\partial }{\partial x^{-j}}
\right) ,
\end{equation*}
and 
\begin{equation*}
\mathcal{J} \left( \frac{\partial }{\partial x^{1}},\dotsc , \widehat{\frac{%
\partial }{\partial x^{i}}} ,\dotsc , \frac{\partial }{\partial x^{m}}, 
\frac{\partial }{\partial y^{\nu }} \right) ,
\end{equation*}
where $i\in \{1,\dotsc ,m\}$, $\nu \in \{-s,\dotsc ,-1,1,\dotsc ,r\}$. Now,
we have 
\begin{align*}
& (-1)^{m-1}\mathcal{J} \left( \dfrac{\partial }{\partial x^{1}},\dotsc , 
\dfrac{\partial }{\partial x^{m}} \right) \\
& =(-1)^{j-1} \iota _{\frac{\partial }{\partial x^{m}}} \circ \ldots \circ 
\widehat{\iota _{\frac{\partial }{\partial x^{j}}}} \circ \ldots \circ \iota
_{\frac{\partial }{\partial x^{1}}} p_{1}^{\ast } \left( d^{G}x^{1}\wedge
\cdots \wedge d^{G}x^{m} \right) \otimes \theta \left( \frac{\partial }{%
\partial x^{j}} \right) \\
& =(-1)^{j-1} \iota _{\frac{\partial }{\partial x^{m}}} \circ \ldots \circ 
\widehat{\iota _{\frac{\partial }{\partial x^{j}}}} \circ \ldots \circ \iota
_{\frac{\partial }{\partial x^{1}}}p_{1}^{\ast } \left( d^{G}x^{1}\wedge
\cdots \wedge d^{G}x^{m} \right) \otimes \left( -y_{j}^{\mu } \frac{\partial 
}{\partial y^{\mu }} \right) \\
& =(-1)^{m}d^{G}x^{j}\otimes y_{j}^{\mu } \frac{\partial }{\partial y^{\mu }}
\\
& =(-1)^{m}d^{G}x^{j}\cdot y_{j}^{\mu } \otimes \frac{\partial }{\partial
y^{\mu }} \\
& \simeq (-1)^{m}y_{j}^{\mu } \frac{\partial }{\partial y_{j}^{\mu }},
\end{align*}
where the last identification comes from \ref{coordinates2}. Also, 
\begin{align*}
& (-1)^{m-1}\mathcal{J} \left( \frac{\partial }{\partial x^{1}}, \dotsc , 
\widehat{\frac{\partial }{\partial x^{i}}}, \dotsc ,\frac{\partial }{%
\partial x^{m}}, \frac{\partial }{\partial x^{-j}} \right) \\
& =(-1)^{m-1} \iota _{\frac{\partial }{\partial x^{m}}} \circ \ldots \circ 
\widehat{\iota _{\frac{\partial }{\partial x^{i}}}} \circ \ldots \circ \iota
_{\frac{\partial }{\partial x^{1}}}p_{1}^{\ast } \left( d^{G}x^{1}\wedge
\cdots \wedge d^{G}x^{m} \right) \otimes \theta \left( \frac{\partial }{%
\partial x^{-j}} \right) \\
& =(-1)^{m-1}(-1)^{m-i}d^{G}x^{i}\otimes \theta \left( \frac{\partial }{%
\partial x^{-j}} \right) \\
& =-(-1)^{i}d^{G}x^{i}\otimes \left( -y_{-j}^{\mu }\frac{\partial }{\partial
y^{\mu }} \right) \\
& =(-1)^{i}d^{G}x^{i}\cdot y_{-j}^{\mu } \otimes frac{\partial }{\partial
y^{\mu }} \\
& =(-1)^{i}y_{-j}^{\mu }d^{G}x^{i} \otimes \frac{\partial }{\partial y^{\mu }%
} \\
& \simeq (-1)^{i}y_{-j}^{\mu } \frac{\partial }{\partial y_{i}^{\mu }}.
\end{align*}
Moreover, by noting that each term $\iota _{\frac{\partial }{\partial
y^{\gamma }}} p_{1}^{\ast }(d^{G}x^{1}\wedge \cdots \wedge d^{G}x^{m})$
vanishes, we obtain 
\begin{align*}
& (-1)^{m-1}\mathcal{J} \left( \frac{\partial }{\partial x^{1}}, \dotsc , 
\widehat{\frac{\partial }{\partial x^{i}}}, \dotsc ,\frac{\partial }{%
\partial x^{m}}, \frac{\partial }{\partial y^{\nu }} \right) \\
& =(-1)^{m-1}\iota _{\frac{\partial }{\partial x^{m}}} \circ \ldots \circ 
\widehat{\iota _{\frac{\partial }{\partial x^{i}}}} \circ \ldots \circ \iota
_{\frac{\partial }{\partial x^{1}}}p_{1}^{\ast } \left( d^{G}x^{1}\wedge
\ldots \wedge d^{G}x^{m} \right) \otimes \theta \left( \frac{\partial }{%
\partial y^{\nu }} \right) \\
& =(-1)^{i-1}d^{G}x^{i}\otimes \frac{\partial }{\partial y^{\nu }} \\
& \simeq (-1)^{i-1} \frac{\partial }{\partial y_{i}^{\nu }}.
\end{align*}
We thus conclude that the local expression for $\mathcal{J}$ is 
\begin{align*}
(-1)^{m-1}\mathcal{J}& =(-1)^{m}d^{G}x^{1}\wedge \ldots \wedge
d^{G}x^{m}\otimes y_{i}^{\mu } \frac{\partial }{\partial y_{i}^{\mu }} \\
& \\
& +(-1)^{i}d^{G}x^{1}\wedge \ldots \wedge \widehat{d^{G}x^{i}}\wedge \ldots
\wedge d^{G}x^{m}\wedge d^{G}x^{-j} \otimes y_{-j}^{\mu } \frac{\partial }{%
\partial y_{i}^{\mu }} \\
& \\
& +(-1)^{i-1}d^{G}x^{1}\wedge \ldots \wedge \widehat{d^{G}x^{i}}\wedge
\ldots \wedge d^{G}x^{m}\wedge d^{G}y^{\nu } \otimes \frac{\partial }{%
\partial y_{i}^{\nu }} \\
& \\
& =(-1)^{i}d^{G}x^{1}\wedge \ldots \wedge \widehat{d^{G}x^{i}}\wedge \ldots
\wedge d^{G}x^{m}\wedge d^{G}x^{i} \cdot y_{i}^{\mu }\otimes \frac{\partial 
}{\partial y_{i}^{\mu }} \\
& \\
& +(-1)^{i}d^{G}x^{1}\wedge \ldots \wedge \widehat{d^{G}x^{i}}\wedge \ldots
\wedge d^{G}x^{m}\wedge d^{G}x^{-j} \cdot y_{-j}^{\mu }\otimes \frac{%
\partial }{\partial y_{i}^{\mu }} \\
& \\
& +(-1)^{i-1}d^{G}x^{1}\wedge \ldots \wedge \widehat{d^{G}x^{i}}\wedge
\ldots \wedge d^{G}x^{m}\wedge d^{G}y^{\mu } \otimes \frac{\partial }{%
\partial y_{i}^{\mu }} \\
& \\
& =(-1)^{i}d^{G}x^{1}\wedge \ldots \wedge \widehat{d^{G}x^{i}}\wedge \ldots
\wedge d^{G}x^{m}\wedge \left( d^{G}x^{\alpha }\cdot y_{\alpha }^{\mu }
-d^{G}y^{\mu } \right) \otimes \frac{\partial }{\partial y_{i}^{\mu }}.
\end{align*}
Hence, 
\begin{equation*}
\mathcal{J}=(-1)^{m-1} \iota _{\frac{\partial }{\partial x^{i}}} \eta
^{G}\wedge \theta ^{\mu }\otimes \frac{\partial }{\partial y_{i}^{\mu }},
\end{equation*}
where $\theta ^{\mu }=d^{G}y^{\mu }-d^{G}x^{\alpha }\cdot y_{\alpha }^{\mu }$
is the horizontal differential of $y^{\mu }$, and this is precisely %
\eqref{jotacal} written in a more compact form.

In this way, we have constructed a canonical $\mathcal{V}(p_{10})-$valued $m-
$form $\mathcal{J}$ for any graded submersion $p\colon (N,\mathcal{B})\to (M,%
\mathcal{A})$. This is appropriate for the case of graded mechanics, but if
we want to study graded fields, we must go on to higher$-$order jet bundles;
let us see how to extend the previous construction to $J_{G}^{k}((N,\mathcal{%
B}),(M,\mathcal{A})) \equiv (J_{G}^{k}(p),\mathcal{A}_{J_{G}^{k}(p)})$ for
any $k$.

\subsection{Intrinsic construction of $\mathcal{J}_{k}$}

Consider the following submersion playing the r\^{o}le of $p$ in previous
sections: $p_{k-1}\colon J_{G}^{k-1}(p)\to (M,\mathcal{A})$. Then, the
preceding construction tells us that we have defined a $\mathcal{J}$ on the
graded bundle $J_{G}^{1}((J_{G}^{k-1}(p),\mathcal{A}_{J_{G}^{k-1}(p)}), (M,%
\mathcal{A}))\doteq J_{G}^{1}(p_{k-1})$, which is a graded $m-$form with
values on $\mathcal{V}((p_{1}^{k-1})_{10})$ that will be denoted $\mathcal{J}%
_{k}$ (here, $p_{1}^{k-1}$ is defined by $p_{1}^{k-1}\colon
J_{G}^{1}(p_{k-1})\rightarrow (M,\mathcal{A})$, and $(p_{1}^{k-1})_{10}%
\colon J_{G}^{1}(p_{1}^{k-1})\to J_{G}^{1}(p_{k-1})$ the target projection).

If $(x^{\alpha },y^{\mu },z_{Q}^{\mu })$, $1\leq |Q|\leq k-1$, is a system
of coordinates for $J_{G}^{k-1}(p)$, then $(x^{\alpha },y^{\mu },z_{Q}^{\mu
},w_{R}^{\mu })$ (with $1\leq |R|\leq k$, $1\leq |Q|\leq k-1$, recall that $%
Q,R$ denote arbitrary multi$-$indices) is a system for $J_{G}^{1}(p_{k-1})$,
and we have the local expression (with $1\leq i\leq m$ a \emph{positive}
index) 
\begin{equation}
\mathcal{J}_{k}=(-1)^{m-1} \iota _{\frac{\partial }{\partial x^{i}}} \eta
^{G}\wedge \left( \theta ^{y^{\mu }}\otimes \frac{\partial }{\partial
y_{i}^{\mu }} +\theta ^{z_{Q}^{\mu }}\otimes \frac{\partial }{\partial
z_{i+Q}^{\mu }} +\theta ^{w_{R}^{\mu }}\otimes \frac{\partial }{\partial
w_{i+R}^{\mu }} \right)  \label{eq5_10}
\end{equation}
(it must be noted that we are using the canonical identification %
\eqref{coordinates2} in writing $\frac{\partial }{\partial w_{i+R}^{\mu }}$,
also, note that the sum $\iota _{\frac{\partial }{\partial x^{i}}}\eta
^{G}\wedge \theta ^{w_{R}^{\mu }}\otimes \frac{\partial }{\partial
w_{i+R}^{\mu }}$ only runs up to $|R|=k-1$), where $\theta ^{z_{Q}^{\mu
}}=d^{G}z_{Q}^{\mu }-d^{G}x^{\alpha }\cdot z_{\alpha \star Q}^{\mu }$ and so
on. Now, we observe that there exists a canonical graded immersion $%
J_{G}^{k}(p)\overset{\Psi }{\hookrightarrow }J_{G}^{1}(p_{k-1})$, which
expressed through its action on coordinates, reads (here, $(x^{\alpha
},y^{\mu },v_{Q}^{\mu })$, $1\leq |Q|\leq k$, is a system of coordinates for 
$J_{G}^{k}(p)$): 
\begin{equation}
\left. 
\begin{array}{rl}
\Psi ^{\ast }(x^{\alpha })= & \!\!\!x^{\alpha } \\ 
\Psi ^{\ast }(y^{\mu })= & \!\!\!y^{\mu } \\ 
\Psi ^{\ast }(z_{Q}^{\mu })= & \!\!\!v_{Q}^{\mu } \\ 
\Psi ^{\ast }(w_{R}^{\mu })= & \!\!\!v_{R}^{\mu }%
\end{array}
\right\}  \label{eq5_11}
\end{equation}
Of course, when acting upon jet extensions of sections $\sigma $, this
action reads 
\begin{equation*}
\Psi ^{\ast }(j^{1}(j^{k-1}(\sigma )))=j^{k}(\sigma ).
\end{equation*}
Now, it is clear that (as $\Psi ^{\ast }$ commutes with $d^{G}$), 
\begin{align*}
\Psi ^{\ast } \left( \iota _{\frac{\partial }{\partial x^{i}}}\eta ^{G}
\right) & =(-1)^{i-1}\Psi ^{\ast } \left( d^{G}x^{1}\wedge \cdots \wedge 
\widehat{d^{G}x^{i}}\wedge \cdots \wedge d^{G}x^{m} \right) \\
& =\iota _{\frac{\partial }{\partial x^{i}}}\eta ^{G}, \\
\Psi ^{\ast } \left( \theta ^{y^{\mu }} \right) & =\Psi ^{\ast } \left(
d^{G}y^{\mu }-d^{G}x^{\alpha }\cdot y_{\alpha }^{\mu } \right) \\
& =d^{G}y^{\mu }-d^{G}x^{\alpha }\cdot y_{\alpha }^{\mu } \\
& =\theta ^{\mu }, \\
\Psi ^{\ast } \left( \theta ^{z_{Q}^{\mu }} \right) & =\Psi ^{\ast } \left(
d^{G}z_{Q}^{\mu }-d^{G}x^{\alpha }\cdot z_{\alpha \star Q}^{\mu } \right) \\
& =d^{G}y_{Q}^{\mu }-d^{G}x^{\alpha }\cdot y_{\alpha \star Q}^{\mu } \\
& =\theta _{Q}^{\mu },
\end{align*}
and so on. We can apply $\Psi ^{\ast }$ to \eqref{eq5_10} to obtain a graded 
$m-$form on $J_{G}^{k}(p)$; according to the preceding observations, the
only terms that represent some problem are those duplicated in $\partial
/\partial z_{i+Q}^{\mu }$ and $\partial /\partial w_{i+R}^{\mu }$ (see %
\eqref{eq5_11}). But these terms are precisely the ones coming from a single
supervector on $J_{G}^{k}(p)$ through \eqref{eq5_11}; to be more precise,
let us study $\Psi _{\ast }(\partial /\partial v_{i+Q}^{\mu })$. We would
like to see that 
\begin{equation*}
\Psi _{\ast } \left( \frac{\partial }{\partial v_{i+Q}^{\mu }} \right) =%
\frac{\partial }{\partial z_{i+Q}^{\mu }} +\frac{\partial }{\partial
w_{i+Q}^{\mu }},
\end{equation*}
as an extreme case we have $|Q|=k$, but then this reduces to 
\begin{equation*}
\Psi _{\ast } \left( \frac{\partial }{\partial v_{i+Q}^{\mu }} \right) =%
\frac{\partial }{\partial w_{i+Q}^{\mu }},
\end{equation*}
and as we have the canonical identification \eqref{coordinates2}, what we
really want is to prove 
\begin{equation*}
\Psi _{\ast } \left( \frac{\partial }{\partial v_{Q}^{\mu }} \right) =\frac{%
\partial }{\partial z_{Q}^{\mu }} +\frac{\partial }{\partial w_{Q}^{\mu }},
\end{equation*}
for an arbitrary multiindex $Q$.

Thus, consider the action of 
\begin{equation*}
\Psi _{\ast } \left( \frac{\partial }{\partial v_{Q}^{\mu }} \right) .
\end{equation*}
We have 
\begin{align*}
\Psi _{\ast } \left( \frac{\partial }{\partial v_{Q}^{\mu }} \right)
(z_{R}^{\nu })& =\frac{\partial }{\partial v_{Q}^{\mu }} \left( \Psi ^{\ast
}(z_{R}^{\nu }) \right) \\
& =\frac{\partial }{\partial v_{Q}^{\mu }}v_{R}^{\nu } \\
& =\delta _{\mu }^{\nu }\delta _{Q}^{R}, \\
& \\
\Psi _{\ast } \left( \frac{\partial }{\partial v_{Q}^{\mu }} \right)
(w_{R}^{\nu }) & =\frac{\partial }{\partial v_{Q}^{\mu }} \left( \Psi ^{\ast
}(w_{R}^{\nu }) \right) \\
& =\frac{\partial }{\partial v_{Q}^{\mu }}v_{R}^{\nu } \\
& =\delta _{\mu }^{\nu }\delta _{Q}^{R},
\end{align*}
and this is precisely the action of $\partial /\partial z_{Q}^{\mu
}+\partial /\partial w_{Q}^{\mu }$, as wanted.

As a consequence, we have the following result (see \cite{Ald-Azc 78a,
Ald-Azc 78b} for a classical version):

\begin{theorem}
On $J_{G}^{k}(p)$ (for any $k$) there is defined a canonical graded $m-$form
with values on $\mathcal{V}((p_{k})_{10})\subset \mathcal{V}%
((p_{1}^{k-1})_{10})$, which we denote by $\mathcal{J}_{k}$, and whose local
expression is 
\begin{equation*}
\mathcal{J}_{k} =(-1)^{m-1}\iota _{\frac{\partial }{\partial x^{i}}} \eta
^{G}\wedge \theta _{Q}^{\mu }\otimes \frac{\partial }{\partial y_{i+Q}^{\mu }%
} (1\leq i\leq m=\dim M),
\end{equation*}
being $0\leq |Q|\leq k-1$, with the usual convention $\theta _{Q}^{\mu
}=\theta ^{\mu }$ when $|Q|=0$.
\end{theorem}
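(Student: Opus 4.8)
The plan is to realize $\mathcal{J}_k$ as the restriction, along the canonical graded immersion $\Psi\colon J_G^k(p)\hookrightarrow J_G^1(p_{k-1})$ of (\ref{eq5_11}), of the first-order object $\mathcal{J}$ attached to the submersion $p_{k-1}\colon J_G^{k-1}(p)\to(M,\mathcal{A})$. Since formula (\ref{eq5_8}) is intrinsic once the volume form $\eta$ on $M$ is fixed, the resulting $m$-form $\mathcal{J}_k$ on $J_G^1(p_{k-1})$ is canonical, and $\Psi$ is itself canonical; the preceding Proposition already guarantees that $\mathcal{J}_k$ is a genuine graded $m$-form, with multilinearity and skew-symmetry inherited from those of $\mathcal{J}$. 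Thus I only need to check that the restricted object is a well-defined $\mathcal{V}((p_k)_{10})$-valued graded $m$-form and that it has the stated local expression.

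First I would apply $\Psi^\ast$ to the covariant (differential-form) part of the local expression (\ref{eq5_10}). Because $\Psi^\ast$ commutes with $d^G$, fixes $x^\alpha$ and $y^\mu$, and sends both $z_Q^\mu$ and $w_R^\mu$ to the $J_G^k(p)$ coordinates $v_Q^\mu$ (resp.\ $v_R^\mu$), the pull-backs of $\iota_{\partial/\partial x^i}\eta^G$, $\theta^{y^\mu}$ and $\theta^{z_Q^\mu}$ are exactly $\iota_{\partial/\partial x^i}\eta^G$, $\theta^\mu$ and $\theta_Q^\mu$, as recorded in the computation preceding the statement. This already collapses the covariant side to the desired shape.

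The step carrying the real content is the contravariant (vector-valued) side: I must show that the two apparently distinct families of terms $\theta^{z_Q^\mu}\otimes\partial/\partial z_{i+Q}^\mu$ and $\theta^{w_R^\mu}\otimes\partial/\partial w_{i+R}^\mu$ in (\ref{eq5_10}) descend, under $\Psi$, to a single $\mathcal{V}((p_k)_{10})$-valued term. For this I would establish the identity $\Psi_\ast\bigl(\partial/\partial v_Q^\mu\bigr)=\partial/\partial z_Q^\mu+\partial/\partial w_Q^\mu$ for an arbitrary multi-index $Q$, by evaluating both sides on the coordinates $z_R^\nu$ and $w_R^\nu$ and invoking (\ref{eq5_11}); the extreme case $|Q|=k$ degenerates, via the canonical identification (\ref{coordinates2}), to a single coordinate vector on $J_G^k(p)$. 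This identifies the values of the restricted form with vertical vectors of $(p_k)_{10}$, thereby exhibiting the inclusion $\mathcal{V}((p_k)_{10})\subset\mathcal{V}((p_1^{k-1})_{10})$ and proving that the restriction is well defined. I expect this descent---the verification that the doubled terms are precisely the image of one vertical vector on $J_G^k(p)$---to be the main obstacle, since it is exactly where the higher-order geometry departs from the naive first-order picture.

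Finally, I would assemble the two halves: the covariant factor contributes $(-1)^{m-1}\iota_{\partial/\partial x^i}\eta^G\wedge\theta_Q^\mu$ and the contravariant factor contributes $\partial/\partial y_{i+Q}^\mu$, the range being $0\leq|Q|\leq k-1$ (the upper bound reflecting that the $w_R^\mu$-sum in (\ref{eq5_10}) runs only to $|R|=k-1$), with the convention $\theta_Q^\mu=\theta^\mu$ for $|Q|=0$. Combining these yields the asserted local expression for $\mathcal{J}_k$, completing the argument.
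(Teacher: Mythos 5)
Your proposal is correct and follows essentially the same route as the paper: you define $\mathcal{J}_k$ via the first-order construction applied to the submersion $p_{k-1}$, pull back the covariant factors through the canonical immersion $\Psi$ of \eqref{eq5_11}, and resolve the duplicated vertical terms by the identity $\Psi_{\ast}(\partial/\partial v_{Q}^{\mu})=\partial/\partial z_{Q}^{\mu}+\partial/\partial w_{Q}^{\mu}$, verified on the coordinates $z_{R}^{\nu}$, $w_{R}^{\nu}$ exactly as the paper does. You correctly single out this descent of the contravariant part as the step carrying the real content, which is also where the paper concentrates its computation.
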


\begin{remark}
In the statement of the theorem, we are writing collectively $\theta
_{Q}^{\mu }$ instead of $\theta ^{z_{Q}^{\mu }}$ and $\theta ^{w_{R}^{\mu }}$
(it is a shorthand for \eqref{eq5_10}).
\end{remark}

Generalizing the classical expression (see, for instance \cite{Sau 89},
Theorem $5.5.2$), for any $L\in \mathcal{A}_{J_{G}^{k}(p)}$, we define the
graded $m-$form (the so called \emph{Poincar\'e$-$Cartan form of order} $k$) 
\begin{equation*}
\tilde{\Theta}^{L}=\mathcal{L}_{\mathcal{J}_{k}}^{G}(L)+\eta ^{G}\cdot L.
\end{equation*}

Let us make a remark. Let $\mathcal{A}\overset{\sim }{\rightarrow }C^{\infty
}(M)$ be the structural morphism and $C^{\infty }(M)\overset{\sigma }{%
\rightarrow }\mathcal{A}$ a global section of it. Then, to every volume form 
$\eta $ on $M$ we can associate a graded volume form $\eta ^{G}=\sigma (\eta
)$ on $(M,\mathcal{A})$. On the other hand, note that a graded Lagrangian
density is an $m-$form of the type 
\begin{equation*}
\eta ^{G}\cdot L, L\in \mathcal{A}_{J_{G}^{k}(p)}.
\end{equation*}
Thus, if we change the volume form $\eta $ on $M$ to a form $\mu =\eta \cdot
f$ where $f$ is a differentiable function on $M$, $f\in C^{\infty }(M)$, we
will have a induced change in the Lagrangian: 
\begin{equation*}
\eta ^{G}\cdot f\cdot L.
\end{equation*}
Moreover, recall that from the local expression of the $\mathcal{J}_{k}$
morphism \eqref{eq5_10} it is clear that replacing $\eta $ for $\mu $
amounts to passing from $\mathcal{J}_{k}$ to $f\cdot \mathcal{J}_{k}$.
Putting these observations together we get (introducing temporarily an
obvious notation to distinguish which graded volume form is in use): 
\begin{align*}
\tilde{\Theta}_{\mu }^{L}& =\mathcal{L}_{\mathcal{J}_{k}^{\mu }}^{G}(L) +\mu
^{G}\cdot L \\
& =\mathcal{L}_{f\cdot \mathcal{J}_{k}^{\eta }}^{G}(L) +\eta ^{G}\cdot
f\cdot L \\
& =\mathcal{L}_{\mathcal{J}_{k}^{\eta }}^{G}(f\cdot L) +\eta ^{G}(f\cdot L),
\end{align*}
where in the last step use has been made of the fact that $f\in C^{\infty
}(M)$ is not affected by the derivative on the fiber coordinates, carried on
by $\mathcal{L}_{\mathcal{J}_{k}^{\mu }}^{G}$. If we denote $f\cdot L\in 
\mathcal{A}_{J_{G}^{k}(p)}$ by $L_{f}$, what we have obtained is 
\begin{equation*}
\tilde{\Theta}_{\mu }^{L}=\tilde{\Theta}_{\eta }^{L_{f}},
\end{equation*}
so the graded Poincar\'{e}$-$Cartan form $\tilde{\Theta}^{L}$ is well
behaved under the decomposition \textquotedblleft graded Lie derivative plus
graded Lagrangian density\textquotedblright .

\section{Equivalence between Graded and Berezinian variational problems 
\label{equiv}}

Let us make some remarks about the correspondence between Berezinian and
graded variational problems. It is well known how to obtain the equations of
the solutions to a graded variational problem (see \cite{Her-Mun 84a}); on
the contrary, for Berezinian problems an intrinsic formulation in Cartan's
spirit has been not available up until now. What does exist, is a way (based
on the Comparison Theorem) to associate to each graded problem a Berezinian
one and to establish a correspondence between their solutions. The basic
idea is as follows: given a lagrangian $L\in \mathcal{A}_{J_{G}^{1}(p)}$,
let $\xi _{L}$ be the first$-$order Berezinian density that it determines,
which is given by 
\begin{equation*}
\left[ d^{G}x^{1}\wedge \cdots \wedge d^{G}x^{m}\otimes \frac{d}{dx^{-1}}
\circ \ldots \circ \frac{d}{dx^{-n}}\right] \cdot L,
\end{equation*}
and let 
\begin{equation*}
\lambda _{\xi _{L}}=d^{G}x^{1}\wedge \cdots \wedge d^{G}x^{m}\cdot \frac{%
d^{n}L}{dx^{-1}\ldots dx^{-n}}
\end{equation*}
be the corresponding graded Lagrangian density. In \cite{Her-Mun 84a}, to
each first order graded Lagrangian density $\lambda $ a canonical graded $m$%
-form is associated, the graded Poincar\'{e}$-$Cartan form for the
Lagrangian density $\lambda _{\xi _{L}}$. Here, we denote by $\Theta
_{0}^{L} $ the graded Poincar\'{e}$-$Cartan form corresponding to $-\lambda
_{L}$; in local coordinates, it is given by the expression (\ref{cartanlocal}%
)\ and, as we have proved, it can be constructed as an intrinsic object.
Now, as 
\begin{equation*}
\lambda _{\xi _{L}} =\mathcal{L}_{\frac{d}{dx^{-1}}}^{G}\circ \ldots \circ 
\mathcal{L}_{\frac{d}{dx^{-n}}}^{G} \left( d^{G}x^{1}\wedge \cdots \wedge
d^{G}x^{m}\cdot L \right) ,
\end{equation*}
it is natural to consider the graded $m$-form 
\begin{equation}
\Theta ^{L} =\mathcal{L}_{\frac{d}{dx^{-1}}}^{G} \circ \ldots \circ \mathcal{%
L}_{\frac{d}{dx^{-n}}}^{G}\Theta _{0}^{L}  \label{berdens}
\end{equation}
as the graded Poincar\'{e}$-$Cartan form for the Berezinian density $\xi _{L}
$. But we could as well follow other way to define $\Theta ^{L}$: Instead of
taking the first$-$order Lagrangian density $d^{G}x^{1}\wedge \ldots \wedge
d^{G}x^{m}\cdot L$, construct $\Theta _{0}^{L}=\mathcal{L}_{\mathcal{J}%
}^{G}(L)+\eta ^{G}\cdot L$ and then apply 
\begin{equation*}
\mathcal{L}_{\frac{d}{dx^{-1}}}^{G}\circ \ldots \circ \mathcal{L}_{\frac{d}{%
dx^{-n}}}^{G},
\end{equation*}
we could have considered the Lagrangian density, of order $(n+1)$, $\frac{%
d^{n}L}{dx^{-1}\ldots dx^{-n}}$ and apply $\mathcal{L}_{\mathcal{J}%
_{n+1}}^{G}$ to obtain 
\begin{equation*}
\tilde{\Theta}^{L} =\mathcal{L}_{\mathcal{J}_{n+1}}^{G} \left( \frac{d^{n}L}{%
dx^{-1}\cdots dx^{-n}} \right) +\eta ^{G}\cdot \frac{d^{n}L}{dx^{-1}\cdots
dx^{-n}}.
\end{equation*}

The first procedure is designed to take benefit of the graded variational
calculus, but has the handicap of presenting an expression like %
\eqref{berdens}), with the factors 
\begin{equation*}
\mathcal{L}_{\frac{d}{dx^{-1}}}^{G}\circ \ldots \circ \mathcal{L}_{\frac{d}{%
dx^{-n}}}^{G}
\end{equation*}
destroying, at a first glance, covariance. On the other hand, once a volume
form $\eta =dx^{1}\wedge \cdots \wedge dx^{m}$ has been fixed on the base,
the second one proceeds intrinsically to obtain 
\begin{equation*}
\frac{d^{n}L}{dx^{-1}\cdots dx^{-n}}
\end{equation*}
from the Berezinian density $\xi _{L}$ and then $\tilde{\Theta}^{L}$, thus
developing a Cartan formalism in an analogous manner to the classical
tangent bundle formulation. Nevertheless, it would be desirable the
convergence of the two ways, in the sense that $\Theta ^{L}=$ $\tilde{\Theta}%
^{L}$ for any $L\in \mathcal{A}_{J_{G}^{1}(p)}$; indeed, this is the case as
we will see in Theorem \ref{maintheorem}.

We will need some notations and technical lemmas that also will be useful
later.

\subsection{Preliminaries\label{lemmata}}

Let $B\in (\mathbb{Z}^{-})^{k}$ be a strictly decreasing multi-index. For
every $b\in B$, we define $p(b)$, $q(b)$ as follows: 
\begin{align*}
p(b)& =((\text{position of $b$ in }B)-1)\mathop{\rm {}mod}\nolimits2, \\
q(b)& =(\text{position of $b$ in }B)\mathop{\rm {}mod}\nolimits2.
\end{align*}
For example, if $B=(-1,-5,-7,-8)$, then $p(-7)=0$, $q(-7)=1$. The symbol $%
B-\{b\}$ denotes the $(|B|-1)$-multi-index obtained by removing $b$ from $B$%
; e.g., in the previous example we have $B-\{ -5\} =(-1,-7,-8)$. We also set 
$|Z|_{2}=|Z|\mathop{\rm {}mod}\nolimits2$, for every multi-index $Z$. For
any pair of multi-indices 
\begin{equation*}
Q=(i_{1},\dotsc ,i_{|Q|})\in \mathbb{Z}^{|Q|}, \quad
B=(b_{1},\dotsc,b_{|B|})\in (\mathbb{Z}^{-})^{|B|},
\end{equation*}
such that $|B|\geq |Q|$, we define $\varphi (Q,B)$ as follows: 
\begin{equation*}
\varphi (Q,B)=\sum\limits_{k=1}^{|I|}i_{k}\varphi _{k}(b_{k}),
\end{equation*}
where 
\begin{equation*}
\begin{array}{cc}
\varphi _{k}(b)= & \left\{ 
\begin{array}{c}
p(b),\,\text{if }k\equiv 1\mathop{\rm {}mod}\nolimits2 \\ 
q(b),\,\text{if }k\equiv 0\mathop{\rm {}mod}\nolimits2%
\end{array}
\right.%
\end{array}%
\end{equation*}
and $\varphi (Q,B)=0$, if $|Q|=0$.

Finally, as usual, the symbol $\star $, applied to a pair of multi-indices,
means juxtaposition.

In what follows, we denote by $\frac{d}{dx^\alpha }$ the graded horizontal
lift of $\partial /\partial x^\alpha $ to $J_G^\infty (p)$, whose local
expression is given in the formula \eqref{XH}.

\begin{lemma}
For any strictly decreasing multi-index $B\in (\mathbb{Z}^-)^k$, we have 
\begin{equation*}
\left[ \frac{\partial }{\partial y^\mu }, \frac{d^{|B|}}{dx^B} \right] =0,
\end{equation*}
when acting on $\mathcal{A}_{J_G^r(p)}$.
\end{lemma}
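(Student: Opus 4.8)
The claim is that the even derivation $\partial/\partial y^\mu$ commutes with the iterated horizontal (total) derivative $d^{|B|}/dx^B$ when both act on functions on a finite jet bundle $J_G^r(p)$. Here $B=(-\alpha_1,\dots,-\alpha_k)$ is a strictly decreasing negative multi-index, so $d^{|B|}/dx^B = d/dx^{-\alpha_1}\circ\cdots\circ d/dx^{-\alpha_k}$, where each $d/dx^{-\alpha_j}$ is the horizontal lift (formula \eqref{XH}).

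**Key facts I'd assemble.** The horizontal lift satisfies $\frac{d}{dx^\alpha} = \frac{\partial}{\partial x^\alpha} + y^\mu_{\{\alpha\}\star Q}\frac{\partial}{\partial y^\mu_Q}$ by \eqref{XH}. The coordinate $y^\mu$ is $y^\mu_{\emptyset}$ (the $Q=\emptyset$ coordinate), and $\partial/\partial y^\mu$ is the corresponding coordinate derivation on the jet bundle. The crucial point is how $\partial/\partial y^\mu$ interacts with a single horizontal lift; the statement for $|B|=1$ will be the base case, and the general case follows by a telescoping/induction argument once the graded Jacobi identity is in hand.

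**Strategy and steps.**

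My plan is to prove this by induction on $|B|=k$, reducing everything to the $k=1$ case and then propagating with the graded Jacobi identity. The base case $k=1$ is the computational heart: I would compute the graded commutator $\bigl[\frac{\partial}{\partial y^\mu},\frac{d}{dx^{-j}}\bigr]$ directly in coordinates. Since $y^\mu$ is even (as $|y^\mu_{\emptyset}|=|y^\mu|$ and we're taking $\mu$ in the even-parity range, though the argument is parity-uniform), the relevant graded bracket is $\frac{\partial}{\partial y^\mu}\circ\frac{d}{dx^{-j}} - (-1)^{\epsilon}\frac{d}{dx^{-j}}\circ\frac{\partial}{\partial y^\mu}$ with the appropriate sign $\epsilon$ from the parities of $\partial/\partial y^\mu$ and $d/dx^{-j}$. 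Using \eqref{XH}, the only term in $\frac{d}{dx^{-j}}$ that $\frac{\partial}{\partial y^\mu}$ can act on nontrivially is the coefficient $y^\nu_{\{-j\}\star Q}$, and $\frac{\partial}{\partial y^\mu}$ sends $y^\nu_{\{-j\}\star Q}$ to $\delta^\nu_\mu\,\delta_{\emptyset,\{-j\}\star Q}$. But $\{-j\}\star Q=\emptyset$ is impossible for the index $y^\nu$ to equal $y^\mu=y^\mu_\emptyset$, since $\{-j\}\star Q$ always has length at least one. Hence the bracket vanishes on the base case.

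For the inductive step I would write $\frac{d^{|B|}}{dx^B}=\frac{d}{dx^{-\alpha_1}}\circ\frac{d^{|B|-1}}{dx^{B'}}$ with $B'=B-\{b_1\}$, and expand the graded commutator of $\frac{\partial}{\partial y^\mu}$ with this composite using the graded derivation property of the bracket (graded Jacobi identity):
\begin{equation*}
\Bigl[\tfrac{\partial}{\partial y^\mu},\,\tfrac{d}{dx^{-\alpha_1}}\circ\tfrac{d^{|B|-1}}{dx^{B'}}\Bigr]
=\Bigl[\tfrac{\partial}{\partial y^\mu},\tfrac{d}{dx^{-\alpha_1}}\Bigr]\circ\tfrac{d^{|B|-1}}{dx^{B'}}
\pm\,\tfrac{d}{dx^{-\alpha_1}}\circ\Bigl[\tfrac{\partial}{\partial y^\mu},\tfrac{d^{|B|-1}}{dx^{B'}}\Bigr].
\end{equation*}
The first bracket vanishes by the base case and the second by the induction hypothesis (applied to the shorter multi-index $B'$), so the whole expression is zero, completing the induction.

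**Main obstacle.** The real delicacy is not the vanishing itself but the bookkeeping of parity signs: $\partial/\partial y^\mu$ and the horizontal lifts $d/dx^{-\alpha_j}$ carry gradings that depend on the parity of $\mu$ and on whether the index is even or odd, so the ``graded'' bracket has sign factors that must be tracked consistently through the Jacobi expansion. I expect the careful part to be verifying that the sign conventions make the Jacobi identity collapse cleanly, and confirming that one never inadvertently produces a $\{-\alpha_j\}\star Q=\emptyset$ coincidence. The geometric content, however, is transparent: $\partial/\partial y^\mu$ only detects the strict $\emptyset$-jet coordinate, while every horizontal lift raises the multi-index length, so no cancellation token is ever produced — and this is precisely what makes the iterated commutator vanish.
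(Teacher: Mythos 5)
Your proof is correct and follows exactly the route the paper indicates: the authors omit the details, saying only that the lemmas of that subsection are proved by ``a lengthy induction... involving standard computations'' generalizing the explicit $(m|2)$ calculations, and your argument---the base case that $\partial/\partial y^{\mu}$ annihilates every coefficient $y^{\nu}_{\{-j\}\star Q}$ because $\{-j\}\star Q$ can never be the empty multi-index, followed by induction via the graded Leibniz rule for the commutator with a composition---is precisely that induction, organized cleanly. No gap; the only quibble is terminological (the identity $[D,E\circ F]=[D,E]\circ F+(-1)^{|D||E|}E\circ[D,F]$ is the derivation property of $\mathrm{ad}_D$ on composition rather than the Jacobi identity).
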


\begin{lemma}
Let $k$ be a positive integer. Given $i_{0}\in \mathbb{Z}$ and $j\in
\{1,\dotsc ,n\}$, for every $Q\in \mathbb{Z}^{k}$, we have 
\begin{equation*}
\left[ \frac{\partial }{\partial y_{\{i_{0}\}\star Q}^{\mu }}, \frac{d}{%
dx^{-j}} \right] =\delta _{i_{0}}^{-j}\frac{\partial }{\partial y_{Q}^{\mu }}%
,
\end{equation*}
both sides acting on $\mathcal{A}_{J_{G}^{r}}(p)$.
\end{lemma}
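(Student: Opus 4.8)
The plan is to exploit that both sides of the asserted identity are graded derivations of $\mathcal{A}_{J_{G}^{r}(p)}$: the left-hand side is a graded commutator of two graded derivations, hence again a graded derivation, while the right-hand side is a scalar multiple of $\partial /\partial y_{Q}^{\mu}$. Since a graded derivation of the structure sheaf of a jet bundle is completely determined by its values on the coordinate generators $x^{\beta}$ and $y_{S}^{\sigma}$, it suffices to check that the two operators agree on these generators. On the base coordinates $x^{\beta}$ the verification is immediate, because $\partial /\partial y_{\{i_{0}\}\star Q}^{\mu}$ annihilates every $x^{\beta}$ and so does $\delta _{i_{0}}^{-j}\,\partial /\partial y_{Q}^{\mu}$; thus the entire content of the lemma sits in the action on the fibre coordinates $y_{S}^{\sigma}$. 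This is the same generators-plus-derivations scheme already used for the preceding lemma, of which the present statement is the one-step refinement.

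Next I would insert the local expression \eqref{XH} for the horizontal lift, $\frac{d}{dx^{-j}}=\frac{\partial }{\partial x^{-j}}+y_{\{-j\}\star R}^{\nu}\,\frac{\partial }{\partial y_{R}^{\nu}}$, and expand the graded commutator with $E:=\partial /\partial y_{\{i_{0}\}\star Q}^{\mu}$. A useful preliminary observation is that, evaluated on any coordinate, the reversed composition $\frac{d}{dx^{-j}}\circ E$ vanishes, since $E$ sends each coordinate either to zero or to a constant (a Kronecker symbol) and $\frac{d}{dx^{-j}}$ kills constants; consequently the graded sign $(-1)^{|E|}$ attached to that term is irrelevant, and on coordinates $[E,\frac{d}{dx^{-j}}]$ reduces to $E\circ \frac{d}{dx^{-j}}$. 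Applying the graded Leibniz rule to $E$ acting on the coefficient-times-derivation summands $y_{\{-j\}\star R}^{\nu}\,\frac{\partial }{\partial y_{R}^{\nu}}$, and using that $[\,\partial /\partial y_{\{i_{0}\}\star Q}^{\mu},\partial /\partial y_{R}^{\nu}\,]=0$ together with $[\,\partial /\partial y_{\{i_{0}\}\star Q}^{\mu},\partial /\partial x^{-j}\,]=0$, the only surviving contribution is $E\!\left(y_{\{-j\}\star R}^{\nu}\right)\frac{\partial }{\partial y_{R}^{\nu}}$.

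The whole problem then collapses to the combinatorial identity that $\partial y_{\{-j\}\star R}^{\nu}/\partial y_{\{i_{0}\}\star Q}^{\mu}$, summed over $R$ and $\nu$, equals $\delta _{i_{0}}^{-j}\,\partial /\partial y_{Q}^{\mu}$. Here $E\!\left(y_{\{-j\}\star R}^{\nu}\right)$ is nonzero precisely when $\{-j\}\star R$ and $\{i_{0}\}\star Q$ name the same jet coordinate; because $\star$ prepends its single argument, matching the two prepended multi-indices forces the leading indices to coincide, i.e.\ $i_{0}=-j$, and then $R=Q$. This leaves exactly one surviving summand, namely $\delta _{i_{0}}^{-j}\,\partial /\partial y_{Q}^{\mu}$, which is the desired right-hand side.

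I expect the genuine difficulty to lie entirely in this last step, in the careful bookkeeping of the juxtaposition $\star$ and of the signs (the $\varepsilon(j,A)$ appearing in \eqref{contactforms}, equivalently the $p$, $q$, $\varphi$ introduced in Subsection \ref{lemmata}) that arise when a non-ordered negative multi-index $\{-j\}\star R$ is rewritten in strictly decreasing form, and when a repeated negative index occurs, in which case the coordinate, and hence the term, vanishes by the conventions of Subsection \ref{notdef}. Reducing the multi-index matching to the single condition $i_{0}=-j$, and then verifying that all the reordering signs combine to leave the coefficient equal to $1$, is where the argument must be carried out with care; the reduction to derivations acting on generators is precisely what isolates this combinatorial core from the rest of the computation.
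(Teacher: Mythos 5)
Your structural reduction is sound and is, if anything, a cleaner organization than what the paper itself offers: the paper gives no proof of this lemma, stating only that ``the proof of these results is a lengthy induction, but only involving standard computations'' and later pointing to the explicit $(m|2)$ expansions \eqref{eq5_16}--\eqref{eq5_18} as the model to be generalized by induction on $|Q|$. Observing that $\bigl[\partial/\partial y^{\mu}_{\{i_{0}\}\star Q},\,d/dx^{-j}\bigr]$ is a graded derivation over $p_{r+1,r}^{\ast}$, hence determined by its values on coordinates, that it vanishes on the $x^{\beta}$, and that the reversed composition kills coordinates because $\partial/\partial y^{\mu}_{\{i_{0}\}\star Q}$ sends each coordinate to a constant, correctly collapses the whole statement to the evaluation of the single coefficient $\partial y^{\sigma}_{\{-j\}\star S}/\partial y^{\mu}_{\{i_{0}\}\star Q}$.

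However, the one step that carries the entire content of the lemma is exactly the one you defer, and the one concrete assertion you make about it is not right as stated. You claim that a nonzero pairing ``forces the leading indices to coincide, i.e.\ $i_{0}=-j$, and then $R=Q$.'' But only strictly decreasing negative multi-indices are genuine jet coordinates; two negative multi-indices that are permutations of one another label the \emph{same} coordinate up to a reordering sign. For instance $y^{\nu}_{(-2,-1)}=-y^{\nu}_{(-1,-2)}$, so $\partial/\partial y^{\mu}_{(-1,-2)}$ does \emph{not} annihilate the summand $y^{\nu}_{\{-2\}\star(-1)}\,\partial/\partial y^{\nu}_{(-1)}$ occurring in $d/dx^{-2}$, even though the prepended index $-2$ differs from $i_{0}=-1$; the analogous phenomenon occurs for a positive $i_{0}$, since juxtaposition of a positive with a negative index is commutative by the paper's conventions. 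These ``permuted'' contributions are precisely what the Kronecker symbol $\delta^{-j}_{i_{0}}$ on the right-hand side appears to exclude, so you must either show that they cancel or make precise the normal-form reading of $\partial/\partial y^{\mu}_{\{i_{0}\}\star Q}$ under which they are absent, tracking the signs $\varepsilon(j,A)$ of \eqref{contactforms} (equivalently the $p$, $q$, $\varphi$ of Subsection \ref{lemmata}) through the reordering. Until that bookkeeping is actually performed the proof has a genuine gap at its crux, and the heuristic you offer in its place would, taken literally, lead to a false intermediate claim.
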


Note $\partial /\partial y_{\{i_{0}\}\star Q}^{\mu }$ vanishes on $\mathcal{A%
}_{J_{G}^{r}}(p)$ whenever $|Q|>r$. In particular, for every $L\in \mathcal{A%
}_{J_{G}^{1}(p)}$ we have 
\begin{equation*}
\frac{\partial }{\partial y_{\alpha }^{\mu }} \left( \frac{dL}{dx^{-j}}
\right) =\delta _{\alpha }^{-j}\frac{\partial L}{\partial y^{\mu }}, \quad 
\frac{\partial }{\partial y_{\alpha \beta }^{\mu }} \left( \frac{dL}{dx^{-j}}
\right) =\delta _{\alpha }^{-j} \frac{\partial L}{\partial y_{\beta }^{\mu }}%
,
\end{equation*}
but 
\begin{equation*}
\frac{\partial }{\partial y_{Q}^{\mu }} \left( \frac{dL}{dx^{-j}} \right)
=0,\; \text{for }|Q|>2.
\end{equation*}

\begin{lemma}
For any strictly decreasing multi-index $B\in (\mathbb{Z}^{-})^{k}$, we have 
\begin{equation*}
\left[ \frac{\partial }{\partial y_{\alpha }^{\mu }}, \frac{d^{|B|}}{dx^{B}} %
\right] =\sum\limits_{b\in B}(-1)^{\mu (|B|_{2}+1) +\alpha \cdot p(b)}\delta
_{b}^{\alpha } \frac{d^{|B|-1}}{dx^{B-\{b\}}} \frac{\partial }{\partial
y^{\mu }},
\end{equation*}
when acting on $\mathcal{A}_{J_{G}^{r}(p)}$, where it is assumed 
\begin{equation*}
\frac{d^{0}F}{dx^{\emptyset }}=F, \quad \forall F\in \mathcal{A}%
_{J_{G}^{r}(p)}.
\end{equation*}
\end{lemma}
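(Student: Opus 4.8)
The plan is to induct on the length $k=|B|$, at each step stripping off the leading index of $B$ and invoking the two preceding lemmas of this subsection.

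For the base case $k=1$ I would write $B=(-j)$, so $|B|_{2}=1$ and the sum reduces to the single term $b=-j$, for which $p(-j)=0$ and $B-\{-j\}=\emptyset$. The asserted identity then collapses to $[\partial/\partial y_{\alpha}^{\mu},d/dx^{-j}]=\delta_{\alpha}^{-j}\,\partial/\partial y^{\mu}$, which is precisely the particular case displayed immediately after the second lemma above.

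For the inductive step, assume the formula for every strictly decreasing multi-index of length $k-1$, and write $B=(b_{1},\dots ,b_{k})$ with leading index $b_{1}$ and $B'=B-\{b_{1}\}$. By the ordering convention of Subsection~\ref{notdef} one has $d^{|B|}/dx^{B}=(d/dx^{b_{1}})\circ(d^{|B'|}/dx^{B'})$. Since $\partial/\partial y_{\alpha}^{\mu}$ has parity $\mu+\alpha$ and $d/dx^{b_{1}}$ is odd, the graded Leibniz rule for the commutator yields
\begin{equation*}
\left[\frac{\partial}{\partial y_{\alpha}^{\mu}},\frac{d^{|B|}}{dx^{B}}\right]
=\left[\frac{\partial}{\partial y_{\alpha}^{\mu}},\frac{d}{dx^{b_{1}}}\right]\frac{d^{|B'|}}{dx^{B'}}
+(-1)^{\mu+\alpha}\,\frac{d}{dx^{b_{1}}}\left[\frac{\partial}{\partial y_{\alpha}^{\mu}},\frac{d^{|B'|}}{dx^{B'}}\right].
\end{equation*}
In the first summand I would insert the base case $[\partial/\partial y_{\alpha}^{\mu},d/dx^{b_{1}}]=\delta_{\alpha}^{b_{1}}\,\partial/\partial y^{\mu}$ and then push $\partial/\partial y^{\mu}$ to the right past $d^{|B'|}/dx^{B'}$ using the first lemma above (which says they graded-commute), picking up a factor $(-1)^{\mu(k-1)}$; this produces the $b=b_{1}$ term of the target. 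In the second summand I would apply the inductive hypothesis to $[\partial/\partial y_{\alpha}^{\mu},d^{|B'|}/dx^{B'}]$ and left-compose each resulting term by $d/dx^{b_{1}}$; as $b_{1}\notin B'$, the identity $(d/dx^{b_{1}})\circ(d^{|B'|-1}/dx^{B'-\{b\}})=d^{|B|-1}/dx^{B-\{b\}}$ holds for each $b\in B'$, and these give the remaining terms.

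The hard part will be the sign bookkeeping: verifying that every term acquires exactly the exponent $\mu(|B|_{2}+1)+\alpha\,p(b)$ demanded by the statement. The two combinatorial facts I would rely on are that removing the leading index lowers each position in $B'$ by one, so the position-parity functions (subscripted here by the multi-index relative to which positions are counted) obey $p_{B'}(b)=q_{B}(b)=p_{B}(b)+1\pmod 2$, and that $|B'|_{2}=|B|_{2}+1\pmod 2$. Substituting these into the inductive-hypothesis sign $(-1)^{\mu(|B'|_{2}+1)+\alpha\,p_{B'}(b)}$ and combining with the Leibniz factor $(-1)^{\mu+\alpha}$, the redundant copies of $\mu$ and of $\alpha$ cancel in pairs and leave $\mu(|B|_{2}+1)+\alpha\,p_{B}(b)$, as required; the $b=b_{1}$ term is checked separately using $p(b_{1})=0$ together with $(-1)^{\mu(k-1)}=(-1)^{\mu(|B|_{2}+1)}$.
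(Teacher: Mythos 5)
Your induction on $|B|$ --- peeling off the leading index $b_{1}$, applying the graded Leibniz rule for the commutator with the sign $(-1)^{\mu+\alpha}$, invoking the two preceding lemmas for the base case and for pushing $\partial/\partial y^{\mu}$ past $d^{|B'|}/dx^{B'}$ --- is correct, and the sign bookkeeping checks out: with $|B'|_{2}=|B|_{2}+1$ and $p_{B'}(b)=p_{B}(b)+1$ modulo $2$, the exponent collapses to $\mu(|B|_{2}+1)+\alpha\,p(b)$ as claimed, and the $b=b_{1}$ term matches via $(-1)^{\mu(k-1)}=(-1)^{\mu(|B|_{2}+1)}$. This is essentially the paper's own argument: the paper gives no details, saying only that these lemmas follow by ``a lengthy induction \ldots involving standard computations'' generalizing the explicit $(m|2)$ calculations, and your proposal supplies exactly that induction.
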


In particular, we have 
\begin{equation*}
\frac{\partial }{\partial y_i^\mu } \frac {d^{|B|}F}{dx^B} =(-1)^{\mu |B|_2} 
\frac{d^{|B|}}{dx^B} \frac{\partial }{\partial y_i^\mu }, \quad \forall i>0.
\end{equation*}

\begin{proposition}
For every $L\in \mathcal{A}_{J_{G}^{1}(p)}$, every strictly decreasing $B\in
(\mathbb{Z}^{-})^{k}$, and every $Q\in \mathbb{Z}^{r}$ such that $k\geq 2$, $%
1\leq r\leq k$, we have 
\begin{multline*}
\frac{\partial }{\partial y_{\{i\}\star Q}^{\mu }} \frac{d^{|B|}L}{dx^{B}} \\
={\sum\limits_{\underset{b_{c_{1}},\dotsc ,b_{c_{|Q|}} \in B}{%
-b_{c_{1}}>\ldots >-b_{c_{|Q|}}}}} (-1)^{\mu (|B|_{2}+|Q|_{2}) +\varphi
(Q,B)}\delta _{b_{c_{1}}}^{i_{1}} \cdots \delta _{b_{c_{|Q|}}}^{i_{|Q|}} 
\frac{d^{|B|-|Q|}}{dx^{B-\{b_{c_{1}}, \dotsc ,b_{c_{|Q|}}\}}} \frac{\partial
L}{\partial y_{i}^{\mu }}.
\end{multline*}
\end{proposition}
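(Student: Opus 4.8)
The plan is to argue by induction on the length $r=|Q|$, using the commutation relations of the three preceding lemmas as the engine of the recursion. The anchor is $r=1$: there the subscript $\{i\}\star Q$ has length two, and I would obtain the formula by stripping off one total derivative at a time from $\frac{d^{|B|}}{dx^B}$ and commuting $\partial/\partial y_{\{i\}\star Q}^\mu$ past it with the second lemma, the residual first-order derivative then acting on $L$ through the third lemma. The crucial structural remark is that, since $L\in\mathcal{A}_{J_G^1(p)}$, every derivative with respect to a coordinate of order $\geq 2$ annihilates $L$ directly; hence each nonzero contribution is forced to come from the commutators picked up while transporting $\partial/\partial y_{\{i\}\star Q}^\mu$ to the right, and this is exactly what produces the sum over entries of $B$.

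For the inductive step I would write $\frac{d^{|B|}}{dx^B}=\frac{d}{dx^{b}}\circ\frac{d^{|B|-1}}{dx^{B-\{b\}}}$ with $b$ the leading entry of $B$, and commute $\partial/\partial y_{\{i\}\star Q}^\mu$ across the single derivative $\frac{d}{dx^{b}}$ by the second lemma. This splits the computation into two families: a \emph{transport} term, in which the operator passes through $\frac{d}{dx^{b}}$ up to the Koszul sign $(-1)^{\mu|B|_2}$ recorded in the last consequence of the third lemma and is then governed by the inductive hypothesis with $B-\{b\}$ in place of $B$; and a \emph{contraction} term $\delta^{b}_{(\cdot)}\,\partial/\partial y_{(\cdot)}^\mu$, in which one index has been absorbed and both the length of the subscript and the length of $B$ drop by one, so the inductive hypothesis applies with $r-1$ in place of $r$. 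Reassembling the two families is what builds the order-preserving selections $b_{c_1},\dots,b_{c_{|Q|}}$ inside $B$: each entry of $Q$ must eventually be contracted against a distinct entry of $B$, and the constraint $-b_{c_1}>\dots>-b_{c_{|Q|}}$ simply records that these contractions occur in the order in which the derivatives are stripped off.

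The hard part will be the sign. I expect the exponent $\mu(|B|_2+|Q|_2)+\varphi(Q,B)$ to be precisely the bookkeeping that makes the recursion close, and verifying this term by term is the technical heart of the argument. Each time an odd operator is moved past $\frac{d}{dx^{b}}$ it contributes a grading sign controlled by the parity of $\mu$ together with $|B|_2$, which accounts for the $\mu(|B|_2+|Q|_2)$ part; the remaining $\varphi(Q,B)$ is assembled incrementally from the positional parities $p(b),q(b)$ as entries leave $B$, the alternation between $p$ and $q$ in the definition of $\varphi_k$ mirroring the alternation of parities of the successively shortened multi-indices. I would therefore isolate a separate combinatorial lemma asserting that removing the leading entry $b$ turns $\varphi(Q,B)$ into $\varphi(Q',B-\{b\})$ plus the explicit correction dictated by $p$ and $q$, so that the signs generated by the transport and contraction branches reproduce the claimed exponent exactly.

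Finally I would check the boundary conventions to rule out spurious terms: the case $|Q|=|B|$, where all of $B$ is consumed and the residual operator is the identity $\frac{d^0}{dx^\emptyset}$ with $\frac{d^0F}{dx^\emptyset}=F$; and the automatic vanishing whenever two entries of $Q$ would have to be contracted against the same entry of $B$, which is excluded by the strict inequalities together with the nilpotency convention that a negative multi-index with a repeated index equals $\emptyset$. Confirming these ensures that the induction terminates cleanly and that no term outside the stated order-preserving sum survives.
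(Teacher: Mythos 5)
Your plan coincides with the proof the paper actually has in mind: the authors state only that ``the proof of these results is a lengthy induction, but only involving standard computations,'' and the subsequent remark confirms that it is obtained by inductively generalizing the explicit $(m|2)$ computations of equations \eqref{eq5_16}--\eqref{eq5_18}, using the preceding commutation lemmas exactly as the engine you describe (peeling one total odd derivative off $d^{|B|}/dx^{B}$, splitting into a transport term and a $\delta$-contraction term, and tracking the Koszul signs that assemble into $\mu(|B|_2+|Q|_2)+\varphi(Q,B)$). Your outline, including the boundary conventions for $|Q|=|B|$ and for repeated indices, is therefore essentially the same argument as the paper's.
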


The proof of these results is a lengthy induction, but only involving
standard computations.

\subsection{The main theorem}

In this subsection, as announced in the Introduction, we study the
equivalence between first$-$order Berezinian variational problems and higher$%
-$order graded variational problems. As the computations are rather
cumbersome, we will illustrate the general situation by considering the case 
$n=2$ (that is, a base manifold of graded dimension $(m|2)$).

\begin{theorem}
\label{maintheorem} Let $\xi _{L}$ be a first$-$order Berezinian density, 
\begin{equation}
\xi _{L} =\left[ d^{G}x^{1}\wedge \cdots \wedge d^{G}x^{m}\otimes \frac{d}{%
dx^{-1}}\circ \ldots \circ \frac{d}{dx^{-n}} \right] \cdot L,\quad L\in 
\mathcal{A}_{J_{G}^{1}(p)},  \label{berezlocal}
\end{equation}
and let 
\begin{equation*}
\lambda _{\xi _{L}} =d^{G}x^{1}\wedge \ldots \wedge d^{G}x^{m} \frac{d^{n}L}{%
dx^{-1}\ldots dx^{-n}}.
\end{equation*}
Let $\Theta _{0}^{L}$ be the graded Poincar\'e$-$Cartan form corresponding
to $-\lambda _{\xi _{L}}$, and let us set 
\begin{equation*}
\Theta ^{L} =\mathcal{L}_{\frac{d}{dx^{-1}}}^{G}\circ \ldots \circ \mathcal{L%
}_{\frac{d}{dx^{-n}}}^{G}\Theta _{0}^{L}
\end{equation*}
and 
\begin{equation*}
\tilde{\Theta}^{L}=\mathcal{L}_{\mathcal{J}_{n+1}}^{G} \left( \frac{d^{n}L}{%
dx^{-1}\ldots dx^{-n}} \right) +\eta ^{G}\cdot \frac{d^{n}L}{dx^{-1}\ldots
dx^{-n}}.
\end{equation*}
Then, we have 
\begin{equation*}
\Theta ^{L}=\tilde{\Theta}^{L}.
\end{equation*}
\end{theorem}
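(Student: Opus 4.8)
The plan is to establish the identity by a direct coordinate computation in an adapted fibred chart, after first removing the piece that is manifestly common to both sides. Set $\widehat{L}=\frac{d^{n}L}{dx^{-1}\cdots dx^{-n}}\in\mathcal{A}_{J_{G}^{n+1}(p)}$ and recall the decompositions $\Theta_{0}^{L}=\mathcal{L}_{\mathcal{J}}^{G}(L)+\eta^{G}\cdot L$ and $\tilde{\Theta}^{L}=\mathcal{L}_{\mathcal{J}_{n+1}}^{G}(\widehat{L})+\eta^{G}\cdot\widehat{L}$. First I would dispose of the Lagrangian-density terms. Since $\frac{d}{dx^{-j}}(x^{i})=0$ for every positive index $i$, one has $\mathcal{L}_{d/dx^{-j}}^{G}(d^{G}x^{i})=d^{G}\bigl(\frac{d}{dx^{-j}}x^{i}\bigr)=0$, hence $\mathcal{L}_{d/dx^{-j}}^{G}\eta^{G}=0$; the graded Leibniz rule then gives $\mathcal{L}_{d/dx^{-1}}^{G}\circ\cdots\circ\mathcal{L}_{d/dx^{-n}}^{G}(\eta^{G}\cdot L)=\eta^{G}\cdot\widehat{L}$. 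Thus the theorem reduces to the single identity
\[
\mathcal{L}_{d/dx^{-1}}^{G}\circ\cdots\circ\mathcal{L}_{d/dx^{-n}}^{G}\bigl(\mathcal{L}_{\mathcal{J}}^{G}(L)\bigr)=\mathcal{L}_{\mathcal{J}_{n+1}}^{G}(\widehat{L}).
\]

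Next I would expand both sides in coordinates. Writing $\Omega_{i}=\iota_{\partial/\partial x^{i}}\eta^{G}$ and using the contact forms $\theta_{Q}^{\mu}$, the left-hand Lie derivative is $\mathcal{L}_{\mathcal{J}}^{G}(L)=\iota_{\mathcal{J}}(d^{G}L)=(-1)^{m-1}\Omega_{i}\wedge\theta^{\mu}\cdot\frac{\partial L}{\partial y_{i}^{\mu}}$, while by the Theorem on the operators $\mathcal{J}_{k}$ the right-hand side is $\mathcal{L}_{\mathcal{J}_{n+1}}^{G}(\widehat{L})=(-1)^{m-1}\Omega_{i}\wedge\theta_{Q}^{\mu}\cdot\frac{\partial\widehat{L}}{\partial y_{i+Q}^{\mu}}$, summed over $0\leq|Q|\leq n$. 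Because $\mathcal{L}_{d/dx^{-j}}^{G}$ annihilates every $d^{G}x^{k}$, it annihilates $\Omega_{i}$, so when the $n$ odd Lie derivatives are distributed over the wedge product by the graded Leibniz rule they act only on the two factors $\theta^{\mu}$ and $\frac{\partial L}{\partial y_{i}^{\mu}}$. Two elementary facts then drive the computation: on the contact factor $\mathcal{L}_{d/dx^{-j}}^{G}$ raises the order, sending $\theta_{Q}^{\mu}$ to $\pm\theta^{\mu}$ with its multi-index lengthened by $-j$, and on a graded function $\mathcal{L}_{d/dx^{-j}}^{G}$ is just the horizontal derivative $\frac{d}{dx^{-j}}$.

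Carrying out the iterated Leibniz expansion then produces a sum indexed by the choice, for each $j\in\{1,\dots,n\}$, of whether $\frac{d}{dx^{-j}}$ hits the contact factor or the coefficient: the operators hitting $\theta^{\mu}$ assemble a multi-index $Q$ on $\theta_{Q}^{\mu}$, and the complementary operators build an iterated horizontal derivative of $\frac{\partial L}{\partial y_{i}^{\mu}}$ along the remaining odd directions. To recognise this as the right-hand side I would invoke the Proposition preceding the theorem with $B=(-1,\dots,-n)$: it evaluates $\frac{\partial\widehat{L}}{\partial y_{\{i\}\star Q}^{\mu}}=\frac{\partial}{\partial y_{\{i\}\star Q}^{\mu}}\frac{d^{|B|}L}{dx^{B}}$ as exactly a signed sum of horizontal derivatives of $\frac{\partial L}{\partial y_{i}^{\mu}}$, indexed by the strictly decreasing sub-selections of $B$ that match $Q$. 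Matching the two expansions term by term, the sub-selection of $B$ on the right corresponds to the set of operators $\frac{d}{dx^{-j}}$ that hit $\theta^{\mu}$ on the left, while the leftover horizontal derivative agrees on both sides.

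The whole difficulty is thus concentrated in the signs. The Koszul signs generated on the left by commuting the odd operators $\mathcal{L}_{d/dx^{-j}}^{G}$ past the contact form $\theta^{\mu}$ and past one another, together with the parity factors $(-1)^{|X||\xi|}$ of the graded Leibniz rule, must reproduce precisely the combinatorial sign $(-1)^{\mu(|B|_{2}+|Q|_{2})+\varphi(Q,B)}$ supplied by the Proposition, the position parities $p(b),q(b)$ accounting for the reordering of the selected indices into strictly decreasing form. This bookkeeping is exactly what the Preliminaries were designed to handle, and I expect it to be the only genuine obstacle; following the paper, I would carry out the sign matching in full first in the representative case $n=2$, where $B=(-1,-2)$ and $Q$ runs through $\emptyset,(-1),(-2),(-1,-2)$, and then let the induction on $|B|$ already encoded in the Proposition promote the result to arbitrary $n$.
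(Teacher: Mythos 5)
Your proposal is correct and follows essentially the same route as the paper: the paper's proof is precisely the single chain of equalities that expands $\mathcal{L}_{\mathcal{J}_{n+1}}^{G}\bigl(\tfrac{d^{n}L}{dx^{-1}\cdots dx^{-n}}\bigr)$ in coordinates, applies the Proposition of the Preliminaries with $B=(-1,\dotsc,-n)$ to evaluate $\tfrac{\partial}{\partial y_{\{i\}\star Q}^{\mu}}\tfrac{d^{|B|}L}{dx^{B}}$, and recognises the resulting signed sum as the Leibniz expansion of $\mathcal{L}_{d/dx^{-1}}^{G}\circ\cdots\circ\mathcal{L}_{d/dx^{-n}}^{G}$ applied to $\Theta_{0}^{L}-\eta^{G}\cdot L$, with the $n=2$ case worked out afterwards exactly as you anticipate. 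Your explicit preliminary reduction of the $\eta^{G}\cdot L$ terms is only left implicit in the paper, but it is the same identity the paper uses in its final step.
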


\begin{proof}
Let $T$ be the totally odd multi-index $T=(-1,\dotsc ,-n)$,
so that $|T|=n$.
We also write
$\varepsilon =(-1)^{\mu (|B|_{2}+|Q|_{2})+\varphi (Q,B)}$.
By applying the preceding proposition, we obtain
\begin{align*}
\tilde{\Theta}^{L}-\eta ^{G}\cdot \frac{d^{n}L}{dx^{T}}
&
=\mathcal{L}_{\mathcal{J}_{n+1}}^{G}
\left( \
frac{d^{n}L}{dx^{T}}
\right)  \\
&
=\sum\limits_{0\leq |Q|\leq n}(-1)^{m-1}
\iota _{\frac{\partial }{\partial x^{i}}}
\eta ^{G}\wedge \theta _{Q}^{\mu }
\cdot
\frac{\partial }{\partial y_{i+Q}^{\mu }}
\frac{d^{n}L}{dx^{T}} \\
& \\
&
=\sum\limits_{0\leq |Q|\leq n}(-1)^{m-1}
\iota _{\frac{\partial }{\partial x^{i}}}
\eta ^{G}\wedge \theta _{Q}^{\mu } \\
& \quad \qquad \cdot
\left(
{\sum\limits_{\underset{b_{c_{1}},
\dotsc,b_{c_{|Q|}}\in B}{-b_{c_{1}}
>\ldots >-b_{c_{|Q|}}}}}
\varepsilon \delta _{b_{c_{1}}}^{i_{1}}
\cdots
\delta _{b_{c_{|Q|}}}^{i_{|Q|}}
\frac{d^{n-|Q|}}{dx^{T-\{b_{c_{1}},
\dotsc ,b_{c_{|Q|}}\}}}
\frac{\partial L}{\partial y_{i}^{\mu }}
\right)  \\
& \\
&
=\sum\limits_{0\leq |Q|\leq n}(-1)^{m-1}
\varepsilon \cdot
\iota _{\frac{\partial }{\partial x^{i}}}
\eta ^{G}\wedge \theta _{i_{1}\ldots
i_{|Q|}}^{\mu }\frac{d^{n-|Q|}}{dx^{T-\{i_{1},
\dotsc ,i_{|Q|}\}}}
\frac{\partial L}{\partial y_{i}^{\mu }} \\
&
=\mathcal{L}_{\frac{d}{dx^{-1}}}^{G}
\circ \ldots \circ \mathcal{L}_{\frac{d}{dx^{-n}}}^{G}
\left(
(-1)^{m-1}\iota _{\frac{\partial }{\partial x^{i}}}
\eta ^{G}\wedge \theta ^{\mu }\otimes
\frac{\partial L}{\partial y_{i}^{\mu }}
\right)  \\
& =\Theta ^{L}-\eta ^{G}\cdot \frac{d^{n}L}{dx^{T}}.
\end{align*}
\end{proof}

\subsection{$(m|2)-$superfield theory}

As the use and notations for multi-indices are rather cumbersome, let us
analyze a specific case in detail, that of supermanifold with $m$ even and $2
$ odd coordinates. We start with a Berezinian density 
\begin{equation*}
\xi _{L} =\left[ d^{G}x^{1}\wedge \ldots \wedge d^{G}x^{m}\otimes \frac{d}{%
dx^{-1}}\circ \frac{d}{dx^{-2}} \right] \cdot L,
\end{equation*}
where $L\in \mathcal{A}_{J_{G}^{1}(p)}$; i.e., $L=L(x^{\alpha },y^{\mu
},y_{\alpha }^{\mu })$. The associated graded Lagrangian density is 
\begin{equation*}
d^{G}x^{1}\wedge \ldots \wedge d^{G}x^{m}\cdot \frac{d^{2}L}{dx^{-1}dx^{-2}}.
\end{equation*}
Next, from $L$ we can obtain, by applying 
\begin{equation*}
\mathcal{J}_{1}=(-1)^{m-1}\iota _{\frac{\partial }{\partial x^{i}}} \eta
^{G}\wedge \theta ^{\mu }\otimes \frac{\partial }{\partial y_{i}^{\mu }}%
(1\leq i\leq m=\dim M),
\end{equation*}
the graded form 
\begin{equation}
\begin{array}{l}
\Theta ^{L}-\eta ^{G}\cdot \dfrac{d^{2}L}{dx^{-1}dx^{-2}} =\mathcal{L}_{%
\frac{d}{dx^{-1}}}^{G}\circ \mathcal{L}_{\frac{d}{dx^{-2}}}^{G} \mathcal{L}_{%
\mathcal{J}_{1}}^{G}L \\ 
\\ 
=\mathcal{L}_{\frac{d}{dx^{-1}}}^{G} \circ \mathcal{L}_{\frac{d}{dx^{-2}}%
}^{G} \left( (-1)^{m-1} \iota _{\frac{\partial }{\partial x^{i}}} \eta
^{G}\wedge \theta ^{\mu } \cdot \dfrac{\partial L}{\partial y_{i}^{\mu }}
\right) \\ 
\\ 
=(-1)^{m-1}\mathcal{L}_{\frac{d}{dx^{-1}}}^{G} \left( \iota _{\frac{\partial 
}{\partial x^{i}}} \eta ^{G}\wedge \left( \theta _{-2}^{\mu }\cdot \dfrac{%
\partial L}{\partial y_{i}^{\mu }} +(-1)^{\mu }\theta ^{\mu }\cdot \dfrac{d}{%
dx^{-2}} \dfrac{\partial L}{\partial y_{i}^{\mu }} \right) \right) \\ 
\\ 
=(-1)^{m-1}\iota _{\frac{\partial }{\partial x^{i}}} \eta ^{G}\wedge \left(
\theta _{-1,-2}^{\mu }\cdot \dfrac{\partial L}{\partial y_{i}^{\mu }}
+(-1)^{\mu +1}\theta _{-2}^{\mu }\cdot \dfrac{d}{dx^{-1}} \dfrac{\partial L}{%
\partial y_{i}^{\mu }} \right. \\ 
\left. \hspace{3cm} +(-1)^{\mu }\theta _{-1}^{\mu }\cdot \dfrac{d}{dx^{-2}} 
\dfrac{\partial L}{\partial y_{i}^{\mu }} +\theta ^{\mu }\cdot \dfrac{d}{%
dx^{-1}}\dfrac{d}{dx^{-2}} \dfrac{\partial L}{\partial y_{i}^{\mu }} \right)
.%
\end{array}
\label{eq5_14}
\end{equation}
Moreover, we can apply the $\mathcal{V}((p_{3})_{10})-$valued $m-$form $%
\mathcal{J}_{3}$ on $J_{G}^{3}(p)$ to the superfunction 
\begin{equation*}
\frac{d^{2}L}{dx^{-1}dx^{-2}}\in \mathcal{A}_{J_{G}^{3}(p)},
\end{equation*}
the result being 
\begin{align}
\!\!\!\!\!\!\tilde{\Theta}^{L} -\eta ^{G}\cdot \frac{d^{2}L}{dx^{-1}dx^{-2}}
& =\mathcal{L}_{\mathcal{J}_{3}}^{G} \frac{d^{2}L}{dx^{-1}dx^{-2}}  \notag \\
& =(-1)^{m-1}\iota _{\frac{\partial }{\partial x^{i}}} \eta ^{G}\wedge
\left( \theta ^{\mu }\cdot \frac{\partial }{\partial y_{i}^{\mu }} \frac{%
d^{2}L}{dx^{-1}dx^{-2}} \right.  \notag \\
& \left. \quad +\theta _{\alpha }^{\mu }\cdot \frac{\partial }{\partial
y_{\alpha i}^{\mu }} \frac{d^{2}L}{dx^{-1}dx^{-2}} +\theta _{\alpha \beta
}^{\mu } \cdot \frac{\partial }{\partial y_{\alpha \beta i}^{\mu }} \frac{%
d^{2}L}{dx^{-1}dx^{-2}} \right) .  \label{eq5_15}
\end{align}

The factor 
\begin{equation*}
\frac{d^2L} {dx^{-1}dx^{-2}}
\end{equation*}
can be evaluated in two different ways: 
\begin{align}
\frac{d^2L}{dx^{-1}dx^{-2}} & =\frac{d}{dx^{-1}} \left( \frac{dL}{dx^{-2}}
\right)  \notag \\
& =\frac{d}{dx^{-1}} \left( \frac{\partial L}{\partial x^{-2}} +y_{-2}^\nu 
\frac{\partial L}{\partial y^\nu } +y_{-2,\alpha }^\nu \frac{\partial L}{%
\partial y_\alpha ^\nu } \right)  \notag \\
& =\frac{d}{dx^{-1}} \frac{\partial L}{\partial x^{-2}} +y_{-1,-2}^\nu \frac{%
\partial L}{\partial y^\nu } -(-1)^\nu y_{-2}^\nu \frac{d}{dx^{-1}} \frac{%
\partial L} {\partial y^\nu }  \notag \\
& \quad +y_{-1,-2,\alpha }^\nu \frac{\partial L}{\partial y_\alpha ^\nu }
+(-1)^{\nu +\alpha +1}y_{-2,\alpha }^\nu \frac{d}{dx^{-1}} \frac{\partial L}{%
\partial y_\alpha ^\nu },  \label{eq5_16}
\end{align}
or else,

\begin{align}
\frac{d^{2}L}{dx^{-1}dx^{-2}}& =-\frac{d}{dx^{-2}} \left( \frac{dL}{dx^{-1}}
\right)  \notag \\
& =-\frac{d}{dx^{-2}} \left( \frac{\partial L}{\partial x^{-1}} +y_{-1}^{\nu
} \frac{\partial L}{\partial y^{\nu }} +y_{-1,\alpha }^{\nu } \frac{\partial
L}{\partial y_{\alpha }^{\nu }} \right)  \notag \\
& =-\frac{d}{dx^{-2}} \frac{\partial L}{\partial x^{-1}}-y_{-2,-1}^{\nu } 
\frac{\partial L}{\partial y^{\nu }} +(-1)^{\nu }y_{-1}^{\nu }\frac{d}{%
dx^{-2}} \frac{\partial L}{\partial y^{\nu }}  \notag \\
& -y_{-2,-1,\alpha }^{\nu } \frac{\partial L}{\partial y_{\alpha }^{\nu }}
-(-1)^{\nu +\alpha +1}y_{-1,\alpha }^{\nu } \frac{d}{dx^{-2}} \frac{\partial
L}{\partial y_{\alpha }^{\nu }}.  \label{eq5_17}
\end{align}
In any case, neither the factors of $\partial L/\partial y^{\mu }$, $%
\partial L/\partial x^{-i}$ nor $d/dx^{-i}(\partial L/\partial y^{\mu })$
contain $3$-derivatives of the kind $y_{-1,-2,\alpha }^{\nu }$. Thus, by
using \eqref{eq5_16}, we have 
\begin{align*}
\frac{\partial }{\partial y_{-j-ki}^{\mu }} \frac{d^{2}L}{dx^{-1}dx^{-2}} & =%
\frac{\partial }{\partial y_{-j-ki}^{\mu }} \left( y_{-1,-2,\alpha }^{\nu } 
\frac{\partial L}{\partial y_{\alpha }^{\nu }} \right) \\
& =\delta _{-j}^{-1}\delta _{-k}^{-2} \frac{\partial L}{\partial y_{i}^{\mu }%
}.
\end{align*}
Also, there are no terms like $y_{ij}^{\mu }$ in $L\in \mathcal{A}%
_{J_{G}^{1}(p)}$, neither in $d^{2}L/dx^{-1}dx^{-2}$ (as $d/dx^{-1}$, $%
d/dx^{-2}$ just introduce derivatives with respect to odd indices), so that

\begin{equation*}
\frac{\partial }{\partial y_{i\alpha }^\mu } \frac{d^2L}{dx^{-1}dx^{-2}} =%
\frac{\partial }{\partial y_{-ji}^\mu } \frac{d^2L}{dx^{-1}dx^{-2}}.
\end{equation*}

Now, comparing \eqref{eq5_14} and \eqref{eq5_15}, we see that proving $%
\Theta ^{L}=\tilde{\Theta}^{L}$ reduces to see whether 
\begin{align*}
\iota _{\frac{\partial }{\partial x^{i}}} \eta ^{G} \wedge \theta _{-j}^{\mu
}\cdot \frac{\partial }{\partial y_{-ji}^{\mu }} \frac{d^{2}L}{dx^{-1}dx^{-2}%
} & =\iota _{\frac{\partial }{\partial x^{i}}} \eta ^{G}\wedge \left(
(-1)^{\mu }\theta _{-1}^{\mu } \cdot \frac{d}{dx^{-2}} \frac{\partial L}{%
\partial y_{i}^{\mu }} \right. \\
& \left. \quad +(-1)^{\mu +1}\theta _{-2}^{\mu }\cdot \frac{d}{dx^{-1}} 
\frac{\partial L}{\partial y_{i}^{\mu }} \right) ,
\end{align*}
or, developing the left-hand side, 
\begin{multline*}
\iota _{\frac{\partial }{\partial x^{i}}}\eta ^{G}\wedge \left( \theta
_{-1}^{\mu }\cdot \frac{\partial }{\partial y_{-1,i}^{\mu }} +\theta
_{-2}^{\mu }\cdot \frac{\partial }{\partial y_{-2,i}^{\mu }} \right) \frac{%
d^{2}L}{dx^{-1}dx^{-2}} \\
=(-1)^{\mu }\iota _{\frac{\partial }{\partial x^{i}}} \eta ^{G}\wedge \left(
\theta _{-1}^{\mu }\cdot \frac{d}{dx^{-2}} \frac{\partial L}{\partial
y_{i}^{\mu }} -\theta _{-2}^{\mu }\cdot \frac{d}{dx^{-1}} \frac{\partial L}{%
\partial y_{i}^{\mu }} \right) .
\end{multline*}
What we are going to see is 
\begin{equation}
\left. 
\begin{array}{c}
\dfrac{\partial }{\partial y_{-1,i}^{\mu }} \dfrac{d^{2}L}{dx^{-1}dx^{-2}}
=(-1)^{\mu }\dfrac{d}{dx^{-2}} \dfrac{\partial L}{\partial y_{i}^{\mu }}
\medskip \\ 
\dfrac{\partial }{\partial y_{-2,i}^{\mu }} \dfrac{d^{2}L}{dx^{-1}dx^{-2}}
=-(-1)^{\mu }\dfrac{d}{dx^{-1}} \dfrac{\partial L}{\partial y_{i}^{\mu }}%
\end{array}
\right\}  \label{eq5_18}
\end{equation}
To prove the first formula in \eqref{eq5_18}, we use \eqref{eq5_16}. It is
clear that the only terms containing factors like $y_{-1,i}^{\mu }$, $%
y_{-2,i}^{\mu }$ are those indicated:

\begin{align*}
\frac{\partial L}{\partial y_{-1,i}^\mu } & =\frac{\partial }{\partial
y_{-1,i}^\mu } \frac{d^2L}{dx^{-1}dx^{-2}} \\
& =\frac{\partial }{\partial y_{-1,i}^\mu } \left( y_{-1,\alpha}^\nu \frac{%
\partial ^2L}{\partial y_\alpha ^\nu \partial x^{-2}} -(-1)^\nu y_{-2}^\nu
y_{-1,\alpha}^\xi \frac{\partial ^2L}{\partial y_\alpha ^\xi \partial y^\nu }
\right. \\
& \hspace{4.5cm} \left. -(-1)^{\nu +\alpha } y_{-2,\alpha }^\nu y_{-1\beta
}^\xi \frac {\partial ^2L}{\partial y_\beta ^\xi \partial y_\alpha ^\nu }
\right) \\
& =\frac{\partial ^2L} {\partial y_i^\mu \partial x^{-2}} +(-1)^{\mu (\nu
+1)} y_{-2}^\nu \frac{\partial ^2L}{\partial y_i^\mu \partial y^\nu }
+(-1)^{\mu +\mu (\nu +\alpha )} y_{-2,\alpha }^\nu \frac{\partial ^2L}{%
\partial y_i^\mu \partial y_\alpha ^\nu } \\
& =(-1)^\mu \left( \frac{\partial } {\partial x^{-2}} +y_{-2}^\nu \frac{%
\partial }{\partial y^\nu } +y_{-2,\alpha }^\nu \frac{\partial }{\partial
y_\alpha ^\nu } \right) \frac{\partial L} {\partial y_i^\mu } \\
& =(-1)^\mu \frac{d}{dx^{-2}} \frac{\partial L}{\partial y_i^\mu }.
\end{align*}
To prove the second formula in \eqref{eq5_18}, one has just to repeat the
preceding computations but using \eqref{eq5_17}.

\begin{remark}
The proof of the lemmas and the proposition in subsection \ref{lemmata} is
just a generalization (by induction) of the computations leading to
equations \eqref{eq5_16}, \eqref{eq5_17}, and \eqref{eq5_18}.
\end{remark}

Thus, once a volume form has been chosen on the base manifold $M$, we have
constructed a Poincar\'{e}$-$Cartan form, 
\begin{equation*}
\Theta ^{L} =\mathcal{L}_{\mathcal{J}_{n+1}}^{G} \left( \frac{d^{n}L}{%
dx^{-1}\ldots dx^{-n}} \right) +\eta ^{G}\cdot \frac{d^{n}L}{dx^{-1}\ldots
dx^{-n}}
\end{equation*}
out of intrinsically defined objects. Moreover, we have proved the
equivalence with the alternative expression 
\begin{equation*}
\Theta ^{L} =\mathcal{L}_{\frac{d}{dx^{-1}}}^{G} \circ \ldots \circ \mathcal{%
L}_{\frac{d}{dx^{-n}}}^{G} \Theta _{0}^{L},
\end{equation*}
which, as it does not involve higher$-$order operators, could be more
appropriate for explicit computations.

\section{Deduction of the Euler - Lagrange equations from the Poincar\'e - Cartan form}

\subsection{The exterior derivative of the Poincar\'e - Cartan form}

According to the previous section, we have a well-defined procedure to
obtain the Euler$-$Lagrange superequations for a superfield theory described
by a first$-$order Berezinian density 
\begin{equation*}
\xi _{L} =\left[ d^{G}x^{1}\wedge \ldots \wedge d^{G}x^{m}\otimes \frac{d}{%
dx^{-1}}\circ \ldots \circ \frac{d}{dx^{-n}} \right] \cdot L,\quad L\in 
\mathcal{A}_{J_{G}^{1}(p)},
\end{equation*}
in a similar way to that of the classical case: First, we must consider the
Poincar\'{e}$-$Cartan form $\Theta ^{L}$, then its differential $d^{G}\Theta
^{L}$ and finally study the insertion of vertical superfields. The idea is
to obtain a decomposition of $d^{G}\Theta ^{L}$ as the product of the Euler$-
$Lagrange operator by the graded contact $1-$forms and/or their derivatives
plus other terms, as expressed in the following proposition.

We make use of the decomposition $d^{G}=D+\partial $, where 
\begin{align*}
D& =D_{0}+D_{1} \\
& =d^{G}x^{\alpha }\otimes \mathcal{L}_{\frac{d}{dx^{\alpha }}}^{G}
\end{align*}
is the graded horizontal differential (given as a sum of the horizontal
differential with respect to even and odd coordinates on the base manifold)
and $\partial =d^{G}-D$ is the graded vertical differential, which
differentiates with respect to the fiber coordinates (recall subsection \ref%
{horver}).

\begin{proposition}
For every $L\in \mathcal{A}_{J_{G}^{1}(p)}$, we have 
\begin{equation*}
d^{G}\Theta ^{L} =\mathcal{L}_{\frac{d}{dx^{-1}}}^{G} \circ \ldots \circ 
\mathcal{L}_{\frac{d}{dx^{-n}}}^{G} \left( \alpha _{L}+\varpi _{L}+D_{1}
\left( \Theta _{0}^{L}-\eta ^{G}\cdot L \right) +\partial \left( \Theta
_{0}^{L}-\eta ^{G}\cdot L \right) \right) ,
\end{equation*}
where $\varpi _{L}$ and $\alpha _{L}$ are the $(m+1)-$forms on $J_{G}^{2}(p)$%
, defined by 
\begin{align*}
\varpi _{L} & =(-1)^{m}\eta ^{G}\wedge \left( \theta ^{\mu } \left( \frac{%
\partial L}{\partial y^{\mu }} -\frac{d}{dx^{i}} \frac{\partial L}{\partial
y_{i}^{\mu }} \right) +\theta _{-i}^{\mu } \frac{\partial L}{\partial
y_{-i}^{\mu }} \right) \\
\alpha _{L} & =(-1)^{m}\eta ^{G}\wedge d^{G}x^{\alpha } \cdot \left( 2\frac{%
dL}{dx^{\alpha }} -\frac{\partial L}{\partial x^{\alpha }} \right) .
\end{align*}
\end{proposition}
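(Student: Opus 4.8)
The plan is to reduce the statement to the first--order form $\Theta_0^L$ and then to compute $d^G\Theta_0^L$ through the bigraded decomposition $d^G=D_0+D_1+\partial$ of subsection~\ref{horver}. First I would carry $d^G$ across the string of Lie derivatives. Since $\mathcal{L}_X^G$ and $d^G$ graded--commute ($[\mathcal{L}_X^G,d^G]=0$), one has $d^G\mathcal{L}_X^G=(-1)^{|X|}\mathcal{L}_X^G d^G$ for homogeneous $X$; as every $\tfrac{d}{dx^{-j}}$ is odd this yields a global sign $(-1)^n$ but no extra terms, so $d^G\Theta^L=(-1)^n\,\mathcal{L}_{\frac{d}{dx^{-1}}}^G\circ\cdots\circ\mathcal{L}_{\frac{d}{dx^{-n}}}^G\bigl(d^G\Theta_0^L\bigr)$. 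Because the composite $\mathcal{L}_{\frac{d}{dx^{-1}}}^G\circ\cdots\circ\mathcal{L}_{\frac{d}{dx^{-n}}}^G$ is common to both sides of the asserted identity, it suffices to match $d^G\Theta_0^L$ with $\alpha_L+\varpi_L+D_1(\Theta_0^L-\eta^G\cdot L)+\partial(\Theta_0^L-\eta^G\cdot L)$ \emph{up to forms annihilated by that composite}; this tolerance is exactly what will accommodate the coefficients appearing in $\alpha_L$.

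Writing $\Phi=\Theta_0^L-\eta^G\cdot L=\mathcal{L}_{\mathcal{J}}^G(L)$, I would expand $d^G\Theta_0^L=D_0\Theta_0^L+D_1\Theta_0^L+\partial\Theta_0^L$ and split each summand along $\Theta_0^L=\Phi+\eta^G\cdot L$. Two structural facts organize the computation. First, $\eta^G$ saturates the even--horizontal degree, so $D_0(\eta^G\cdot L)=0$ and $D_0\Theta_0^L=D_0\Phi$. Second, the horizontal differential of a contact form carries no purely horizontal part, $d^G\theta^\mu=d^Gx^\alpha\wedge\theta_\alpha^\mu$, since the term $d^Gx^\alpha\wedge d^Gx^\beta\,y_{\beta\star\alpha}^\mu$ vanishes: $y_{\beta\star\alpha}^\mu=(-1)^{|\alpha||\beta|}y_{\alpha\star\beta}^\mu$ is graded symmetric while $d^Gx^\alpha\wedge d^Gx^\beta$ is graded antisymmetric. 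Using the local expression \eqref{cartanlocal} for $\Phi$ and $\mathcal{L}_{\frac{d}{dx^h}}^G\theta^\mu=\theta_h^\mu$, the top even--horizontal contact part $D_0\Phi+\partial(\eta^G\cdot L)$ should collapse precisely onto $\varpi_L$: the even--index contact terms $\theta_i^\mu$ produced by $\partial(\eta^G\cdot L)$ and by $D_0\Phi$ cancel, while the surviving $\theta^\mu$--terms assemble into the Euler--Lagrange combination $\tfrac{\partial L}{\partial y^\mu}-\tfrac{d}{dx^i}\tfrac{\partial L}{\partial y_i^\mu}$ together with the odd--index term $\theta_{-i}^\mu\tfrac{\partial L}{\partial y_{-i}^\mu}$. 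The remaining bidegrees of $d^G\Phi$, namely $H_{m-1,1}^1$ and $H_{m-1,0}^2$, are by construction $D_1\Phi$ and $\partial\Phi$, which match $D_1(\Theta_0^L-\eta^G\cdot L)$ and $\partial(\Theta_0^L-\eta^G\cdot L)$ term by term.

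The main obstacle is the purely horizontal, odd--base slot $H_{m,1}^0$, which must be reconciled with $\alpha_L$. In $d^G\Theta_0^L$ this slot receives only $D_1(\eta^G\cdot L)=(-1)^m\,\eta^G\wedge d^Gx^{-j}\tfrac{dL}{dx^{-j}}$, whereas $\alpha_L$ carries the coefficient $2\tfrac{dL}{dx^{-j}}-\tfrac{\partial L}{\partial x^{-j}}$; their difference is $(-1)^m\,\eta^G\wedge d^Gx^{-j}\bigl(\tfrac{dL}{dx^{-j}}-\tfrac{\partial L}{\partial x^{-j}}\bigr)=(-1)^m\,\eta^G\wedge d^Gx^{-j}\bigl(y_{-j}^\mu\tfrac{\partial L}{\partial y^\mu}+y_{-j\star\alpha}^\mu\tfrac{\partial L}{\partial y_\alpha^\mu}\bigr)$. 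The decisive step is to show this difference lies in the kernel of $\mathcal{L}_{\frac{d}{dx^{-1}}}^G\circ\cdots\circ\mathcal{L}_{\frac{d}{dx^{-n}}}^G$: the factors $\eta^G$ and $d^Gx^{-j}$ are inert under these odd Lie derivatives, so one is reduced to evaluating $\tfrac{d}{dx^{-1}}\circ\cdots\circ\tfrac{d}{dx^{-n}}$ on the coefficient, and the contribution of $\tfrac{dL}{dx^{-j}}=\tfrac{d}{dx^{-j}}L$ dies because $\tfrac{d}{dx^{-j}}$ then appears twice within an anticommuting, square--zero family. I expect the genuinely delicate work to be making this vanishing precise for the full coefficient (including the $\tfrac{\partial L}{\partial x^{-j}}$ piece) while keeping every Koszul sign correct for the right $\mathcal{A}$--module structure; the explicit $(m|2)$ computation already carried out is the natural template, to be promoted to arbitrary $n$ by the induction recorded in the Lemmas and Proposition of subsection~\ref{lemmata}.
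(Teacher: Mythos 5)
Your overall architecture matches the paper's: push $d^G$ through the string of odd Lie derivatives, split $d^G\Theta_0^L$ by means of $d^G=D_0+D_1+\partial$, and recognise $\varpi_L$ in the combination $D_0(\Theta_0^L-\eta^G\cdot L)+\partial(\eta^G\cdot L)$ (in the paper's grouping, $D_0(\Theta_0^L-\eta^G\cdot L)+(-1)^m\eta^G\wedge d^GL$); that part of your plan is sound. But two points do not survive scrutiny. First, the commutation sign: from Cartan's formula $\mathcal{L}^G_X=\iota_X\circ d^G+d^G\circ\iota_X$ one gets $d^G\circ\mathcal{L}^G_X=d^G\iota_Xd^G=\mathcal{L}^G_X\circ d^G$ for \emph{every} homogeneous $X$, with no parity sign, and this is what the paper uses. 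Your $(-1)^{|X|}$, hence the global $(-1)^n$, is spurious; if it were real the proposition as stated would fail for odd $n$, so it cannot simply be carried along.

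Second, and more seriously, your treatment of $\alpha_L$ does not close. You correctly identify the purely horizontal slot of $d^G\Theta_0^L$ as $D_1(\eta^G\cdot L)=(-1)^m\eta^G\wedge d^Gx^{-j}\cdot\tfrac{dL}{dx^{-j}}$, and you propose to absorb the mismatch with the coefficient $2\tfrac{dL}{dx^{\alpha}}-\tfrac{\partial L}{\partial x^{\alpha}}$ of $\alpha_L$ into the kernel of $\mathcal{L}^G_{d/dx^{-1}}\circ\cdots\circ\mathcal{L}^G_{d/dx^{-n}}$. The $\tfrac{dL}{dx^{-j}}$ half of the discrepancy does die (a repeated odd total derivative in an anticommuting, square-zero family), but the $\tfrac{\partial L}{\partial x^{-j}}$ half produces $\pm\,\eta^G\wedge d^Gx^{-j}\cdot\tfrac{d^n}{dx^{-1}\cdots dx^{-n}}\tfrac{\partial L}{\partial x^{-j}}$, and there is no reason for this to vanish: already for $n=1$ and $L=x^{-1}y^\mu$ one gets a nonzero $y^\mu_{-1}$ term. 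So the step you defer as ``delicate'' is not a computation waiting to be completed; the vanishing is false, and the kernel mechanism cannot reconcile the two coefficients. The paper does not argue modulo the kernel at all: it proves the bracketed identity exactly, and the factor $2\tfrac{dL}{dx^{\alpha}}-\tfrac{\partial L}{\partial x^{\alpha}}$ emerges from its expansion $\eta^G\wedge d^GL=\eta^G\wedge\bigl(d^Gx^\alpha\cdot\tfrac{dL}{dx^\alpha}+d^Gy^\mu\cdot\tfrac{\partial L}{\partial y^\mu}+d^Gy_\alpha^\mu\cdot\tfrac{\partial L}{\partial y_\alpha^\mu}\bigr)$ once $d^Gy^\mu$ and $d^Gy_\alpha^\mu$ are rewritten as contact forms plus horizontal pieces, which supplies a second copy of $\tfrac{dL}{dx^\alpha}-\tfrac{\partial L}{\partial x^\alpha}$. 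Your horizontal-slot computation and that expansion of $d^GL$ are genuinely in tension, and a complete proof must confront this head-on (e.g.\ by justifying or correcting the coefficient of $d^Gx^\alpha$ in $d^GL$) rather than route it through a kernel that does not contain it; the later observation that $\iota_X\alpha_L=0$ for $p$-vertical $X$ explains why the exact coefficient is harmless for the Euler--Lagrange equations, but it does not prove the identity as stated.
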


\begin{proof}
From the preceding section,
recalling that the operators
$\mathcal{L}_{d/dx^{-1}}^G$
and $d^G$ commute, we obtain
\begin{align*}
d^G\Theta ^L
&
=\mathcal{L}_{
\frac{d}{dx^{-1}}
}^G
\circ \ldots \circ
\mathcal{L}_{
\frac{d}{dx^{-n}}
}^Gd^G
\Theta _0^L\\
&
=\mathcal{L}_{
\frac{d}{dx^{-1}}
}^G
\circ \ldots \circ
\mathcal{L}_{
\frac{d}{dx^{-n}}
}^G
\left(
D_0
\left(
\Theta _0^L-\eta ^G\cdot L
\right)
\right. \\
&
\quad
+D_1
\left(
\Theta _0^L-\eta ^G\cdot L
\right)
+\partial
\left(
\Theta _0^L-\eta ^G\cdot L
\right) \\
&
\left.
\quad
+(-1)^m\eta ^G\wedge d^GL
\right) .
\end{align*}
Let us concentrate in the terms
$D_0(\Theta _0^L-\eta ^G\cdot L)
+(-1)^m\eta ^G\wedge d^GL$.
On the one hand, we have
\begin{align*}
D_0(\Theta _0^L-\eta ^G\cdot L)
&
=d^Gx^i \wedge \mathcal{L}_{\frac{d}{dx^i}}^G
(\Theta _0^L-\eta ^G\cdot L)\\
&
=(-1)^{m-1}\eta ^G\wedge
\left(
\theta _i^\mu
\frac{\partial L}
{\partial y_i^\mu }
+\theta ^\mu
\frac{d}{dx^i}
\frac{\partial L}
{\partial y_i^\mu }
\right) ,
\end{align*}
and, on the other,
\begin{equation*}
\eta ^G\wedge d^GL
=\eta ^G\wedge
\left(
d^Gx^\alpha \cdot
\frac{dL}{dx^\alpha }
+d^Gy^\mu \cdot
\frac{\partial L}
{\partial y^\mu }
+d^Gy_\alpha ^\mu \cdot
\frac{\partial L}
{\partial y_\alpha ^\mu }
\right) .
\end{equation*}
Thus, substituting,
\begin{align*}
&
\!\!\!\!\!\!\!\!\!\!\!\!
\!\!\!\!\!\!\!\!\!\!\!\!
\!\!\!
D_0
\left(
\Theta _0^L-\eta ^G \cdot L
\right)
+(-1)^m\eta ^G\wedge d^GL\\
\qquad
\qquad
&
=(-1)^{m}\eta ^G\wedge
\left(
-\theta _i^\mu
\frac{\partial L}
{\partial y_i^\mu }
-\theta ^\mu
\frac{d}{dx^i}
\frac{\partial L}
{\partial y_i^\mu }
\right. \\
&
\quad
\qquad
\qquad
\left.
+d^Gx^\alpha
\cdot
\frac{dL}{dx^\alpha }
+d^Gy^\mu
\cdot
\frac{\partial L}
{\partial y^\mu }
+d^Gy_\alpha^\mu
\cdot
\frac{\partial L}
{\partial y_\alpha ^\mu }
\right)  \\
\qquad
\qquad
&
=(-1)^m\eta ^G\wedge
\left(
-\theta _i^\mu
\frac{\partial L}
{\partial y_i^\mu }
-\theta ^\mu
\frac{d}{dx^i}
\frac{\partial L}
{\partial y_i^\mu }
+d^Gx^\alpha
\cdot
\frac{dL}{dx^\alpha }
\right. \\
&
\quad
\qquad
\qquad
\left.
+\left(
\theta ^\mu +d^Gx^\alpha
\cdot
y_\alpha ^\mu
\right)
\frac{\partial L}
{\partial y^\mu }
+\left(
\theta _\alpha ^\mu
+d^Gx^\beta
\cdot
y_{\beta \alpha }^\mu
\right)
\frac{\partial L}
{\partial y_\alpha ^\mu }
\right)  \\
\qquad
\qquad
&
=(-1)^m\eta ^G\wedge
\left(
\theta _{-i}^\mu
\frac{\partial L}
{\partial y_{-i}^\mu }
+\theta ^\mu
\left(
\frac{\partial L}
{\partial y^\mu }
-\frac{d}{dx^i}
\frac{\partial L}
{\partial y_i^\mu }
\right)
\right.  \\
&
\quad
\qquad
\qquad
\left.
+d^Gx^\alpha
\cdot
\frac{dL}{dx^\alpha }
+d^Gx^\alpha
\cdot
y_\alpha ^\mu
\frac{\partial L}
{\partial y^\mu }
+d^Gx^\beta
\cdot
y_{\beta \alpha}^\mu
\frac{\partial L}
{\partial y_\alpha ^\mu}
\right) \\
\qquad
\qquad
&
=(-1)^{m}\eta ^G\wedge
\left(
\theta _{-i}^\mu
\frac{\partial L}
{\partial y_{-i}^\mu }
+\theta ^\mu
\left(
\frac{\partial L}
{\partial y^\mu }
-\frac{d}{dx^i}
\frac{\partial L}
{\partial y_i^\mu }
\right)
\right. \\
&
\quad
\qquad
\qquad
\left.
+d^Gx^\alpha
\cdot
\left(
2\frac{dL}{dx^\alpha }
-\frac{\partial L}
{\partial x^\alpha }
\right)
\right) \\
\qquad
\qquad
&
=\varpi _L+\alpha _L.
\end{align*}
\end{proof}

We should also remark that for every vector field $X$ on $J_G^2(p)$ vertical
over $(M,\mathcal{A})$, we have $\iota _X\alpha_L=0$.

Now, we would like to extract the Euler$-$Lagrange superequations of field
theory from the decomposition of the previous proposition. To this end, we
first need the following technical lemma, whose proof reduces to a simple
computation:

\begin{lemma}
\label{Sigma} Let $\Sigma _{-n}$ denote the group of permutations of $%
\{-1,\dotsc ,-n\}$. For any $A,B\in \Omega _{G}(J_{G}^{1}(p))$, we have 
\begin{multline*}
\mathcal{L}_{\frac{d}{dx^{-1}}}^{G}\circ \ldots \circ \mathcal{L}_{\frac{d}{%
dx^{-n}}}^{G}(A\wedge B) \\
=\sum\limits_{\substack{ \sigma =\sigma _{1}\cup \sigma _{2}\in \Sigma _{-n} 
\\ 0\leq |\sigma |\leq n}}(-1)^{|\sigma _{2}||A|+\tau } \left( \mathcal{L}_{%
\frac{d}{dx^{\sigma _{1}(-1)}}}^{G} \circ \ldots \circ \mathcal{L}_{\frac{d}{%
dx^{\sigma _{1} (-|\sigma _{1}|)}}}^{G}A \right) \\
\cdot \left( \mathcal{L}_{\frac{d}{dx^{\sigma _{2}(-|\sigma _{1}|+1)}}}^{G}
\circ \ldots \circ \mathcal{L}_{\frac{d}{dx^{\sigma _{2}(-n)}}}^{G}B \right)
,
\end{multline*}
where $\tau $ is the number of transpositions needed to reorder $(\sigma
_{1}(-1),\dotsc ,\sigma _{2}(-n))$.
\end{lemma}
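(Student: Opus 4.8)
The plan is to prove the identity by induction on $n$, the only tool being the graded Leibniz rule for the Lie derivative. By bilinearity of $\wedge$ and of each $\mathcal{L}^G$ it suffices to treat homogeneous $A$ and $B$. Each operator $d/dx^{-j}$ is the horizontal lift of the odd coordinate field $\partial/\partial x^{-j}$, hence an odd derivation, so for homogeneous $A$ one has
\[
\mathcal{L}_{d/dx^{-j}}^{G}(A\wedge B) =\mathcal{L}_{d/dx^{-j}}^{G}(A)\wedge B +(-1)^{|A|}\,A\wedge \mathcal{L}_{d/dx^{-j}}^{G}(B).
\]
The base case $n=1$ is exactly this rule: the right-hand side of the Lemma reduces to two summands, one with $\sigma_{1}=(-1)$, $\sigma_{2}=\emptyset$ (sign $+1$, $\tau=0$) and one with $\sigma_{1}=\emptyset$, $\sigma_{2}=(-1)$ (sign $(-1)^{|A|}$, $\tau=0$).

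For the inductive step I would assume the formula for the composite $\mathcal{L}_{d/dx^{-2}}^{G}\circ\cdots\circ\mathcal{L}_{d/dx^{-n}}^{G}$ applied to $A\wedge B$ (a shuffle sum over $\{-2,\dots,-n\}$) and apply $\mathcal{L}_{d/dx^{-1}}^{G}$ to both sides. By the Leibniz rule the outermost operator distributes over each summand $\tilde{A}\cdot\tilde{B}$, where $\tilde{A}$ carries the $|\sigma_{1}|$ derivations assigned to $A$ and $\tilde{B}$ the remaining ones; since every derivation is odd, $|\tilde{A}|\equiv |A|+|\sigma_{1}|\pmod 2$. This produces two families of terms: those in which $-1$ is prepended to the $A$-group (Koszul factor trivial), and those in which $-1$ crosses $\tilde{A}$ and is prepended to the $B$-group (Koszul factor $(-1)^{|\tilde{A}|}=(-1)^{|A|+|\sigma_{1}|}$).

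The heart of the argument — and the step where I expect the real work to lie — is checking that these two families reassemble exactly into the shuffle sum for $\{-1,\dots,-n\}$ carrying the sign $(-1)^{|\sigma_{2}||A|+\tau}$. In the first family the enlarged shuffle keeps $|\sigma_{2}|$ and its reordering sign unchanged, so there is nothing to reconcile. In the second family $|\sigma_{2}|$ grows by one, and inserting $-1$ immediately after $\sigma_{1}$ forces it past the $|\sigma_{1}|$ entries of $\sigma_{1}$ to reach its home position, so $\tau$ increases by $|\sigma_{1}|$. The two changes combine into
\[
(-1)^{(|\sigma_{2}|+1)|A|+(\tau+|\sigma_{1}|)} =(-1)^{|\sigma_{2}||A|+\tau}\cdot(-1)^{|A|+|\sigma_{1}|},
\]
and the trailing factor is precisely the Koszul sign delivered by the Leibniz rule. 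Hence each new summand appears with the correct coefficient, closing the induction. The sole genuine obstacle is this sign accounting: one must verify that the transposition count $\tau$ absorbs exactly the cumulative parity shifts $(-1)^{|\sigma_{1}|}$ arising from differentiating the odd-shifted factor $\tilde{A}$, which is what the displayed identity records. It is worth noting that anticommutativity of the operators $d/dx^{-j}$ is never invoked directly; no operator is ever commuted past another, since each lands at the outermost slot of its group, and the permutation sign $\tau$ is built into the statement and merely reproduced by the recursion.
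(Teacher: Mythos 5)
Your induction is correct, and it supplies exactly the ``simple computation'' that the paper invokes but never writes out: the text introduces Lemma \ref{Sigma} with the remark that its proof ``reduces to a simple computation'' and gives no further argument, so there is no paper proof to diverge from. Your base case is the graded Leibniz rule for the odd derivation $\mathcal{L}^{G}_{d/dx^{-j}}$, and your sign bookkeeping in the inductive step is right: prepending $-1$ to the $B$-group multiplies the inductive sign by the Koszul factor $(-1)^{|A|+|\sigma_{1}'|}$, which is absorbed by $|\sigma_{2}|\mapsto|\sigma_{2}|+1$ and $\tau\mapsto\tau+|\sigma_{1}'|$. One point worth making explicit, which your recursion settles implicitly: the sum in the lemma must be read as ranging over the $2^{n}$ order-preserving shuffles (equivalently, over subsets of $\{-1,\dotsc,-n\}$, each group taken in its natural order), not over all permutations together with all split points --- under the latter reading each subset-term would be counted $|\sigma_{1}|!\,|\sigma_{2}|!$ times (up to the sign $\tau$ and the anticommutativity of the $d/dx^{-j}$) and the identity would fail. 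Your induction produces each such shuffle exactly once, since $-1$ must head whichever group it lands in, and this reading is the one confirmed by the paper's own use of the lemma in the $(m|2)$ example and in Proposition \ref{Pro17}. Your closing observation that anticommutativity of the $d/dx^{-j}$ is never needed is also accurate.
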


\begin{proposition}
\label{Pro17} With the preceding notations, we have 
\begin{multline*}
\mathcal{L}_{\frac{d}{dx^{-1}}}^{G} \circ \ldots \circ \mathcal{L}_{\frac{d}{%
dx^{-n}}}^{G}(\varpi _{L}) \\
=\sum\limits_{\substack{ \sigma=\sigma _{1}\cup \sigma _{2}\in \Sigma _{-n} 
\\ 0\leq |\sigma _{2}|\leq n  \\  \\ |\sigma _{2}|\mu +\tau }} (-1)^{|\sigma
_{2}|\mu +\tau +m} \eta ^{G}\wedge \theta _{\sigma _{1}(-1)\ldots \sigma
_{1} (-|\sigma _{1}|)}^{\mu } \frac{d^{|\sigma _{2}|}\mathcal{E}(L)} {%
dx^{\sigma _{2}(-|\sigma _{1}|-1)}\ldots dx^{\sigma _{2}(-n)}},
\end{multline*}
where $\mathcal{E}$ is the Euler$-$Lagrange operator, 
\begin{equation*}
\mathcal{E}(L) =\frac{\partial L}{\partial y^{\mu }} -\frac{d}{dx^{i}} \frac{%
\partial L}{\partial y_{i}^{\mu }} -(-1)^{\mu }\frac{d}{dx^{-i}} \frac{%
\partial L}{\partial y_{-i}^{\mu }}.
\end{equation*}
\end{proposition}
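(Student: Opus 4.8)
The plan is to reduce the iterated odd Lie derivative to a graded Leibniz expansion governed by Lemma~\ref{Sigma}, and then to match the outcome against the right-hand side by sorting terms according to which contact form $\theta^\mu_A$ occurs, with $A$ ranging over subsets of $\{-1,\dotsc,-n\}$. First I would record that $\eta^G=d^Gx^1\wedge\cdots\wedge d^Gx^m$ is annihilated by every odd horizontal Lie derivative: since $\mathcal{L}^G$ commutes with $d^G$ and acts as a derivation on functions, $\mathcal{L}^G_{\frac{d}{dx^{-j}}}(d^Gx^h)=d^G\bigl(\tfrac{dx^h}{dx^{-j}}\bigr)=d^G(\delta^h_{-j})=0$ for every positive $h$, whence $\mathcal{L}^G_{\frac{d}{dx^{-j}}}\eta^G=0$. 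Applying Lemma~\ref{Sigma} with $A=\eta^G$, every splitting $\sigma=\sigma_1\cup\sigma_2$ that assigns a nonempty $\sigma_1$ to the factor $\eta^G$ drops out, so the whole string passes through $\eta^G$ up to the sign recorded in Lemma~\ref{Sigma}. It then suffices to compute $\mathcal{L}^G_{\frac{d}{dx^{-1}}}\circ\cdots\circ\mathcal{L}^G_{\frac{d}{dx^{-n}}}$ on the bracket $\theta^\mu\,C_\mu+\theta^\mu_{-i}\,\partial L/\partial y^\mu_{-i}$, where $C_\mu=\partial L/\partial y^\mu-\tfrac{d}{dx^i}\,\partial L/\partial y^\mu_i$ is the even part of $\mathcal{E}(L)$.

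Next I would use two facts dual to the commutator lemmas of subsection~\ref{lemmata}: on a contact form the odd Lie derivative raises the multi-index, $\mathcal{L}^G_{\frac{d}{dx^{-j}}}\theta^\mu_Q=\theta^\mu_{\{-j\}\star Q}$ (up to a parity sign), and on a graded function it acts as $\tfrac{d}{dx^{-j}}$. Expanding each of the two products by Lemma~\ref{Sigma} distributes the $n$ odd derivatives between the contact-form factor and the coefficient: the $\sigma_1$-part assembles the form $\theta^\mu_{\sigma_1(-1)\ldots\sigma_1(-|\sigma_1|)}$, while the $\sigma_2$-part produces $\tfrac{d^{|\sigma_2|}}{dx^{\sigma_2(\ldots)}}$ acting on the coefficient. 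The first product $\theta^\mu C_\mu$ thereby yields, for every subset $A$, exactly the even part of $\mathcal{E}(L)$ sitting under the outer derivatives.

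The crux is to see that the second product $\theta^\mu_{-i}\,\partial L/\partial y^\mu_{-i}$ supplies the odd part $-(-1)^\mu\tfrac{d}{dx^{-i}}\,\partial L/\partial y^\mu_{-i}$ of $\mathcal{E}(L)$. Here the index $-i$ is already carried by the form, so any splitting producing $\theta^\mu_A$ forces $-i\in A$ and places $-i$ in the derivative set of the coefficient; ordering that derivative innermost gives precisely $\tfrac{d^{n-|A|}}{dx^{\{-1,\dotsc,-n\}\setminus A}}\bigl(\tfrac{d}{dx^{-i}}\,\partial L/\partial y^\mu_{-i}\bigr)$. Conversely, on the right-hand side the odd part of $\mathcal{E}(L)$ is killed whenever its index $-i$ also lies in the outer set $\sigma_2$, because the horizontal lifts $\tfrac{d}{dx^{-i}}$ commute and square to zero; it therefore survives only for $-i\in\sigma_1=A$, matching the count above exactly. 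Assembling the two products reconstitutes $\mathcal{E}(L)=C_\mu-(-1)^\mu\tfrac{d}{dx^{-i}}\,\partial L/\partial y^\mu_{-i}$ under the outer string $\tfrac{d^{|\sigma_2|}}{dx^{\sigma_2}}$, for each $A$.

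The hard part will be the sign bookkeeping together with the consistency of the reindexing. One must track the parity factors $(-1)^\mu$ produced each time an odd derivative crosses $\theta^\mu$ or an odd coefficient, the transposition sign $(-1)^\tau$ of Lemma~\ref{Sigma}, and the overall $(-1)^m$ from $\varpi_L$, and verify that their product is the stated $(-1)^{|\sigma_2|\mu+\tau+m}$. The most delicate point is the double role of the index $-i$ in the second product: it labels the contact form, so it belongs to $\sigma_1$ in the final sum, while simultaneously furnishing the inner derivative $\tfrac{d}{dx^{-i}}$ inside $\mathcal{E}(L)$. Confirming that this identification is compatible with the permutation $\sigma$ indexing the right-hand side, and with the vanishing-by-repetition selection above, is where the calculation must be carried out carefully; I expect this combinatorial matching, rather than any single analytic step, to be the main obstacle.
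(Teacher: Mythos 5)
Your proposal follows essentially the same route as the paper's proof: factor the inert $(-1)^m\eta^G$ out of the iterated odd Lie derivatives, expand $\theta^\mu\omega_\mu+\theta^\mu_{-i}\,\partial L/\partial y^\mu_{-i}$ via Lemma~\ref{Sigma}, and then reindex the second family of terms so that the label $-i$ carried by the contact form is traded for an inner derivative $\tfrac{d}{dx^{-i}}$ acting on $\partial L/\partial y^\mu_{-i}$, reconstituting $\mathcal{E}(L)$; the vanishing of the would-be terms with $-i\in\sigma_2$ (since the odd total derivatives square to zero) is exactly the observation that makes the paper's regrouping consistent. The sign bookkeeping you flag is indeed the only remaining labor, and it comes out to the stated $(-1)^{|\sigma_2|\mu+\tau+m}$ just as in the paper.
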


\begin{proof}
Let us write
\begin{equation*}
\varpi _L
=(-1)^m\eta ^G\wedge
\left(
\theta ^\mu \omega _\mu
+\theta _{-i}^\mu
\frac{\partial L}
{\partial y_{-i}^\mu }
\right) ,
\end{equation*}
where
\begin{equation*}
\omega _\mu
=\frac{\partial L}
{\partial y^\mu }
-\frac{d}{dx^i}
\frac{\partial L}
{\partial y_i^\mu }.
\end{equation*}
Then, we have
\begin{multline*}
\mathcal{L}_{
\frac{d}{dx^{-1}}}^G
\circ \ldots \circ
\mathcal{L}_{
\frac{d}{dx^{-n}}
}^G
(\varpi _L) \\
=(-1)^m\eta ^G\wedge
\mathcal{L}_{
\frac{d}{dx^{-1}}
}^G
\circ \ldots \circ
\mathcal{L}_{
\frac{d}{dx^{-n}}
}^G
\left(
\theta ^\mu \omega _\mu
+\theta _{-i}^\mu
\frac{\partial L}
{\partial y_{-i}^\mu }
\right) ,
\end{multline*}
and by applying Lemma
\ref{Sigma}, we obtain
\begin{align*}
&
\mathcal{L}_{
\frac{d}{dx^{-1}}
}^G
\circ \ldots \circ
\mathcal{L}_{
\frac{d}{dx^{-n}}
}^G
\left(
\theta ^\mu \omega _\mu
+\theta _{-i}^\mu
\frac{\partial L}
{\partial y_{-i}^\mu }
\right) \\
&
=\sum \limits _{
\substack{
\sigma
=\sigma _1\cup \sigma _2
\in \Sigma _{-n}
\\
0\leq |\sigma _2|\leq n
}
}
\left(
(-1)^{
|\sigma _2|\mu +\tau
}
\left(
\mathcal{L}_{
\frac{d}{
dx^{\sigma _1(-1)}
}
}^G
\circ \ldots \circ
\mathcal{L}_{
\frac{d}{
dx^{
\sigma _1(-|\sigma _1|)
}
}
}^G
\right)
\theta ^\mu \cdot
\right. \\
&
\hspace{5cm}
\cdot
\left(
\mathcal{L}_{
\frac{d}
{dx^{
\sigma _2(-|\sigma _1|+1)
}
}
}^G
\circ \ldots \circ
\mathcal{L}_{
\frac{d}{
dx^{\sigma _2(-n)}
}
}^G
\right)
\omega _\mu \\
&
\quad
+(-1)^{
|\sigma _2|(\mu +1)+\tau
}
\left(
\mathcal{L}_{
\frac{d}{
dx^{\sigma _1(-1)}
}
}^G
\circ \ldots \circ
\mathcal{L}_{
\frac{d}{
dx^{
\sigma _1(-|\sigma _1|)
}
}
}^G
\theta _{-i}^\mu
\right)
\cdot\\
&
\hspace{5cm}
\left.
\cdot
\left(
\mathcal{L}_{
\frac{d}{
dx^{
\sigma _2(-|\sigma _1|+1)
}
}
}^G
\circ \ldots \circ
\mathcal{L}_{
\frac{d}{
dx^{\sigma _2(-n)}
}
}^G
\frac{\partial L}
{\partial y_{-i}^\mu }
\right)
\right) \\
&
\\
&
=\sum \limits _{
\substack{
\sigma
=(\sigma _1\cup \sigma _2)
\in \Sigma _{-n}\\
0\leq |\sigma _2|\leq n
}
}
\left(
(-1)^{|\sigma _2|\mu +\tau }
\theta _{
\sigma _1(-1)\ldots
\sigma _1(-|\sigma _1|)
}^\mu
\frac{d^{|\sigma _2|}}
{
dx^{\sigma _2
(-|\sigma _1|-1)}
\ldots
dx^{\sigma _2(-n)}
}
\omega _\mu
\right.  \\
&
\quad+
\left.
(-1)^{
|\sigma _2|(\mu +1)+\tau
}
\theta _{\sigma _1
(-1)\ldots \sigma _1
(-|\sigma _1|),-i
}^\mu
\frac{d^{|\sigma _2|}}
{dx^{\sigma _2
(-|\sigma _1|-1)}
\ldots
dx^{\sigma _2(-n)}}
\frac{\partial L}
{\partial y_{-i}^\mu }
\right) \\
&
\\
&
=\sum \limits_{
\substack{
\sigma
=(\sigma _1\cup \sigma _2)
\in \Sigma _{-n}\\
0\leq |\sigma _2|\leq n
}
}
(-1)^{|\sigma _2|\mu +\tau }
\theta _{
\sigma _1(-1)
\ldots
\sigma _1(-|\sigma _1|)
}^\mu
\left(
\frac{d^{|\sigma _2|}}
{dx^{\sigma _2(-|\sigma _1|-1)}
\ldots
dx^{\sigma _2(-n)}}
\omega _\mu
\right. \\
&
\hspace{4.5cm}
\left.
-(-1)^\mu
\frac{d^{|\sigma _2|}}
{dx^{\sigma _2(-|\sigma _1|-1)}
\ldots
dx^{\sigma _2(-n)}}
\frac{d}{dx^{-i}}
\frac{\partial L}
{\partial y_{-i}^\mu }
\right) \\
& \\
&
=\sum \limits _{\substack{
\sigma
=\sigma _1\cup \sigma _2
\in \Sigma_{-n}\\
0\leq |\sigma _2|\leq n
}
}
(-1)^{|\sigma _2|\mu +\tau }
\theta _{
\sigma _1(-1)
\ldots
\sigma _1(-|\sigma _1|)
}^\mu
\frac{d^{|\sigma _2|}}
{dx^{\sigma _2
(-|\sigma _1|-1)}
\ldots dx^{\sigma _2(-n)}}
\left(
\mathcal{E}(L)
\right) .
\end{align*}
\end{proof}

\subsection{An example}

Again, let us clarify the notation by working out the example of $(m|2)-$%
superfield theory. Here we have

\begin{align*}
& \mathcal{L}_{\frac{d}{dx^{-1}}}^{G}\mathcal{L}_{\frac{d}{dx^{-2}}}^{G}
\left( \theta ^{\mu } \left( \frac{\partial L}{\partial y^{\mu }} -\frac{d}{%
dx^{i}}\frac{\partial L}{\partial y_{i}^{\mu }} \right) +\theta _{-1}^{\mu } 
\frac{\partial L}{\partial y_{-1}^{\mu }} +\theta _{-2}^{\mu }\frac{\partial
L}{\partial y_{-2}^{\mu }} \right) \\
& =\mathcal{L}_{\frac{d}{dx^{-1}}}^{G} \left( \theta _{-2}^{\mu } \left( 
\frac{\partial L}{\partial y^{\mu }} -\frac{d}{dx^{i}}\frac{\partial L}{%
\partial y_{i}^{\mu }} \right) +(-1)^{\mu }\theta ^{\mu }\frac{d}{dx^{-2}}
\left( \frac{\partial L}{\partial y^{\mu }} -\frac{d}{dx^{i}}\frac{\partial L%
}{\partial y_{i}^{\mu }} \right) \right. \\
& \quad +\left. \theta _{-2,-1}^{\mu } \frac{\partial L}{\partial
y_{-1}^{\mu}} -(-1)^{\mu }\theta _{-1}^{\mu } \frac{d}{dx^{-2}}\frac{%
\partial L}{\partial y_{-1}^{\mu }} -(-1)^{\mu }\theta _{-2}^{\mu }\frac{d}{%
dx^{-2}} \frac{\partial L}{\partial y_{-2}^{\mu }} \right) \\
& =\theta _{-1,-2}^{\mu } \left( \frac{\partial L}{\partial y^{\mu }} -\frac{%
d}{dx^{i}}\frac{\partial L}{\partial y_{i}^{\mu }} \right) -(-1)^{\mu
}\theta _{-2}^{\mu }\frac{d}{dx^{-1}} \left( \frac{\partial L}{\partial
y^{\mu }} -\frac{d}{dx^{i}}\frac{\partial L}{\partial y_{i}^{\mu }} \right)
\\
& \quad +(-1)^{\mu }\theta _{-1}^{\mu } \frac{d}{dx^{-2}} \left( \frac{%
\partial L}{\partial y^{\mu }} -\frac{d}{dx^{i}}\frac{\partial L}{\partial
y_{i}^{\mu }} \right) +\theta ^{\mu }\frac{d}{dx^{-1}}\frac{d}{dx^{-2}}
\left( \frac{\partial L}{\partial y^{\mu }} -\frac{d}{dx^{i}}\frac{\partial L%
}{\partial y_{i}^{\mu }} \right) \\
& \quad +(-1)^{\mu }\theta _{-2,-1}^{\mu } \frac{d}{dx^{-1}}\frac{\partial L%
}{\partial y_{-1}^{\mu }} +\theta _{-1}^{\mu }\frac{d}{dx^{-1}}\frac{d}{%
dx^{-2}} \frac{\partial L}{\partial y_{-1}^{\mu }} \\
& \quad -(-1)^{\mu }\theta _{-1,-2}^{\mu } \frac{d}{dx^{-2}}\ \frac{\partial
L}{\partial y_{-2}^{\mu }} +\theta _{-2}^{\mu }\frac{d}{dx^{-1}}\frac{d}{%
dx^{-2}} \frac{\partial L}{\partial y_{-2}^{\mu }}.
\end{align*}
Next, grouping common factors of the contact $1-$forms, 
\begin{align*}
& \mathcal{L}_{\frac{d}{dx^{-1}}}^{G} \mathcal{L}_{\frac{d}{dx^{-2}}}^{G}
\left( \theta ^{\mu } \left( \frac{\partial L}{\partial y^{\mu }} -\frac{d}{%
dx^{i}}\frac{\partial L}{\partial y_{i}^{\mu }} \right) -\theta _{-1}^{\mu }%
\frac{\partial L}{\partial y_{-1}^{\mu }} -\theta _{-2}^{\mu }\frac{\partial
L}{\partial y_{-2}^{\mu }} \right) \\
& =\theta _{-1,-2}^{\mu } \left( \frac{\partial L}{\partial y^{\mu }} -\frac{%
d}{dx^{i}}\frac{\partial L}{\partial y_{i}^{\mu }} -(-1)^{\mu }\frac{d}{%
dx^{-1}} \frac{\partial L}{\partial y_{-1}^{\mu }} -(-1)^{\mu }\frac{d}{%
dx^{-2}} \frac{\partial L}{\partial y_{-2}^{\mu }} \right) \\
& +\theta _{-1}^{\mu } \left( (-1)^{\mu }\frac{d}{dx^{-2}} \left( \frac{%
\partial L}{\partial y^{\mu }} -\frac{d}{dx^{i}} \frac{\partial L}{\partial
y_{i}^{\mu }} \right) +\frac{d^{2}}{dx^{-1}dx^{-2}} \frac{\partial L}{%
\partial y_{-1}^{\mu }} \right) \\
& -\theta _{-2}^{\mu } \left( (-1)^{\mu }\frac{d}{dx^{-1}} \left( \frac{%
\partial L}{\partial y^{\mu }} -\frac{d}{dx^{i}}\frac{\partial L}{\partial
y_{i}^{\mu }} \right) +\frac{d^{2}}{dx^{-1}dx^{-2}} \frac{\partial L}{%
\partial y_{-2}^{\mu }} \right) \\
& +\theta ^{\mu } \left( \frac{d^{2}}{dx^{-1}dx^{-2}} \left( \frac{\partial L%
}{\partial y^{\mu }} -\frac{d}{dx^{i}}\frac{\partial L}{\partial y_{i}^{\mu }%
} \right) \right) ,
\end{align*}
and an algebraic rearrangement finally gives,

\begin{align*}
& \mathcal{L}_{\frac{d}{dx^{-1}}}^{G} \mathcal{L}_{\frac{d}{dx^{-2}}}^{G}
\left( \theta ^{\mu } \left( \frac{\partial L}{\partial y^{\mu }} -\frac{d}{%
dx^{i}}\frac{\partial L}{\partial y_{i}^{\mu }} \right) -\theta _{-1}^{\mu }%
\frac{\partial L}{\partial y_{-1}^{\mu }} -\theta _{-2}^{\mu }\frac{\partial
L }{\partial y_{-2}^{\mu }} \right) \\
& =\theta _{-1,-2}^{\mu } \left( \frac{\partial L}{\partial y^{\mu }} -\frac{%
d}{dx^{i}}\frac{\partial L}{\partial y_{i}^{\mu }} -(-1)^{\mu }\frac{d}{%
dx^{-j}} \frac{\partial L}{\partial y_{-j}^{\mu }} \right) \\
& +(-1)^{\mu }\theta _{-1}^{\mu } \left( \frac{d}{dx^{-2}} \left( \frac{%
\partial L}{\partial y^{\mu }} -\frac{d}{dx^{i}} \frac{\partial L}{\partial
y_{i}^{\mu }} \right) -(-1)^{\mu }\frac{d}{dx^{-2}} \frac{d}{dx^{-j}} \frac{%
\partial L}{\partial y_{-j}^{\mu }} \right) \\
& -(-1)^{\mu }\theta _{-2}^{\mu } \left( frac{d}{dx^{-1}} \left( \frac{%
\partial L}{\partial y^{\mu }} -\frac{d}{dx^{i}}\frac{\partial L}{\partial
y_{i}^{\mu } }\right) -(-1)^{\mu }\frac{d}{dx^{-1}}\frac{d}{dx^{-j}} \frac{%
\partial L}{\partial y_{-j}^{\mu }} \right) \\
& +\theta ^{\mu } \left( \frac{d^{2}}{dx^{-1}dx^{-2}} \left( \frac{\partial L%
}{\partial y^{\mu }} -\frac{d}{dx^{i}}\frac{\partial L}{\partial y_{i}^{\mu }%
} -(-1)^{\mu } \frac{d}{dx^{-j}}\frac{\partial L}{\partial y_{-j}^{\mu }}
\right) \right) \\
& =\left( \theta _{-1,-2}^{\mu } +(-1)^{\mu }\theta _{-1}^{\mu }-(-1)^{\mu }
\theta _{-2}^{\mu }+\theta ^{\mu } \right) \mathcal{E}(L).
\end{align*}

\subsection{The Euler - Lagrange equations}

In view of Proposition \ref{Pro17}, the term $\varpi _{L}$ alone already
gives us the Euler$-$Lagrange equations, so we must study the vanishing of
the terms 
\begin{equation*}
D_{1}\left( \Theta _{0}^{L}-\eta ^{G}\cdot L \right) +\partial \left( \Theta
_{0}^{L}-\eta ^{G}\cdot L \right) .
\end{equation*}

\begin{lemma}
\label{lemma14} With the preceding notations, we have 
\begin{equation*}
\partial =\theta ^\mu \wedge \mathcal{L}_{ \frac{\partial }{\partial y^\mu }%
}^G +\theta _\alpha ^\mu \wedge \mathcal{L}_{ \frac{\partial }{\partial
y_\alpha ^\mu }}^G -d^G\theta ^\mu \wedge \iota _{ \frac{\partial }{\partial
y^\mu } } -d^G\theta _\beta ^\mu \wedge \iota _{ \frac{\partial }{\partial
y_\beta ^\mu }}.
\end{equation*}
\end{lemma}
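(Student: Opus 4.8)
The plan is to derive the formula from a single intrinsic frame identity and then split it according to the contact structure. First I would record the following identity, valid for any local homogeneous basis $\{e_a\}$ of $\operatorname{Der}(\mathcal{A}_{J_G^\infty(p)})$ with dual coframe $\{\omega^a\}$:
\[
d^G=\omega^a\wedge\mathcal{L}^G_{e_a}-d^G\omega^a\wedge\iota_{e_a}
\]
(summation over $a$). Both $d^G$ and the operator on the right are graded antiderivations of $\Omega_G(J_G^\infty(p))$, so it suffices to compare them on the algebra generators, i.e. on graded functions $f$ and on exact $1$-forms $\omega^b$. On a function the right-hand side gives $\omega^a(e_af)=d^Gf$ because $\iota_{e_a}f=0$; on $\omega^b$ one uses the graded Cartan identity $\mathcal{L}^G_{e_a}=\iota_{e_a}\circ d^G+d^G\circ\iota_{e_a}$ together with the Euler-type identity $\omega^a\wedge\iota_{e_a}\Omega=(\deg\Omega)\,\Omega$ to obtain $2\,d^G\omega^b-d^G\omega^b=d^G\omega^b$, as required.

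Next I would specialize this to the frame adapted to the contact structure. On $J_G^\infty(p)$ the total derivatives together with the vertical partials, $\{\,d/dx^\alpha,\ \partial/\partial y^\mu_I\,\}$, form a homogeneous basis whose dual coframe is exactly $\{\,d^Gx^\alpha,\ \theta^\mu_I\,\}$: the decisive pairing $\langle\theta^\mu_I,\,d/dx^\beta\rangle=0$ follows from the local expression \eqref{XH} for the total derivative and the definition \eqref{contactforms} of the contact forms, while $\langle\theta^\mu_I,\partial/\partial y^\nu_J\rangle=\delta^\mu_\nu\delta^I_J$ and the pairings of $d^Gx^\alpha$ are immediate. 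Splitting the identity into its horizontal part (indexed by the $d^Gx^\alpha$) and its vertical part (indexed by the $\theta^\mu_I$) then splits $d^G$ accordingly.

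The horizontal part simplifies at once: since $d^G(d^Gx^\alpha)=0$, the contraction term drops and there remains $d^Gx^\alpha\wedge\mathcal{L}^G_{d/dx^\alpha}$, which is precisely the horizontal differential $D=d^Gx^\alpha\otimes\mathcal{L}^G_{d/dx^\alpha}$. Hence $\partial=d^G-D$ is forced to equal the vertical part
\[
\partial=\sum_{I}\Big(\theta^\mu_I\wedge\mathcal{L}^G_{\partial/\partial y^\mu_I}-d^G\theta^\mu_I\wedge\iota_{\partial/\partial y^\mu_I}\Big).
\]
Finally I would observe that when $\partial$ acts on a form pulled back from $J_G^1(p)$ (such as $\Theta_0^L-\eta^G\cdot L$), every summand with $|I|\ge2$ annihilates it: such a form carries neither a $d^Gy^\mu_I$ component nor any dependence on $y^\mu_I$ for $|I|\ge2$, so both $\mathcal{L}^G_{\partial/\partial y^\mu_I}$ and $\iota_{\partial/\partial y^\mu_I}$ vanish on it. Only the terms $|I|=0$ and $|I|=1$ survive, and these are exactly the four terms in the statement.

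The step I expect to be most delicate is not the architecture of the argument but the bookkeeping of the $\mathbb{Z}_2$-graded signs: confirming the frame identity and the Cartan and Euler identities with the precise Koszul signs carried by $\mathcal{L}^G$, $\iota$ and $\wedge$, so that the coefficients come out as $+\theta\wedge\mathcal{L}^G$ and $-d^G\theta\wedge\iota$ with no residual parity factors once the even pairing $\langle\theta^\mu_I,\partial/\partial y^\nu_J\rangle=\delta^\mu_\nu\delta^I_J$ is inserted. This is precisely the ``simple computation'' mentioned before the lemma, and it is where all the care is needed.
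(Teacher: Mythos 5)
Your argument is correct, but it reaches the formula by a different route than the paper. The paper starts from the coordinate--frame expression $d^G=d^Gx^\beta\wedge\mathcal{L}^G_{\partial/\partial x^\beta}+d^Gy^\mu\wedge\mathcal{L}^G_{\partial/\partial y^\mu}+d^Gy^\mu_\alpha\wedge\mathcal{L}^G_{\partial/\partial y^\mu_\alpha}$, subtracts $D=d^Gx^\gamma\wedge\mathcal{L}^G_{d/dx^\gamma}$, expands the total derivative, and uses the Cartan formula on $\mathcal{L}^G_{y^\mu_\alpha\,\partial/\partial y^\mu}$ to peel off the insertion terms, finally regrouping via $d^G\theta^\nu_Q=d^Gx^\alpha\wedge d^Gy^\nu_{\alpha\star Q}$; it never leaves first--order coordinates. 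You instead prove the general anholonomic--coframe identity $d^G=\omega^a\wedge\mathcal{L}^G_{e_a}-d^G\omega^a\wedge\iota_{e_a}$ by comparing two antiderivations on generators, and then specialize to the contact coframe $\{d^Gx^\alpha,\theta^\mu_I\}$ dual to $\{d/dx^\alpha,\partial/\partial y^\mu_I\}$, so that the horizontal block is exactly $D$ and the vertical block is $\partial$. Your version buys conceptual clarity and makes explicit something the paper leaves implicit: the four--term formula is the truncation of an identity with all orders $|I|$, valid precisely because it is applied to forms pulled back from $J_G^1(p)$. The paper's version buys brevity and avoids having to justify the frame identity in the graded setting. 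Two small cautions for a write--up: the coframe elements $\omega^b$ of a general frame are not exact, so the check should be phrased as a check on functions together with a module basis of $1$--forms; and the graded Cartan identity carries a parity--dependent sign in some conventions (the paper uses $\mathcal{L}^G_X=d^G\iota_X+\iota_X d^G$ throughout), which is exactly the sign bookkeeping you already flagged as the delicate point.
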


\begin{proof}
From the very definition we have
\begin{equation*}
d^G=d^Gx^\beta\wedge
\mathcal{L}_{
\frac{\partial }
{\partial x^\beta }
}^G
+d^Gy^\mu \wedge
\mathcal{L}_{
\frac{\partial }
{\partial y^\mu }
}^G
+d^Gy_\alpha ^\mu\wedge
\mathcal{L}_{
\frac{\partial }
{\partial y_\alpha ^\mu }
}^G,
\end{equation*}
and also from the definition,
$\partial =d^G-D$, where
$D$ is the horizontal
differential. Therefore,
\begin{align*}
\partial
&
=d^G-D\\
&
=d^G-d^Gx^\gamma \wedge
\mathcal{L}_{
\frac{d}{dx^\gamma }
}^G\\
&
=d^Gx^\beta \wedge
\mathcal{L}_{
\frac{\partial }
{\partial x^\beta }
}^G
+d^Gy^\mu \wedge
\mathcal{L}_{
\frac{\partial }
{\partial y^\mu }
}^G
+d^Gy_\alpha ^\mu \wedge
\mathcal{L}_{
\frac{\partial }
{\partial y_\alpha ^\mu }
}^G
-d^Gx^\gamma \wedge
\mathcal{L}_{
\frac{d}{dx^\gamma }
}^G.
\end{align*}
Furthermore, as
\begin{equation*}
\frac{d}{dx^\gamma }
=\frac{\partial }
{\partial x^\gamma }
+y_\gamma ^\mu
\frac{\partial }
{\partial y^\mu }
+y_{\gamma \alpha }^\mu
\frac{\partial }
{\partial y_\alpha ^\mu },
\end{equation*}
taking the properties
of the graded Lie derivative
into account, we obtain
\begin{align*}
\partial
&
=d^Gy^\mu\wedge
\mathcal{L}_{
\frac{\partial }
{\partial y^\mu }
}^G
+d^Gy_\alpha ^\mu \wedge
\mathcal{L}_{
\frac{\partial }
{\partial y_\alpha ^\mu }
}^G
-d^Gx^\alpha \wedge
\mathcal{L}_{
y_\alpha ^\mu
\frac{\partial }
{\partial y^\mu }
}^G
-d^Gx^\alpha \wedge
\mathcal{L}_{
y_{\alpha \beta }^\mu
\frac{\partial }
{\partial y_\beta ^\mu }
}^G\\
& \\
&
=d^Gy^\mu \wedge
\mathcal{L}_{
\frac{\partial }
{\partial y^\mu }
}^G
+d^Gy_\alpha ^\mu \wedge
\mathcal{L}_{
\frac{\partial }
{\partial y_\alpha ^\mu }
}^G
-d^Gx^\alpha \wedge
\left(
d^G\iota _{
y_\alpha ^\mu
\frac{\partial }
{\partial y^\mu }
}
+\iota _{
y_\alpha ^\mu
\frac{\partial }
{\partial y^\mu }
}d^G
\right) \\
&
\hspace{5cm}
-d^Gx^\alpha \wedge
\left(
d^G\iota_{
y_{\alpha \beta}^\mu
\frac{\partial }
{\partial y_\beta ^\mu }
}
+\iota _{
y_{\alpha \beta }^\mu
\frac{\partial }
{\partial y_\beta ^\mu }
}d^G
\right) \\
& \\
&
=d^Gy^\mu \wedge
\mathcal{L}_
{\frac{\partial }
{\partial y^\mu }
}^G
+d^Gy_\alpha ^\mu \wedge
\mathcal{L}_{
\frac{\partial }
{\partial y_\alpha ^\mu }
}^G
-d^Gx^\alpha
\cdot y_\alpha ^\mu \wedge
\mathcal{L}_{
\frac{\partial }
{\partial y^\mu }
}^G
-d^Gx^\alpha
\cdot y_{\alpha \beta }^\mu
\wedge
\mathcal{L}_{
\frac{\partial }
{\partial y_\beta ^\mu }
}^G\\
&
\hspace{4cm}
-d^Gx^\alpha \wedge
d^Gy_\alpha ^\mu
\wedge
\iota _{
\frac{\partial }
{\partial y^\mu }
}
-d^Gx^\alpha \wedge
d^Gy_{\alpha \beta}^\mu
\wedge
\iota _{
\frac{\partial }
{\partial y_\beta^\mu }
}.
\end{align*}
Finally, by grouping
the correct terms
and by noting that
\begin{equation*}
d^G\theta _Q^\nu
=d^Gx^\alpha \wedge d^Gy_{\alpha \star Q}^\nu ,
\end{equation*}
we arrive at the statement
of the lemma.
\end{proof}

\begin{lemma}
For every vector field $X$ on $J_G^{n+1}(p)$, vertical over $(M,\mathcal{A})$%
, and for any local section $s$ of $p$, we have 
\begin{equation*}
(j^{n+1}s)^\ast \left( \iota _X \left( \mathcal{L}_{ \frac{d}{dx^{-1}} }^G
\circ \ldots \circ \mathcal{L}_{ \frac{d}{dx^{-n}} }^G \left( D_{1} \left(
\Theta _{0}^L-\eta ^G\cdot L \right) +\partial \left( \Theta _0^L-\eta
^G\cdot L \right) \right) \right) \right) =0.
\end{equation*}
\end{lemma}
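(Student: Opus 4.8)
The plan is to split the integrand by the decomposition $d^{G}=D_{0}+D_{1}+\partial$ and to dispose of the two surviving pieces $D_{1}(\Theta_{0}^{L}-\eta^{G}\cdot L)$ and $\partial(\Theta_{0}^{L}-\eta^{G}\cdot L)$ through entirely different mechanisms. Write $\beta=\Theta_{0}^{L}-\eta^{G}\cdot L=\mathcal{L}_{\mathcal{J}}^{G}(L)$. From the local expression obtained in Section~\ref{jotas}, $\beta$ is, up to sign, $\iota_{\frac{\partial}{\partial x^{i}}}\eta^{G}\wedge\theta^{\mu}\cdot\frac{\partial L}{\partial y_{i}^{\mu}}$, so it has contact (vertical) degree exactly one, i.e.\ $\beta\in H_{m-1}^{1}(J_{G}^{1}(p))$. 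I will use two elementary facts throughout. First, each $\mathcal{L}_{\frac{d}{dx^{-l}}}^{G}$ sends a contact generator $\theta_{Q}^{\mu}$ to the contact generator $\theta_{\{-l\}\star Q}^{\mu}$ and annihilates every $d^{G}x^{\alpha}$ (indeed $\mathcal{L}_{\frac{d}{dx^{-l}}}^{G}d^{G}x^{\alpha}=d^{G}(\tfrac{d}{dx^{-l}}x^{\alpha})=0$), so the whole string $\mathcal{L}_{T}:=\mathcal{L}_{\frac{d}{dx^{-1}}}^{G}\circ\cdots\circ\mathcal{L}_{\frac{d}{dx^{-n}}}^{G}$ preserves the vertical degree. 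Second, since $X$ is vertical over $(M,\mathcal{A})$ we have $\iota_{X}d^{G}x^{\alpha}=0$, so $\iota_{X}$ can only contract a contact factor and hence lowers the vertical degree by at most one.

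For the $\partial$-piece the vertical differential raises the vertical degree by one, whence $\partial\beta\in H_{m-1}^{2}$ and every term carries two contact factors. Applying $\mathcal{L}_{T}$ keeps the vertical degree equal to two, and the single contraction $\iota_{X}$ leaves a form of vertical degree at least one, that is, a form still divisible by some contact generator $\theta_{Q}^{\mu}$. Since $(j^{n+1}s)^{\ast}\theta_{Q}^{\mu}=0$ for every section $s$ and every $Q$, and pullback is a morphism of algebras, the pullback of this piece vanishes.

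For the $D_{1}$-piece the mechanism is nilpotency rather than the contact ideal. Because $D_{1}=\sum_{k=1}^{n}d^{G}x^{-k}\wedge\mathcal{L}_{\frac{d}{dx^{-k}}}^{G}$ and each $\mathcal{L}_{\frac{d}{dx^{-l}}}^{G}$ kills $d^{G}x^{-k}$ (which, being the differential of an odd coordinate, is even and so passes the Lie derivatives with no sign), one gets $\mathcal{L}_{T}\bigl(D_{1}\beta\bigr)=\sum_{k=1}^{n}d^{G}x^{-k}\wedge\bigl(\mathcal{L}_{T}\circ\mathcal{L}_{\frac{d}{dx^{-k}}}^{G}\beta\bigr)$. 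Now the total derivatives $\frac{d}{dx^{-k}}$ are odd vector fields, and since horizontal lifting is a morphism of Lie superalgebras while $[\partial/\partial x^{-k},\partial/\partial x^{-l}]=0$, they graded-commute; in particular $(\mathcal{L}_{\frac{d}{dx^{-k}}}^{G})^{2}=\tfrac{1}{2}\mathcal{L}_{[\frac{d}{dx^{-k}},\frac{d}{dx^{-k}}]}^{G}=0$. Reordering the $n+1$ pairwise anticommuting factors in $\mathcal{L}_{T}\circ\mathcal{L}_{\frac{d}{dx^{-k}}}^{G}$ so as to place the two copies of $\mathcal{L}_{\frac{d}{dx^{-k}}}^{G}$ side by side shows that each composite is zero; hence $\mathcal{L}_{T}(D_{1}\beta)=0$ identically. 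Adding the two contributions yields the stated identity.

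The main obstacle I anticipate is the careful justification of the bigrading bookkeeping — that $\mathcal{L}_{\frac{d}{dx^{-l}}}^{G}$ genuinely preserves vertical degree while $\partial$ raises it and $\iota_{X}$ (for $X$ vertical over $M$) lowers it by at most one — together with the super-sign accounting in the anticommutation and nilpotency argument for the odd total derivatives. One must also check that the contact generators produced by $\mathcal{L}_{T}$, although of order up to $n+1$, are still annihilated by the relevant jet prolongation; this holds because prolongations of sections kill contact forms of every order.
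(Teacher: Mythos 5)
Your proof is correct and follows the same overall strategy as the paper: split off the $D_{1}$ and $\partial$ contributions, kill the former because the composite $\mathcal{L}^{G}_{d/dx^{-1}}\circ\cdots\circ\mathcal{L}^{G}_{d/dx^{-n}}\circ D_{1}$ repeats an odd total derivative, and kill the latter because a contact factor survives the single insertion $\iota_{X}$ and is then annihilated by $(j^{n+1}s)^{\ast}$. On the $D_{1}$ piece you actually supply the justification the paper compresses into the parenthetical ``one of the $d/dx^{-i}$ factors appears twice'': graded anticommutativity $\mathcal{L}^{G}_{d/dx^{-j}}\mathcal{L}^{G}_{d/dx^{-k}}+\mathcal{L}^{G}_{d/dx^{-k}}\mathcal{L}^{G}_{d/dx^{-j}}=\mathcal{L}^{G}_{[d/dx^{-j},d/dx^{-k}]}=0$ (horizontal lifting being a morphism of Lie superalgebras), nilpotency, and the total evenness of $d^{G}x^{-k}$. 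On the $\partial$ piece you argue by bidegree bookkeeping where the paper instead computes $\partial(\Theta_{0}^{L}-\eta^{G}\cdot L)$ explicitly from the local formula for $\partial$; your route is cleaner, but note that the paper's computation exhibits, besides the two genuinely bi-contact terms, a purely horizontal summand proportional to $\iota_{\partial/\partial x^{j}}\eta^{G}\wedge d^{G}x^{\alpha}\wedge d^{G}x^{\beta}\cdot y_{\beta\alpha}^{\mu}$ arising from $d^{G}\theta^{\mu}-d^{G}x^{\alpha}\wedge\theta_{\alpha}^{\mu}$, which your blanket claim that every term of $\partial\beta$ carries two contact factors does not see. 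This is harmless: either one regards that term as absent (it cancels by the graded symmetry of $y^{\mu}_{\beta\star\alpha}$ against that of $d^{G}x^{\alpha}\wedge d^{G}x^{\beta}$, consistently with defining $\partial$ as the vertical-degree-raising component of $d^{G}$), or one notes that, being of vertical degree zero, it is annihilated by $\iota_{X}$ for $X$ vertical over $(M,\mathcal{A})$ --- which is exactly how the paper disposes of it. A one-sentence remark to this effect would make your degree-counting argument airtight.
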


\begin{proof}
As
\begin{equation*}
D_1=d^Gx^{-i}\wedge
\mathcal{L}_{
\frac{d}{dx^{-i}}
}^G,
\end{equation*}
it is clear that
\begin{equation*}
\mathcal{L}_{
\frac{d}{dx^{-1}}
}^G
\circ \ldots \circ
\mathcal{L}_{
\frac{d}{dx^{-n}}
}^G
\circ D_1=0
\end{equation*}
(one of the
$\frac{d}{dx^{-i}}$
factors appears twice).
Now, let us see that
\begin{equation*}
(j^{n+1}s)^\ast
\left(
\iota _X
\left(
\mathcal{L}_{
\frac{d}{dx^{-1}}
}^G
\circ \ldots \circ
\mathcal{L}_{
\frac{d}{dx^{-n}}
}^G
(\partial (\Theta _0^L
-\eta ^G\cdot L))
\right)
\right)
=0.
\end{equation*}
The bidegree of
$\partial $ is $(1,0)$,
so we have
\begin{align*}
\partial (\Theta _0^L
-\eta ^G\cdot L)
&
=\partial
\left(
(-1)^{m-1}
\iota _{
\frac{\partial }
{\partial x^j}
}
\eta ^G\wedge \theta ^\mu
\frac{\partial L}
{\partial y_j^\mu }
\right) \\
&
=\iota _{
\frac{\partial }
{\partial x^j}
}
\eta ^G\wedge
\left(
\partial \theta ^\mu \cdot
\frac{\partial L}
{\partial y_j^\mu }
-\theta ^\mu \wedge \partial
\left(
\frac{\partial L}
{\partial y_j^\mu }
\right)
\right) .
\end{align*}
From Lemma \ref{lemma14},
we know the explicit
expression for $\partial $.
Making use of it, along with
the formulas
\begin{align*}
\mathcal{L}_{
\frac{\partial }
{\partial y^\mu }
}^G
\theta ^\nu
&
=0,\\
\mathcal{L}_{
\frac{\partial }
{\partial y_\alpha ^\mu }
}^G
\theta ^\nu
&
=-(-1)^{
\alpha (\mu +\alpha )
}
d^Gx^\alpha
\delta _\mu ^\nu ,\\
\iota _{
\frac{\partial }
{\partial y^\mu }
}
\theta ^\nu
&
=\delta _\mu ^\nu ,\\
\iota _{
\frac{\partial }
{\partial y_\alpha ^\mu }
}
\theta ^\nu
&
=0,
\end{align*}
we obtain
\begin{multline*}
\partial
\left(
\Theta _0^L
-\eta ^G\cdot L
\right) \\
=\iota _{\frac{\partial }
{\partial x^j}
}
\eta ^G\wedge
\left(
d^Gx^\alpha \wedge
\theta _\alpha ^\mu
\frac{\partial L}
{\partial y_j^\mu }
-d^G\theta ^\mu
\frac{\partial L}
{\partial y_j^\mu}
-\theta ^\nu \wedge
\theta ^\mu
\frac{\partial ^2L}
{\partial y^\mu
\partial y_j^\nu }
-\theta ^\nu \wedge
\theta _\alpha ^\mu
\frac{\partial ^2L}
{\partial y_\alpha ^\mu
\partial y_j^\nu }
\right) ,
\end{multline*}
and remarking that
\begin{equation*}
d^Gx^\alpha \wedge
\theta _\alpha ^\mu
-d^G\theta ^\mu
=-d^Gx^\alpha \wedge
d^Gx^\beta \cdot
y_{\beta \alpha }^\mu ,
\end{equation*}
we deduce
\begin{multline*}
\partial
\left(
\Theta _0^L
-\eta ^G\cdot L
\right) \\
=-\iota _{
\frac{\partial }
{\partial x^j}
}
\eta ^G\wedge
\left(
d^Gx^\alpha \wedge
d^Gx^\beta \cdot
y_{\beta \alpha }^\mu
+\theta ^\nu
\wedge \theta ^\mu
\frac{\partial ^2L}
{\partial y^\mu
\partial y_j^\nu }
+\theta ^\nu \wedge
\theta _\alpha ^\mu
\frac{\partial ^2L}
{\partial y_\alpha ^\mu
\partial y_j^\nu }
\right) .
\end{multline*}
Here, the first term
in the right-hand side
vanishes when a vertical
vector field is inserted.
The other two, when
the pull-back
$(j^{n+1}s)^\ast $ is taken,
as a contact form
$\theta ^\mu $
remains even after the insertion
of the vertical field.
\end{proof}

As a consequence of these results, we can see that the Euler$-$Lagrange
equations for a superfield are those expected.

\begin{theorem}
A local section $s$ of $p$ is a critical section for the Berezinian density $%
\xi _{L}=[d^{G}x^{1}\wedge \cdots \wedge d^{G}x^{m}\otimes \frac{d}{dx^{-1}}%
\circ \ldots \circ \frac{d}{dx^{-n}}]L$ with $L\in \mathcal{A}%
_{J_{G}^{1}(p)},$ if and only if the following equations holds: 
\begin{equation}
(j^{n+1}s)^{\ast } \left( \iota _{X}d^{G}\Theta ^{L} \right) =0,
\label{eq5_19}
\end{equation}
for every vector field $X$ on $J_{G}^{n+1}(p)$ vertical over $(M,\mathcal{A})
$.
\end{theorem}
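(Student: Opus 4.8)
The plan is to reduce the assertion to the first-variation formula written in terms of the Poincar\'e--Cartan form, and then to feed in the structural results of this section. By the Comparison Theorem (Theorem~\ref{comparisonth}), $s$ is a Berezinian extremal for $\xi_L$ exactly when it is a graded extremal for the order $n+1$ density $\eta^G\cdot\frac{d^nL}{dx^{-1}\ldots dx^{-n}}$, i.e.\ when $\int_M(j^{n+1}s)^{\ast}(\mathcal{L}^G_{Y_{(n+1)}}(\eta^G\cdot\frac{d^nL}{dx^{-1}\ldots dx^{-n}}))=0$ for every $p$-vertical $Y\in\mathcal{V}^c(N)$. By Theorem~\ref{maintheorem} the difference $\Theta^L-\eta^G\cdot\frac{d^nL}{dx^{-1}\ldots dx^{-n}}=\mathcal{L}^G_{\mathcal{J}_{n+1}}(\frac{d^nL}{dx^{-1}\ldots dx^{-n}})$ lies in the ideal generated by the contact forms $\theta^\mu_Q$; since $Y_{(n+1)}$ is a graded infinitesimal contact transformation (Theorem~\ref{gcit}), $\mathcal{L}^G_{Y_{(n+1)}}$ maps this difference back into the contact ideal, which is annihilated by $(j^{n+1}s)^{\ast}$. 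Hence the variational integrand may be replaced by $(j^{n+1}s)^{\ast}(\mathcal{L}^G_{Y_{(n+1)}}\Theta^L)$.

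I would then apply the graded Cartan formula $\mathcal{L}^G_{Y_{(n+1)}}\Theta^L=\iota_{Y_{(n+1)}}d^G\Theta^L+d^G(\iota_{Y_{(n+1)}}\Theta^L)$. Because $(j^{n+1}s)^{\ast}$ commutes with $d^G$, the exact term pulls back to a genuine differential on $M$ and integrates to zero by Stokes' theorem, as $Y$ has compact image. This produces the key identity $\delta_s\mathrm{I}(Y)=\int_M(j^{n+1}s)^{\ast}(\iota_{Y_{(n+1)}}d^G\Theta^L)$ for every compactly supported $p$-vertical $Y$. The implication ``$\Leftarrow$'' is now immediate: if $(j^{n+1}s)^{\ast}(\iota_X d^G\Theta^L)=0$ for every vector field $X$ on $J_G^{n+1}(p)$ vertical over $(M,\mathcal{A})$, then choosing $X=Y_{(n+1)}$ (vertical because $Y$ is $p$-vertical) makes all first variations vanish, so $s$ is critical.

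For ``$\Rightarrow$'' I would use the decomposition of $d^G\Theta^L$ established above together with two facts already proved: inserting a vertical $X$ kills the purely horizontal term $\mathcal{L}^G_{d/dx^{-1}}\!\circ\cdots\circ\mathcal{L}^G_{d/dx^{-n}}(\alpha_L)$ (the remark $\iota_X\alpha_L=0$, preserved under the horizontal Lie derivatives), while the preceding lemma shows that $(j^{n+1}s)^{\ast}\iota_X$ annihilates the $D_1$ and $\partial$ contributions. Thus $(j^{n+1}s)^{\ast}(\iota_X d^G\Theta^L)=(j^{n+1}s)^{\ast}(\iota_X\,\mathcal{L}^G_{d/dx^{-1}}\!\circ\cdots\circ\mathcal{L}^G_{d/dx^{-n}}(\varpi_L))$, and by Proposition~\ref{Pro17} the form $\mathcal{L}^G_{d/dx^{-1}}\!\circ\cdots\circ\mathcal{L}^G_{d/dx^{-n}}(\varpi_L)$ is a sum $\sum_{\sigma}\pm\,\eta^G\wedge\theta^\mu_{\sigma_1}\,\frac{d^{|\sigma_2|}\mathcal{E}(L)}{dx^{\sigma_2}}$ over $\sigma=\sigma_1\cup\sigma_2\in\Sigma_{-n}$, with $\mathcal{E}$ the Euler--Lagrange operator.

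Specializing $X=Y_{(n+1)}$ and using the prolongation identity $\theta^\mu_Q(Y_{(n+1)})=\frac{d^{|Q|}Y^\mu}{dx^Q}$, Lemma~\ref{Sigma} collapses this sum into a single total derivative, so the integrand becomes $\eta^G\cdot\frac{d^n}{dx^{-1}\ldots dx^{-n}}(Y^\mu\,\mathcal{E}(L))\circ j^{n+1}s$; its graded integral is the Berezin integral of $Y^\mu\,\mathcal{E}(L)$, whence $\delta_s\mathrm{I}=0$ for all compactly supported $Y$ forces $(j^{n+1}s)^{\ast}\mathcal{E}(L)=0$ by the fundamental lemma of the calculus of variations. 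Finally, since the horizontal lifts satisfy $(j^{n+1}s)^{\ast}\frac{d^{|\sigma_2|}F}{dx^{\sigma_2}}=\frac{d^{|\sigma_2|}}{dx^{\sigma_2}}((j^{n+1}s)^{\ast}F)$, the vanishing of $(j^{n+1}s)^{\ast}\mathcal{E}(L)$ propagates to every derivative $\frac{d^{|\sigma_2|}\mathcal{E}(L)}{dx^{\sigma_2}}$, so every term of the Proposition~\ref{Pro17} expansion vanishes along $s$ for an \emph{arbitrary} vertical $X$, giving $(j^{n+1}s)^{\ast}(\iota_X d^G\Theta^L)=0$. The hard part is precisely this passage from the integral condition to the pointwise one: extremality only tests the constrained prolongations $Y_{(n+1)}$, so one must first isolate $\mathcal{E}(L)=0$ through the fundamental lemma and then invoke the structural Proposition~\ref{Pro17} to upgrade it to pointwise vanishing for all vertical $X$.
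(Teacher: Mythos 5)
Your proposal is correct, and its core is the same reduction the paper uses: you invoke the decomposition $d^{G}\Theta ^{L}=\mathcal{L}_{d/dx^{-1}}^{G}\circ \cdots \circ \mathcal{L}_{d/dx^{-n}}^{G}(\alpha _{L}+\varpi _{L}+D_{1}(\cdot )+\partial (\cdot ))$, kill the $\alpha _{L}$, $D_{1}$ and $\partial $ contributions by the remark $\iota _{X}\alpha _{L}=0$ and the preceding lemma, and are left with the Proposition on $\varpi _{L}$, which exhibits $(j^{n+1}s)^{\ast }(\iota _{X}d^{G}\Theta ^{L})$ as a sum of contact factors times total derivatives of $\mathcal{E}(L)$. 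Where you diverge is that the paper stops there: its proof is two lines, asserting that \eqref{eq5_19} is therefore equivalent to the Euler--Lagrange equations and citing Theorem~6.3 of \cite{Mon 92a} for the fact that these characterize critical sections. You instead prove that cited equivalence from scratch --- via the Comparison Theorem, the observation that $\Theta ^{L}-\eta ^{G}\cdot \tfrac{d^{n}L}{dx^{-1}\cdots dx^{-n}}$ lies in the contact ideal and is preserved there by $\mathcal{L}_{Y_{(n+1)}}^{G}$, the graded Cartan formula plus Stokes to get the first-variation identity, and the fundamental lemma for the Berezin integral to extract $(j^{n+1}s)^{\ast }\mathcal{E}(L)=0$. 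This buys self-containedness and makes explicit the one genuinely nontrivial logical point that the paper glosses over, namely that extremality only tests the constrained insertions $X=Y_{(n+1)}$, so one must first isolate $\mathcal{E}(L)=0$ and then propagate it (via commutation of total derivatives with $(j^{k}s)^{\ast }$) to all terms of the expansion for an arbitrary vertical $X$; the cost is that your fundamental-lemma step is exactly the content of the result the paper cites, so you should either flesh out the standard argument that $\int_{\mathrm{Ber}}[\xi ]\cdot Y^{\mu }g_{\mu }=0$ for all compactly supported $Y$ forces $g_{\mu }\circ j^{\infty }s=0$ (by varying the odd components of $Y^{\mu }$ to reach every coefficient of $g_{\mu }$), or cite it as the paper does.
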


\begin{proof}
As we have seen, the equation
\eqref{eq5_19} is equivalent to
the Euler-Lagrange equations
\begin{equation*}
(j^{n+1}s)^\ast
\left(
\frac{\partial L}
{\partial y^\mu }
-\frac{d}{dx^i}
\frac{\partial L}
{\partial y_i^\mu }
-(-1)^\mu \frac{d}{dx^{-j}}
\frac{\partial L}
{\partial y_{-j}^\mu }
\right)
=0,
\end{equation*}
and these are the well-known
conditions on $s$ to be
a critical section (see
\cite[Theorem 6.3]{Mon 92a}).
\end{proof}

\section{Some applications}

\subsection{Noether Theorem}

Next, we consider the infinitesimal symmetries of Berezinian densities. The
basic idea is to study under which conditions we can interchange $\iota _{X}$
with $d^{G}$ in \eqref{eq5_19} to obtain the equation 
\begin{equation*}
d^{G}(j^{n+1}s)^{\ast } \left( \iota _{X}\Theta ^{L} \right) =0,
\end{equation*}
giving us an invariant, $\iota _{X}\Theta ^{L}$. In Classical Mechanics,
this is the case when the Lagrangian is invariant under the action of some
group whose infinitesimal generator is precisely $X$; this observation
motivates the following definitions.

A $p-$projectable vector field $X$ on $(N,\mathcal{B})$ is said to be an 
\emph{infinitesimal supersymmetry} of the Berezinian density 
\begin{equation*}
\xi _{L} =\left[ d^{G}x^{1}\wedge \cdots \wedge d^{G}x^{m}\otimes \frac{d}{%
dx^{-1}}\circ \ldots \circ \frac{d}{dx^{-n}} \right] \cdot L,\quad L\in 
\mathcal{A}_{J_{G}^{1}(p)},
\end{equation*}
if 
\begin{equation*}
\mathcal{L}_{X_{(n+1)}}^{G}\xi _{L}=0,
\end{equation*}
where $X_{(n+1)}$ is the $(n+1)-$jet extension of $X$ by graded contact
infinitesimal transformations.

Now, the desired interchange amounts to have $\mathcal{L}_{X_{(n+1)}}^{G}%
\Theta ^{L}=0$. A basic result in this direction is the infinitesimal
functoriality of the Poincar\'e$-$Cartan form, a concept which requires a
previous definition.

A graded vector field $X^{\prime }$ on $(M,\mathcal{A})$ is said to have a 
\emph{graded divergence} with respect to a graded volume $m-$form $\eta ^{G}$
on $(M,\mathcal{A})$ if there exists a function $f\in \mathcal{A}$ such
that, 
\begin{equation*}
\mathcal{L}_{X^{\prime }}^{G}\eta ^{G}=\eta ^{G}f.
\end{equation*}
In this case, we put $f=\mathop{\rm {}div}\nolimits_{G}(X^{\prime })$. A
graded vector field $X$ on $(N,\mathcal{B})$ is said to have graded
divergence if it is $p-$projectable and if its projection $X^{\prime }$ has
graded divergence.

\begin{theorem}[Infinitesimal functoriality of $\Theta ^{L}$, \protect\cite%
{Her-Mun 84b}]
Let $\eta ^{G}\cdot L$ be a graded Lagrangian density on $p\colon (N,%
\mathcal{B})\rightarrow (M,\mathcal{A})$ $(L\in \mathcal{A})$ and $\Theta
^{L}$ the corresponding graded Poincar\'{e}$-$Cartan form. For every vector
field $X$ on $(N,\mathcal{B})$ with divergence, we have 
\begin{equation}
\mathcal{L}_{X_{(n+1)}}^{G}\Theta ^{L}=\Theta ^{L^{\prime }},  \label{eq5_20}
\end{equation}
where $L^{\prime } =X_{(n+1)}(L)+\mathop{\rm {}div}\nolimits_{G}(X^{\prime
})\cdot L$.
\end{theorem}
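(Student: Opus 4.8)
The plan is to exploit the intrinsic representation of the Poincar\'e--Cartan form furnished by Theorem \ref{maintheorem}. Writing $\Lambda = \frac{d^{n}L}{dx^{-1}\cdots dx^{-n}}$, we have
\[
\Theta^{L} = \mathcal{L}^{G}_{\mathcal{J}_{n+1}}(\Lambda) + \eta^{G}\cdot\Lambda = \iota_{\mathcal{J}_{n+1}}(d^{G}\Lambda) + \eta^{G}\cdot\Lambda,
\]
so that the whole form is assembled functorially from the single graded function $\Lambda$, the canonical $\mathcal{V}((p_{n+1})_{10})$-valued $m$-form $\mathcal{J}_{n+1}$, and the graded volume $\eta^{G}$. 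I would apply $\mathcal{L}^{G}_{X_{(n+1)}}$ to this expression and track the two sources from which the new Lagrangian $L'=X_{(n+1)}(L)+\mathop{\rm div}\nolimits_{G}(X')\cdot L$ must emerge: the action of $X_{(n+1)}$ on the function $\Lambda$, and the action of the projection $X'$ on the volume $\eta^{G}$. Throughout I would use that $\mathcal{L}^{G}_{X_{(n+1)}}$ commutes with $d^{G}$ and that, by Theorem \ref{gcit}, $X_{(n+1)}$ is a graded infinitesimal contact transformation projecting onto $X'$ on $(M,\mathcal{A})$.

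The volume term is the routine half. Since $\eta^{G}$ is pulled back from the base and $X_{(n+1)}$ projects onto $X'$, the defining relation of the graded divergence gives $\mathcal{L}^{G}_{X'}\eta^{G}=\mathop{\rm div}\nolimits_{G}(X')\cdot\eta^{G}$, whence the graded Leibniz rule yields
\[
\mathcal{L}^{G}_{X_{(n+1)}}(\eta^{G}\cdot\Lambda) = \eta^{G}\cdot\bigl(X_{(n+1)}(\Lambda) + \mathop{\rm div}\nolimits_{G}(X')\cdot\Lambda\bigr).
\]
This already exhibits both contributions to the expected $L'$, provided the odd horizontal lifts defining $\Lambda$ intertwine correctly with $X_{(n+1)}$; that intertwining is exactly what the commutation lemmas of subsection \ref{lemmata} produce, so I would reduce the comparison of $X_{(n+1)}(\Lambda)$ with $\frac{d^{n}}{dx^{-1}\cdots dx^{-n}}\bigl(X_{(n+1)}(L)\bigr)$ to those identities.

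The hard part is the $\mathcal{J}_{n+1}$ term. Using the graded insertion formula for the vector-valued form $\mathcal{J}_{n+1}$ (up to the signs of the graded calculus),
\[
\mathcal{L}^{G}_{X_{(n+1)}}\bigl(\iota_{\mathcal{J}_{n+1}}d^{G}\Lambda\bigr) = \iota_{\mathcal{J}_{n+1}}\bigl(d^{G}X_{(n+1)}(\Lambda)\bigr) + \iota_{(\mathcal{L}^{G}_{X_{(n+1)}}\mathcal{J}_{n+1})}\,d^{G}\Lambda,
\]
the first summand is precisely $\mathcal{L}^{G}_{\mathcal{J}_{n+1}}(X_{(n+1)}\Lambda)$, the $\mathcal{J}_{n+1}$-part of the Poincar\'e--Cartan form of $X_{(n+1)}L$. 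Everything therefore hinges on the claim
\[
\mathcal{L}^{G}_{X_{(n+1)}}\mathcal{J}_{n+1} = \mathop{\rm div}\nolimits_{G}(X')\cdot\mathcal{J}_{n+1},
\]
which I expect to be the main obstacle. I would prove it from the local expression \eqref{eq5_10}: the leg $\iota_{\partial/\partial x^{i}}\eta^{G}$ contributes exactly the scalar $\mathop{\rm div}\nolimits_{G}(X')$; the contact factors $\theta^{\mu}_{Q}$ are preserved up to the endomorphism $h$ with $\mathcal{L}^{G}_{X_{(n+1)}}\theta^{n+1}=h\circ\theta^{n+1}$ guaranteed by Theorem \ref{gcit}; and, since $\mathcal{J}_{n+1}$ pairs each contact one-form with its canonically identified vertical vector $\partial/\partial y^{\mu}_{i+Q}$ through \eqref{coordinates2}, the contribution of $h$ to the form leg is cancelled by the dual contribution of $h$ to the vertical leg. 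Checking that these contact-endomorphism terms annihilate, leaving only the divergence, is the delicate step, best carried out in the coordinate expression of the $\mathcal{J}_{k}$ theorem preceding this statement.

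Finally, I would reassemble the pieces. Because $\mathop{\rm div}\nolimits_{G}(X')$ is a base function, it is annihilated by the vertical legs $\partial/\partial y^{\mu}_{i+Q}$ of $\mathcal{J}_{n+1}$, so it passes freely through $\mathcal{L}^{G}_{\mathcal{J}_{n+1}}$; combining the volume and $\mathcal{J}_{n+1}$ contributions then gives
\[
\mathcal{L}^{G}_{X_{(n+1)}}\Theta^{L} = \Bigl(\mathcal{L}^{G}_{\mathcal{J}_{n+1}} + \eta^{G}\cdot\Bigr)\bigl(X_{(n+1)}\Lambda + \mathop{\rm div}\nolimits_{G}(X')\cdot\Lambda\bigr).
\]
Recognizing the right-hand side as $\Theta^{L'}$ via Theorem \ref{maintheorem} applied to $L'=X_{(n+1)}(L)+\mathop{\rm div}\nolimits_{G}(X')\cdot L$ (which is again first order, since $X_{(n+1)}$ projects to $X_{(1)}$ on $L$), the identity \eqref{eq5_20} follows. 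The sole loose end, to be discharged with the lemmas of subsection \ref{lemmata}, is the commutation needed to identify $X_{(n+1)}\Lambda + \mathop{\rm div}\nolimits_{G}(X')\Lambda$ with $\frac{d^{n}L'}{dx^{-1}\cdots dx^{-n}}$.
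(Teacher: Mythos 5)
First, a caveat: the paper does not prove this theorem at all --- it is imported verbatim from \cite{Her-Mun 84b} --- so there is no internal argument to measure your proposal against, and it has to be judged on its own merits.

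Your outline is sensible, but it leaves the entire weight of the theorem resting on the unproved identity $\mathcal{L}^{G}_{X_{(n+1)}}\mathcal{J}_{n+1}=\mathop{\rm div}\nolimits_{G}(X')\cdot\mathcal{J}_{n+1}$, and the cancellation argument you sketch for it does not go through as stated. By Theorem \ref{gcit} the endomorphism $h$ with $\mathcal{L}^{G}_{X_{(n+1)}}\theta^{n+1}=h\circ\theta^{n+1}$ is in general only \emph{triangular} in the jet order: $\mathcal{L}^{G}_{X_{(n+1)}}\theta^{\mu}_{Q}$ is a combination of contact forms $\theta^{\nu}_{R}$ with $|R|\leq|Q|$, and dually the brackets $[X_{(n+1)},\partial/\partial y^{\mu}_{i+Q}]$ spread over the $\partial/\partial y^{\nu}_{i+R}$ with $|R|\geq|Q|$. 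The ``form-leg cancels vector-leg'' mechanism you invoke is transparent only for the diagonal blocks; the off-diagonal terms couple different orders $|Q|$ in \eqref{eq5_10} and must be shown to cancel by an explicit computation (with the graded signs and the $\star$-juxtaposition conventions), which is precisely the kind of obstruction that prevents a canonical higher-order Poincar\'e--Cartan form from existing in the classical case. So this step cannot be waved through: it \emph{is} the theorem. A second, smaller gap is your closing identification of $X_{(n+1)}(\Lambda)+\mathop{\rm div}\nolimits_{G}(X')\cdot\Lambda$ with $d^{n}L'/dx^{-1}\cdots dx^{-n}$: this needs a commutation rule between the prolongation $X_{(n+1)}$ and the odd total derivatives $d/dx^{-j}$, whereas the lemmas of subsection \ref{lemmata} only give commutators of the coordinate vertical fields $\partial/\partial y^{\mu}_{P}$ with total derivatives and do not supply it. A more economical route, consistent with the machinery the paper actually develops, would be to start from $\Theta^{L}=\mathcal{L}^{G}_{d/dx^{-1}}\circ\cdots\circ\mathcal{L}^{G}_{d/dx^{-n}}\Theta^{L}_{0}$ (Theorem \ref{maintheorem}), invoke the first-order infinitesimal functoriality of $\Theta^{L}_{0}$ from \cite{Her-Mun 84a}, and then prove the single commutation identity between $X_{(n+1)}$ and the $d/dx^{-j}$; alternatively, characterize $\Theta^{L}$ modulo the contact ideal and use that $X_{(n+1)}$ preserves that ideal.
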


According to this result, what we want $(\mathcal{L}_{X_{(n+1)}}^{G}\Theta
^{L}=0)$ is equivalent to $\Theta ^{L^{\prime }}=0$, that is, to $L^{\prime
}=0$. Let us see under which conditions this is true for an infinitesimal
supersymmetry. Let us write the Berezinian density as $\xi _{L}=[\xi ]L$ and
assume that $X$ is such a supersymmetry; then 
\begin{equation}
0=\mathcal{L}_{X_{(n+1)}}^{G}\xi _{L} =\left( \mathcal{L}_{X_{(n+1)}}^{G}[%
\xi ]\right) L+(-1)^{|X_{(n+1)}||\xi |}[\xi ]X_{(n+1)}(L).  \label{eq5_21}
\end{equation}
As the Berezinian module plays a r\^{o}le akin to that of the volume forms
(at least with respect to integration), we can use the concept of Berezinian
divergence (see Section \ref{berdiv}). We recall that if $X^{\prime }$ is a
graded vector field on $(M,\mathcal{A})$ and $\xi $ is a Berezinian density
on $(M,\mathcal{A})$, we have $\mathcal{L}_{X^{\prime }}^{G}[\xi ]
=(-1)^{|X^{\prime }||\xi |}[\xi ] \cdot \mathop{\rm {}div}%
\nolimits_{B}(X^{\prime })$.

Note that the graded divergence of a given graded vector field on $(M,%
\mathcal{A})$, 
\begin{equation*}
X^{\prime } =\left( X^{\prime } \right) ^{i} \frac{\partial }{\partial x^{i}}
+\left( X^{\prime } \right) ^{-j} \frac{\partial }{\partial x^{-j}},
\end{equation*}
does not necessarily exist. Indeed, the existence of the graded divergence
requires, 
\begin{equation*}
\frac{\partial \left( X^{\prime } \right) ^{i}} {\partial x^{-j}}=0,
\end{equation*}
for any $i,-j$. On the other hand, the Berezinian divergence always exists.

If $X$ on $(N,\mathcal{B})$ is $p-$projectable, we write 
\begin{equation*}
\mathcal{L}_{X}^{G}[\xi ] =(-1)^{|X||\xi |}[\xi ]\cdot \mathop{\rm {}div}%
\nolimits_{B}(X).
\end{equation*}
This makes sense as long as $X$ is projectable (with projection $X^{\prime }$%
); then, if the Berezinian is given by $[\xi ]=[\eta \otimes P]$ for some
graded form $\eta \in \Omega _{G}((M,\mathcal{A}))$ and some differential
operator $P\in \mathcal{D}(\mathcal{A})$, we extend the previous definition
to 
\begin{align*}
\mathcal{L}_{X}^{G}[\xi ]& =(-1)^{|X||\omega \otimes P|+1} \left[ \eta
\otimes P\circ X^{\prime }\right] \\
& =\mathcal{L}_{X^{\prime }}^{G}[\xi ].
\end{align*}
In other words, the graded Lie derivative of $[\xi ]$ with respect to $X$ is
that respect to its projection. The same observation (and definition)
applies to a graded vector field on $(J_{G}^{k}(p),\mathcal{A}%
_{J_{G}^{k}(p)})$ projectable onto $(M,\mathcal{A})$.

Thus, the equation \eqref{eq5_21} can be rewritten as 
\begin{eqnarray*}
(-1)^{ \left\vert X_{(n+1)} \right\vert |\xi| } \mathop{\rm {}div}
\nolimits_B \left( X_{(n+1)} \right) \cdot L & + & (-1)^{ \left\vert
X_{(n+1)} \right\vert |\xi | } X_{(n+1)}(L) \\
& = & \mathop{\rm {}div}\nolimits_B \left( X_{(n+1)} \right) \cdot L
+X_{(n+1)}(L) \\
& = & 0,
\end{eqnarray*}
and this is the expression of $L^\prime =0$ except for the fact that we have
two different divergences. In this way, we are led to the following result.

\begin{theorem}[Noether]
Assume $X$ is an infinitesimal supersymmetry of the Berezinian density 
\begin{equation*}
\xi _{L} =\left[ d^{G}x^{1}\wedge \ldots \wedge d^{G}x^{m}\otimes \frac{d}{%
dx^{-1}}\circ \ldots \circ \frac{d}{dx^{-n}} \right] \cdot L,\quad L\in 
\mathcal{A}_{J_{G}^{1}(p)},
\end{equation*}
such that,

\begin{enumerate}
\item[$(1)$] The projection $X^\prime $ of $X$ onto $(M,\mathcal{A})$ has a
divergence with respect to 
\begin{equation*}
d^Gx^1\wedge \cdots \wedge d^Gx^m,
\end{equation*}

\item[$(2)$] $\mathop{\rm {}div}\nolimits_B(X^\prime ) =\mathop{\rm {}div}%
\nolimits_G(X^\prime )$.
\end{enumerate}

Then, for every critical section $s$ of $\xi _L$ we have 
\begin{equation*}
d^G \left[ \left( j^{n+1}s \right) ^\ast \left( \iota _{X_{(n+1)}}\Theta ^L
\right) \right] =0.
\end{equation*}
\end{theorem}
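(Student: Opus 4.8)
The plan is to deduce the conservation law from Cartan's magic formula once we know that the Poincar\'e--Cartan form is \emph{strictly} invariant under the prolonged symmetry. Concretely, I would first establish the two facts
\[
(\mathrm{i})\quad \mathcal{L}_{X_{(n+1)}}^{G}\Theta^{L}=0,
\qquad
(\mathrm{ii})\quad (j^{n+1}s)^{\ast}\bigl(\iota_{X_{(n+1)}}d^{G}\Theta^{L}\bigr)=0,
\]
the latter for every critical $s$. Granting these, the graded magic formula $\mathcal{L}_{X_{(n+1)}}^{G}=\iota_{X_{(n+1)}}d^{G}+d^{G}\iota_{X_{(n+1)}}$ together with the commutation of $(j^{n+1}s)^{\ast}$ with $d^{G}$ gives at once
\[
d^{G}\bigl[(j^{n+1}s)^{\ast}\iota_{X_{(n+1)}}\Theta^{L}\bigr]
=(j^{n+1}s)^{\ast}\bigl(\mathcal{L}_{X_{(n+1)}}^{G}\Theta^{L}-\iota_{X_{(n+1)}}d^{G}\Theta^{L}\bigr)=0,
\]
which is the assertion.

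To prove $(\mathrm{i})$ I would combine the infinitesimal functoriality of $\Theta^{L}$ with the supersymmetry condition. Functoriality gives $\mathcal{L}_{X_{(n+1)}}^{G}\Theta^{L}=\Theta^{L'}$ with $L'=X_{(n+1)}(L)+\mathop{\rm {}div}\nolimits_{G}(X')\cdot L$; hypothesis $(1)$ is precisely what guarantees that $\mathop{\rm {}div}\nolimits_{G}(X')$ exists, so that this formula applies. On the other hand, expanding $\mathcal{L}_{X_{(n+1)}}^{G}\xi_{L}=0$ exactly as in the computation preceding the theorem yields $\mathop{\rm {}div}\nolimits_{B}(X_{(n+1)})\cdot L+X_{(n+1)}(L)=0$, and since the Berezinian divergence of a projectable field coincides with that of its projection, $\mathop{\rm {}div}\nolimits_{B}(X_{(n+1)})=\mathop{\rm {}div}\nolimits_{B}(X')$. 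Hypothesis $(2)$ now identifies the two divergences, whence $L'=X_{(n+1)}(L)+\mathop{\rm {}div}\nolimits_{B}(X')\cdot L=0$; as $\Theta^{L}$ is $\mathbb{R}$-linear in $L$, we conclude $\mathcal{L}_{X_{(n+1)}}^{G}\Theta^{L}=\Theta^{0}=0$.

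For $(\mathrm{ii})$ the idea is to separate the vertical and horizontal behaviour of the prolongation. I would write $X_{(n+1)}=(X')^{H}+Z$, where $(X')^{H}$ is the total (horizontal) lift of the base field $X'$ and $Z:=X_{(n+1)}-(X')^{H}$. Both summands project onto $X'$, so $Z$ is vertical over $(M,\mathcal{A})$, and the Euler--Lagrange theorem gives $(j^{n+1}s)^{\ast}\iota_{Z}d^{G}\Theta^{L}=0$ for critical $s$. For the horizontal part I would use that, by its defining property, $(X')^{H}$ is $(j^{n+1}s)$-related to $X'$, so that contraction commutes with the pullback and
\[
(j^{n+1}s)^{\ast}\iota_{(X')^{H}}d^{G}\Theta^{L}
=\iota_{X'}\,d^{G}\bigl[(j^{n+1}s)^{\ast}\Theta^{L}\bigr].
\]

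The delicate point, and the one I expect to be the main obstacle, is to show that this last term vanishes; here the special structure of a Berezinian density is decisive. Computing $(j^{n+1}s)^{\ast}\Theta^{L}$, every summand of $\mathcal{L}_{\mathcal{J}_{n+1}}^{G}(\,\cdot\,)$ carries a contact factor $\theta_{Q}^{\mu}$ that is annihilated by the jet pullback, so that $(j^{n+1}s)^{\ast}\Theta^{L}=\eta^{G}\cdot\ell$ with $\ell=(j^{n+1}s)^{\ast}\tfrac{d^{n}L}{dx^{-1}\cdots dx^{-n}}$. By the horizontal-lift property $\ell=\tfrac{\partial^{n}}{\partial x^{-1}\cdots\partial x^{-n}}(j^{1}s)^{\ast}L$ is the top odd-derivative of a superfunction on the $(m|n)$-dimensional base, hence a \emph{purely even} function depending only on the $x^{i}$; therefore $d^{G}\ell$ lies in the span of the $d^{G}x^{i}$, and $\eta^{G}\wedge d^{G}\ell=0$ because $\eta^{G}$ already contains every $d^{G}x^{i}$. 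Since $d^{G}\eta^{G}=0$, this gives $d^{G}[(j^{n+1}s)^{\ast}\Theta^{L}]=0$, so the horizontal contribution vanishes and $(\mathrm{ii})$ holds. The only remaining work is routine: the sign bookkeeping in the graded magic formula, and checking that the contact $1$-forms occurring in $\Theta^{L}$ are indeed killed by $(j^{n+1}s)^{\ast}$.
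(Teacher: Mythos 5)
Your proof is correct and its skeleton coincides with the paper's: hypotheses $(1)$ and $(2)$ together with the supersymmetry identity $\mathop{\rm {}div}\nolimits_B(X_{(n+1)})\cdot L+X_{(n+1)}(L)=0$ give $L'=0$, infinitesimal functoriality then gives $\mathcal{L}^G_{X_{(n+1)}}\Theta^L=\Theta^{L'}=0$, and the graded magic formula plus criticality of $s$ finish the argument. The one place you genuinely go beyond the published proof is step (ii): the paper disposes of $(j^{n+1}s)^*\bigl(\iota_{X_{(n+1)}}d^G\Theta^L\bigr)=0$ with the bare phrase ``since $s$ is a critical section,'' even though the Euler--Lagrange criterion as stated applies only to fields vertical over $(M,\mathcal{A})$, and $X_{(n+1)}$ is not vertical. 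Your decomposition $X_{(n+1)}=(X')^H+Z$, together with the observation that $(j^{n+1}s)^*\Theta^L=\eta^G\cdot\ell$ with $\ell$ the top odd derivative of $(j^1s)^*L$ --- hence a function of the even base coordinates only, so that $\eta^G\wedge d^G\ell=0$ --- supplies exactly the missing justification, and the computation is sound. Two minor caveats: $(X')^H$ raises the jet order by one, so the decomposition and the appeal to the Euler--Lagrange criterion should be read as identities of derivations $\mathcal{A}_{J^{n+1}_G(p)}\to\mathcal{A}_{J^{n+2}_G(p)}$ composed with $(j^{n+2}s)^*$; and the sign conventions in the graded Cartan formula, which you defer, are the same ones the paper itself uses implicitly, so nothing is lost there.
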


\begin{proof}
If $X$ is an infinitesimal
supersymmetry of
$\xi _L$, by \eqref{eq5_21}
we have
\begin{equation*}
\operatorname{div}_B(X_{(n+1)})
\cdot L
+X_{(n+1)}(L)=0
\end{equation*}
and by $(1)$, $(2)$,
$L^\prime
=\operatorname{div}_G
(X_{(n+1)})\cdot L
+X_{(n+1)}(L)=0$.
Moreover, from \eqref{eq5_20},
we have
\begin{equation*}
\Theta ^{L^\prime }
=0
=\mathcal{L}_{X_{(n+1)}}^G
\Theta ^L.
\end{equation*}
Thus,
\begin{equation*}
\left(
j^{n+1}s
\right) ^\ast
\left(
d^G\iota _{X_{(n+1)}}\Theta ^L
\right)
+\left(
j^{n+1}s
\right) ^\ast
\left(
\iota _{X_{(n+1)}}d^G\Theta ^L
\right)
=0,
\end{equation*}
and since $s$ is a critical
section,
\begin{equation*}
\left(
j^{n+1}s
\right) ^\ast
\left(
\iota _{X_{(n+1)}}d^G\Theta ^L
\right)
=0.
\end{equation*}
The statement now follows
from the fact that $d^G$
commutes with pullbacks.
\end{proof}

The superfunctions $\iota _{X_{(n+1)}}\Theta ^{L}$ appearing in the
statement, are called \emph{Noether supercurrents}. Analogously, the graded
vector fields $X$ satisfying the conditions of the theorem (and, in general,
those leading to Noether supercurrents; note that these conditions are
sufficient, but not necessary) are called \emph{Noether supersymmetries}.

\begin{corollary}
Assume $X$ is a $p-$vertical graded vector field which also is an
infinitesimal supersymmetry of the Berezinian density (\ref{berezlocal}).
Then, for every critical section $s$ of $\xi _{L}$ we have 
\begin{equation*}
d^{G}\left[ \left( j^{n+1}s\right) ^{\ast }\left( \iota _{X_{(n+1)}}\Theta
^{L}\right) \right] =0.
\end{equation*}
\end{corollary}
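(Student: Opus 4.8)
The plan is to obtain this corollary as an immediate specialization of the Noether Theorem above. The key observation is that, since $X$ is $p$-vertical, its projection $X^{\prime}$ onto $(M,\mathcal{A})$ is the zero vector field. The entire content of the argument is then that both hypotheses of the Noether Theorem become automatically satisfied for $X^{\prime}=0$, so no genuine verification beyond this is required.

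Concretely, for condition $(1)$ I would check that the zero field admits a graded divergence with respect to $\eta^{G}=d^{G}x^{1}\wedge \cdots \wedge d^{G}x^{m}$: since $\mathcal{L}_{X^{\prime}}^{G}\eta^{G}=\mathcal{L}_{0}^{G}\eta^{G}=0=\eta^{G}\cdot 0$, the defining relation $\mathcal{L}_{X^{\prime}}^{G}\eta^{G}=\eta^{G}f$ holds with $f=0$, so $\operatorname{div}_{G}(X^{\prime})=0$ exists. For condition $(2)$ I would observe that the Berezinian divergence of the zero field likewise vanishes: from the defining formula $\mathcal{L}_{X^{\prime}}^{G}(\xi)=(-1)^{|X^{\prime}||\xi|}\xi\cdot \operatorname{div}_{B}(X^{\prime})$ together with $\mathcal{L}_{0}^{G}[\xi]=0$ one reads off $\operatorname{div}_{B}(X^{\prime})=0$. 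Hence $\operatorname{div}_{B}(X^{\prime})=0=\operatorname{div}_{G}(X^{\prime})$, which is exactly condition $(2)$.

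With both hypotheses verified, I would simply invoke the Noether Theorem, whose conclusion is precisely the asserted identity $d^{G}\left[\left(j^{n+1}s\right)^{\ast}\left(\iota_{X_{(n+1)}}\Theta^{L}\right)\right]=0$ for every critical section $s$ of $\xi_{L}$. There is essentially no obstacle in this corollary: its whole force lies in recognizing that $p$-verticality collapses the base projection to zero, which trivializes both divergence conditions simultaneously. The only point deserving a moment's care is confirming that the zero field indeed satisfies both the graded and the Berezinian divergence definitions, and this is immediate from their defining formulas, as indicated above.
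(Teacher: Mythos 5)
Your proposal is correct and follows exactly the paper's own (one-line) argument: the $p$-verticality of $X$ forces $X'=0$, whence $\operatorname{div}_B(X')=\operatorname{div}_G(X')=0$ and the Noether Theorem applies. The extra care you take in checking that the zero field satisfies both divergence definitions is harmless elaboration of the same route.
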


\begin{proof}
If $X$ is vertical, its projection is $0$ and so
$\operatorname{div}_B(X^\prime )
=\operatorname{div}_G(X^\prime )
=0$.
\end{proof}

\subsection{The case of supermechanics}

Consider the supermanifold $\mathbb{R}^{1|1}\doteq (\mathbb{R},\Omega (%
\mathbb{R}))$ and the graded submersion 
\begin{equation*}
p\colon (N,\mathcal{B})\rightarrow \mathbb{R}^{1|1}, \quad (N,\mathcal{B})=%
\mathbb{R}^{1|1}\times \mathbb{R}^{1|1},
\end{equation*}
defined by the projection onto the first factor, which determines the graded
bundle of $1-$jets $(J_{G}^{1}(p),\mathcal{A}_{J_{G}^{1}(p)})$. This is the
situation that would correspond to supermechanics (see \cite{Fre 88, Fre 99,
Mon-Mun 02, Mon-Val 03}). If $(t,s)$ and $(t,s,y,z)$ are supercoordinates
for $\mathbb{R}^{1|1}$ and $(N,\mathcal{B})$, respectively (even $y$ and odd 
$z$), we have a system $(s,t,y,z,y_{t},y_{s},z_{t},z_{s})$ for $%
(J_{G}^{1}(p),\mathcal{A}_{J_{G}^{1}(p)})$. These coordinates are defined
through 
\begin{equation*}
\begin{array}{rlr}
(j^{1}\sigma )^{\ast }t = & \!\!\!\sigma ^{\ast }(t)=t \medskip &  \\ 
(j^{1}\sigma )^{\ast }s = & \!\!\!\sigma ^{\ast }(s)=s \medskip &  \\ 
(j^{1}\sigma )^{\ast }y = & \!\!\!\sigma ^{\ast }(y)=\varphi (t) \medskip & 
\\ 
(j^{1}\sigma )^{\ast }z = & \!\!\!\sigma ^{\ast }(z)=\psi (t)s & 
\end{array}%
\end{equation*}

\begin{equation*}
\begin{array}{rll}
(j^1\sigma )^\ast y_t = & \!\!\! \dfrac{\partial }{\partial t}
(j^1\sigma)^\ast y & \!\!\!\! =\varphi ^\prime (t) \medskip \\ 
(j^1\sigma )^\ast y_s = & \!\!\! \dfrac{\partial } {\partial s} (j^1\sigma
)^\ast y & \!\!\!\! =0 \medskip \\ 
(j^1\sigma )^\ast z_t = & \!\!\! \dfrac{\partial}{\partial t} (j^1\sigma
)^\ast z & \!\!\!\! =\psi ^\prime (t)s \medskip \\ 
(j^1\sigma )^\ast z_s = & \!\!\! \dfrac{\partial }{\partial s} (j^1\sigma
)^\ast z & \!\!\!\! =\psi (t)%
\end{array}%
\end{equation*}
for a section $\sigma \colon \mathbb{R}^{1|1} \to (N,\mathcal{B})$ of $p$.
Here, $\varphi $ and $\psi $ are just real functions. Note the particularity
of the coordinate $y_s$, which evaluated on sections vanishes; this is a
special feature of the $(1|1)$-dimension.

The traditional (physics oriented) notation would write $\partial y/\partial
t$ instead of $\varphi ^{\prime }(t)$ and so on. In this way, the preceding
observation about $y_{s}$ is masked, so we prefer ours.

A graded Lagrangian is an element $L\in \mathcal{A}_{J_{G}^{1}(p)}$ (i.e., a
`superfunction\ of $(s,t,y,z,y_{t},y_{s},z_{t},z_{s})$'). We are interested
in determining the class of Lagrangians which admit a $p-$projectable graded
vector field on $(N,\mathcal{B})$, of the particular form 
\begin{equation*}
D=f\frac{\partial }{\partial t}+g\frac{\partial }{\partial s},
\end{equation*}
as a Noether supersymmetry.

A priori, we should have $f=f(t,s)$ and $g=g(t,s)$, but the fact that $D$
must be a supersymmetry imposes some restrictions which we now analyze.
First of all, $\mathop{\rm {}div}\nolimits_{G}(D)$ must exist, and this
forces $f=f(t)$; hence 
\begin{equation}
\frac{\partial f}{\partial s}=0.  \label{examp1}
\end{equation}
Moreover, it is immediate from the definition of graded divergence that 
\begin{equation}
\mathop{\rm {}div}\nolimits_{G}(D)=\frac{\partial f}{\partial t}
\label{examp2}
\end{equation}
Secondly, $\mathop{\rm {}div}\nolimits_{B}(D)$ must coincide with $%
\mathop{\rm {}div}\nolimits_{G}(D)$; from the expression \eqref{a4} in
Section \ref{berdiv}, we find that, necessarily, 
\begin{equation}
\frac{\partial g}{\partial s}=0.  \label{examp3}
\end{equation}

With the restrictions \eqref{examp1}, \eqref{examp3}, the computation of the
extension $D_{(2)}$ is relatively easy, and the result is 
\begin{equation}
\left. 
\begin{array}{r}
D_{(2)}=f\dfrac{\partial }{\partial t} +g\dfrac{\partial }{\partial s}
-\left( \dfrac{df}{dt}y_t+\dfrac{dg}{dt}y_s \right) \dfrac{\partial }{%
\partial y_t} -\left( \dfrac{df}{dt}z_t +\dfrac{dg}{dt}z_s \right) \dfrac{%
\partial }{\partial z_t} \medskip \\ 
-\dfrac{df}{dt}y_{st} \dfrac{\partial }{\partial y_{st}} -\left( \dfrac{d^2f%
}{dt^2}y_t +\dfrac{d^2g}{dt^2}y_s +2\dfrac{df}{dt}y_{tt} +2\dfrac{dg}{dt}%
y_{ts} \right) \dfrac{\partial }{\partial y_{tt}} \medskip \\ 
-\dfrac{df}{dt}z_{st} \dfrac{\partial }{\partial z_{st}} -\left( \dfrac{d^2f%
}{dt^2}z_t +\dfrac{d^2g}{dt^2}z_s +2\dfrac{df}{dt}z_{tt} +2\dfrac{dg}{dt}%
z_{ts} \right) \dfrac{\partial }{\partial z_{tt}}%
\end{array}
\right\}  \label{examp4}
\end{equation}

Finally, the remaining condition for $D$ to be a Noether supersymmetry is $%
\mathcal{L}_{D_{(2)}}^G \xi _L=0$; that is, 
\begin{equation*}
\mathop{\rm {}div}\nolimits_B(D) \cdot L+D_{(2)}L=0,
\end{equation*}
or, in view of \eqref{examp2}, 
\begin{equation}  \label{examp5}
\frac{\partial f}{\partial t}L+D_{(2)}L=0.
\end{equation}
As $L\in \mathcal{A}_{J_G^1(p)}$, we have 
\begin{equation*}
\frac{\partial L} {\partial y_{tt}} =\frac{\partial L} {\partial y_{st}} =0
\end{equation*}
and 
\begin{equation*}
\frac{\partial L}{\partial z_{tt}} =\frac{\partial L}{\partial z_{st}} =0,
\end{equation*}
so the insertion of \eqref{examp4} into \eqref{examp5} gives 
\begin{equation*}
\frac{\partial f}{\partial t}L +f\frac{\partial L}{\partial t} +g\frac{%
\partial L} {\partial s} -\left( \frac{df}{dt}y_t +\frac{dg}{dt}y_s \right) 
\frac{\partial L} {\partial y_t} -\left( \frac{df}{dt}z_t +\frac{dg}{dt}z_s
\right) \frac{\partial L}{\partial z_t} =0.
\end{equation*}
Now, evaluating on a section $\sigma $ we obtain 
\begin{equation}
\left. 
\begin{array}{rc}
\dfrac{\partial f}{\partial t} (j^1\sigma )^\ast L +f(j_G^1\sigma )^\ast
\left( \dfrac{\partial L}{\partial t} \right) +g(j^1\sigma )^\ast \left( 
\dfrac{\partial L}{\partial s} \right) \medskip &  \\ 
-\left( \dfrac{df}{dt}\varphi ^\prime (t) \right) (j^1\sigma )^\ast \left( 
\dfrac{\partial L}{\partial y_t} \right) \medskip &  \\ 
-\left( \dfrac{df}{dt} \psi ^\prime (t)s +\dfrac{dg}{dt}\psi (t) \right)
(j^1\sigma )^\ast \left( \dfrac{\partial L}{\partial z_t} \right) & =0%
\end{array}
\right\}  \label{examp6}
\end{equation}

Any $L\in \mathcal{A}_{J_G^1(p)}$ solution to this equation, is a
superlagrangian admitting $D$ as a Noether supersymmetry. Conversely, if we
take a fixed $L\in \mathcal{A}_{J_G^1(p)}$, any pair of real functions, $%
f=f(t,s)$ and $g=g(t,s)$, satisfying \eqref{examp6} determines a graded
vector field $D=f\partial /\partial t +g\partial /\partial s$, which is a
Noether supersymmetry for $L$.

A trivial case is that of $f=g\equiv 1$. Then, the equation \eqref{examp6}
reduces to 
\begin{equation*}
\frac{\partial L}{\partial t} +\frac{\partial L}{\partial s}=D(L)=0,
\end{equation*}
that is: \emph{if} $L$ \emph{does not depend explicitly on} $(t,s)$, \emph{%
then the \textquotedblleft supertime translation\textquotedblright } $%
D=\partial /\partial t+\partial /\partial s$ \emph{is a Noether supersymmetry%
}, as in the classical setting (see \cite{Mon-Val 03}).

\paragraph{Acknowledgements.\/}

The junior author (JAV) wants to express his gratitude to the Instituto de F%
\'{\i}sica Aplicada (CSIC; Madrid, Spain) and the Centro de Investigaci\'{o}%
n en Matem\'{a}ticas (CIMAT; Guanajuato, M\'{e}xico) for their warm
hospitality during his stay there, and where part of this work was done.
Also, thanks are due to Adolfo S\'{a}nchez Valenzuela, for numerous and
helpful comments on preliminary drafts of this paper.

Supported by the \textsc{Ministerio de Educaci\'on y Ciencia} of Spain under
grants PB97--1386 and MTM2005-00173.


\begin{thebibliography}{GMS 97}
\bibitem{Ald-Azc 78a} V. Aldaya, J. A. de Azc\'{a}rraga: \emph{Variational
principles on rth order jets of fibre bundles in field theory}. J. Math.\
Phys.\ \textbf{19} (1978) 9, 1869--1875.

\bibitem{Ald-Azc 78b} ---: \emph{Vector bundles, rth order Noether
invariants and canonical symmetries in Lagrangian field theory}. J. Math.\
Phys.\ \textbf{19} (1978)\ 7, 1876--1880.

\bibitem{Bar-Bru-Her 91} C.\ Bartocci, U. Bruzzo, D. Hern\'{a}ndez Ruip\'{e}%
rez: \emph{The geometry of supermanifolds}. Mathematics and its Applications 
\textbf{71}. Kluwer Academic Publishers Group, Dordrecht (The Nederlands),
1991.

\bibitem{Bat 79} M. Batchelor: \emph{The structure of supermanifolds}.
Trans.\ Amer.\ Math.\ Soc.\ \textbf{253} (1979), 329--338.

\bibitem{Bat 80} ---: \emph{Two approaches to supermanifolds}. Trans.\
Amer.\ Math.\ Soc.\ \textbf{258} (1980), 257--270.

\bibitem{Ber 66} F. A. Berezin: \emph{The method of second quantization}
(Russian). Izdat. Nauka, Moscow, 1965. English translation: Pure and Applied
Physics\ \textbf{24}. Academic Press, New York-London, 1966.

\bibitem{Ber 87} ---: \emph{Introduction to superanalysis}. Mathematical
Physics and Applied Mathematics \textbf{9}. D. Reidel Publishing Co.
Dordrecht (The Netherlands), 1987.

\bibitem{Boy-San 91} C. P. Boyer, A. S\'{a}nchez-Valenzuela: \emph{Lie
supergroup actions on Supermanifolds}. Trans.\ Amer.\ Math.\ Soc.\ \textbf{%
323} (1991)\ 1, 151--175.

\bibitem{Cor-Neem-Stern 75} L. Corwin, Y. Ne'eman, S. Sternberg: \emph{%
Graded Lie algebras in mathematics and physics (Bose-Fermi symmetry)}. Rev.\
Modern Phys.\ \textbf{47} (1975), 573--603.

\bibitem{Del-Morg 99} P. Deligne, J. W. Morgan: \emph{Notes on Supersymmetry
(following Joseph Bernstein), in Quantum Fields and Strings: A Course for
Mathematicians (Volume 1, Part 1: Classical Fields and Supersymmetry)}.
American\ Mathematical\ Society. Providence (RI), 1999.

\bibitem{Dubois 01} M. Dubois-Violette: \emph{Lectures on graded
differential algebras and noncommutative geometry}. Noncommutative
differential geometry and its applications to physics (Shonan, 1999), pp.\
245--306. In Math.\ Phys.\ Stud. \textbf{23}. Kluwer Academic Publishers
Group, Dordrecht (The Netherlands), 2001.

\bibitem{Fer-Fra 87} M. Ferraris, M. Francaviglia: \emph{Applications of the
Poincar\'{e}$-$Cartan form in higher order field theories}. Differential
Geometry and its Applications (Brno, 1986), pp.\ 31--52. In Math.\ Appl.\
(East European Ser.) \textbf{27}. Reidel, Dordrecht (The Netherlands), 1987.

\bibitem{Fre 99} D. Freed: \emph{Five lectures on supersymmetry}. American
Mathematical Society. Providence (RI), 1999

\bibitem{Fre 88} P. G. O. Freund: \emph{Introduction to supersymmetry}.
Cambridge University Press, Cambridge, 1998.

\bibitem{Gar-Mun 83} P. L. Garc\'{\i}a, J. Mu\~{n}oz-Masqu\'{e}: \emph{On
the geometrical structure of higher order variational calculus}. Proceedings
of the IUTAM-ISIMM Symposium on Modern Developments in Analytical Mechanics,
Volume I (Torino, 1982). In Atti Accad.\ Sci.\ Torino Cl.\ Sci.\ Fis.\ Mat.\
Natur.\ \textbf{117} (1983) suppl.\ 1, 127--147.

\bibitem{Gaw 77} K. Gawdezki: \emph{Supersymmetries: mathematics of
supergeometry}: Ann.\ Inst.\ Henri Poincar\'{e} XXVII (1977) \textbf{4},
335--366.

\bibitem{GMS 97} G. Giachetta, L. Mangiarotti, G. Sardanashvily: 
\emph{New Lagrangian and Hamiltonian methods in field theory}. World
Scientific Publishing Co. Inc. River Edge (NJ) 1997.

\bibitem{GMS 98} ---: 
\emph{Gauge mechanics}. World Scientific Publishing Co. Inc. River Edge (NJ)
1998.

\bibitem{GMS 05} ---: 
\emph{Lagrangian supersymmetries depending on derivatives. Global analysis
and cohomology}. Comm. Math. Phys. \textbf{259}, no. 1 (2005) 103--128.

\bibitem{God 69} C. Godbillon: \emph{G\'{e}om\'{e}trie diff\'{e}rentielle et
m\'{e}canique analytique}. Hermann, Paris, 1969.

\bibitem{Her-Mun 82} D. Hern\'{a}ndez Ruip\'{e}rez, J. Mu\~{n}oz Masqu\'{e}: 
\emph{Higher order jet bundles for graded manifolds (Superjets)}.
Proceedings of the International Meeting on Geometry and Physics (Florence,
1982), pp.\ 271--278. Pitagora Editrice, Bologna, 1983.

\bibitem{Her-Mun 84a} ---: \emph{Global variational calculus on graded
manifolds I}. J. Math.\ Pures Appl.\ \textbf{63} (1984), 283--309.

\bibitem{Her-Mun 84b} ---: \emph{Infinitesimal functoriality of graded
Poincar\'{e}$-$Cartan forms}. Differential Geometric Methods in Theoretical
Physics (Shumen, 1984), 126--132. World Sci.\ Publishing, Singapore, 1986.

\bibitem{Her-Mun 85a} ---: \emph{Global variational calculus on graded
manifolds II}. J. Math.\ Pures Appl.\ \textbf{63} (1985), 87--104.

\bibitem{Her-Mun 85b} ---: \emph{Construction intrins\`eque du faisceau de
Berezin d'une vari\'{e}t\'{e} gradu\'{e}e}. Comptes Rendus Acad.\ Sci.\
Paris, S\'{e}rie I Math. \textbf{301} (1985), 915--918.

\bibitem{Her-Mun 87} ---: \emph{Variational Berezinian problems and their
relationship with graded variational problems}. Differential Geometric
Methods in Mathematical Physics (Salamanca, 1985), pp.\ 137--149. Lecture
Notes in Math.\ \textbf{1251}. Springer, Berlin, 1987.

\bibitem{Kos 77} B. Kostant: \emph{Graded manifolds, graded Lie theory, and
prequantization}. Differential Geometrical Methods in Mathematical Physics
(Bonn, 1975), pp.\ 177--306. Lecture Notes in Math.\ \textbf{570}. Springer,
Berlin, 1977.

\bibitem{Lei 80} D. A. Leites: \emph{Introduction to the theory of
supermanifolds}. Russ.\ Math.\ Surveys \textbf{35} (1980), 1--64.

\bibitem{Man 88} Y. I. Manin: \emph{Gauge field theory and complex geometry.}
Grundlehren der Mathematischen Wissenschaften [Fundamental Principles of
Mathematical Sciences] \textbf{289}. Springer-Verlag, Berlin Heidelberg,
1988.

\bibitem{Mon 92a} J. Monterde: \emph{Higher order Graded and Berezinian
Lagrangian densities and their Euler$-$Lagrange equations}. Ann.\ Inst.\ H.\
Poincar\'{e} Phys.\ Th\'{e}or.\ \textbf{57} (1992)\ 1, 3--26.

\bibitem{Mon-Mun 92a} J. Monterde, J. Mu\~{n}oz Masqu\'{e}: \emph{%
Variational problems on graded manifolds}. Contemp.\ Math.\ \textbf{132}
(1992), 551--571.

\bibitem{Mon-Mun 92b} ---: \emph{Hamiltonian Formalism for Berezinian
Variational Problems in Supermanifolds}. Group Theoretical Methods in
Physics (Salamanca, 1992), pp.\ 253--256. Anales de F\'{\i}sica, Monograf%
\'{\i}as 2. CIEMAT-Real Sociedad Espa\~{n}ola de F\'{\i}sica, Madrid
(Spain), 1993.

\bibitem{Mon-Mun 02} ---: \emph{Hamiltonian formalism in supermechanics}.
Internat.\ J. Theoret.\ Phys.\ \textbf{41} (2002)\ 3, 429--458.

\bibitem{Mon-San 93} J. Monterde, O.A. S\'{a}nchez-Valenzuela: \emph{%
Calculus of variations in a simple superdomain without the Berezinian
integral}. XXVIth National Congress of the Mexican Mathematical Society
(Spanish) (Morelia, 1993), pp.\ 313--318. Aportaciones Mat.\ Comun., 14.
Soc.\ Mat.\ Mexicana, M\'{e}xico, 1994.

\bibitem{Mon-Val 03} J. Monterde and J. A. Vallejo: \emph{The symplectic
structure of Euler$-$Lagrange superequations and Batalin-Vilkovisky
formalism.} J. Phys.\ A: Math.\ Gen.\ \textbf{36} (2003), 4993--5009.

\bibitem{Ner 93} A.~N.~Nersessian: \emph{On The Geometry of Supermanifolds
with even and odd Kahlerian Structures}. Theor.\ Math.\ Phys.\ \textbf{96}
(1993), 866--871 [Teor.\ Mat.\ Fiz.\ \textbf{96} (1993), 140--149].

\bibitem{San 86} O. A. S\'{a}nchez-Valenzuela: \emph{On Supervector Bundles.}
Comunicaciones T\'{e}cni\-cas IIMAS-IMUNAM (serie naranja) \textbf{457},
1986.

\bibitem{San 88} ---: \emph{Matrix computations in linear superalgebra}.
Linear Algebra and its applications \textbf{111} (1988), 151--181.

\bibitem{San 88a} ---: \emph{Linear supergroup actions.\ I:\ On the defining
properties}. Trans.\ Amer.\ Math.\ Soc.\ \textbf{307} (1988)\ 2, 569--595.

\bibitem{San 88b} ---: \emph{Remarks on Grassmannian supermanifolds}.
Trans.\ Amer.\ Math.\ Soc.\ \textbf{307} (1988), 2, 597--614.

\bibitem{Sau 89} D. J. Saunders: \emph{The geometry of jet bundles}. London
Mathematical Society Lecture Note Series \textbf{142}. Cambridge University
Press, Cambridge (UK), 1989.

\bibitem{Sch 84} Th.\ Schmitt: \emph{Super differential geometry}. Report
MATH, 84--5. Akademie der Wissenschaften der DDR (Institut f\"{u}r
Mathematik), Berlin, 1984.

\bibitem{Var 04} V. S. Varadarajan: \emph{Supersymmetry for mathematicians:
an introduction}. American Mathematical Society-Courant Institute of
Mathematical Sciences.\ Lecture Notes CLN \textbf{11}. New York (NY) and
Providence (RI), 2004.

\bibitem{Vor 91} T. Voronov: \emph{Geometric integration theory on
supermanifolds}. Soviet Scientific Reviews. Section C: Mathematical Physics
Reviews \textbf{9}, Part 1.\ Harwood Academic Publishers, Chur., 1991.

\bibitem{Wess-Bag 92} J. Wess, J. Bagger: \emph{Supersymmetry and
supergravity} (Second edition). Princeton Series in Physics.\ Princeton
University Press, Princeton (NJ), 1992.
\end{thebibliography}
\end{document}